\newtheorem{lemma}{Lemma}
\newtheorem{proposition}{Proposition}
\newtheorem{corollary}{Corollary}
\newtheorem{remark}{Remark}
\newcommand{\myexpect}[2]{\mathsf{E}_{#1}\left[ #2 \right]}
\newcommand{\myprobability}[1]{\mathrm{Pr}\!\left[ #1 \right]}
\newcommand{\myP}{\mathcal{P}}
\newcommand{\SNR}{\mathsf{SNR}}
\newcommand{\snr}{\mathsf{SNR}}
\newcommand{\Perr}{P_{\text{err}}}
\newcommand{\myd}{\mathrm{d}}
\newcommand{\Hb}{H_{\text{b}}}
\newcommand{\PUB}{P^{\text{UB}}_1}
\newcommand{\PLB}{P^{\text{LB}}_1}
\newcommand{\PUBber}{P^{\text{UB}}_2}
\newcommand{\PLBber}{P^{\text{LB}}_2}
\newcommand{\PUBasyn}{P^{\text{UB,Asyn}}_1}
\newcommand{\PLBasyn}{P^{\text{LB,Asyn}}_1}
\newcommand{\PUBberasyn}{P^{\text{UB,Asyn}}_2}
\newcommand{\PLBberasyn}{P^{\text{LB,Asyn}}_2}
\newcommand{\tr}{\mathsf{tr}}
\begin{document}
\title{\huge Backscatter Multiplicative Multiple-Access Systems: Fundamental Limits and Practical Design}
\author{\IEEEauthorblockN{Wanchun Liu, Ying-Chang Liang, Yonghui Li and Branka Vucetic}
\thanks{\setlength{\baselineskip}{13pt} \noindent The authors are with School of Electrical and Information Engineering, the University of Sydney, Sydney 2006, Australia.
(emails: \{wanchun.liu,\ yc.liang,\ yonghui.li,\ branka.vucetic\}@sydney.edu.au).
Part of the paper has been submitted to the Proc. IEEE ICC 2018.
}
}

\maketitle

\vspace{-1.9cm}
\begin{abstract}
In this paper, we consider a novel \emph{ambient backscatter multiple-access} system,
where a receiver (Rx) simultaneously detects the signals transmitted from an active transmitter (Tx) and a backscatter Tag. 
Specifically, the information-carrying signal sent by the Tx arrives at the Rx through two wireless channels: the direct channel from the Tx to the Rx, and the backscatter channel from the Tx to the Tag and then to the Rx.
The received signal from the backscatter channel also carries the Tag's information because of the \emph{multiplicative} backscatter operation at the Tag.
This multiple-access system introduces a new channel model, referred to as \emph{multiplicative multiple-access channel} (M-MAC).
We analyze the achievable rate region of the M-MAC and prove that its  region is strictly larger than that of the conventional time-division multiple-access scheme in many cases, including, e.g., the high SNR regime and the case when the direct channel is much stronger than the backscatter channel. Hence, the multiplicative multiple-access scheme is an attractive technique to improve the throughput for ambient backscatter communication systems.
Moreover, we analyze
the detection error rates for coherent and noncoherent modulation schemes adopted by the Tx and the Tag, respectively, in both synchronous and asynchronous scenarios, which further bring interesting insights for practical system design.
\end{abstract}
\vspace{-0.6cm}
\section{Introduction}
As described by ``Cooper's Law", the data transactions through wireless
communication over a given area approximately double every two
years since the advent of the cellular phone. 
During the past few years, in addition to human-type traffic, the everlasting demand for wireless communication also comes from \emph{massive} machine-type communication (MTC) in various emerging applications, such as Internet-of-Things, Industry~4.0 and Smart~X~\cite{PetarMag16}.
To enable low power and low-cost MTC, 
the backscatter
communication (BackCom) technique has emerged as a promising solution, which transmits information by simply reflecting the incident radio-frequency (RF) signal~\cite{WanchunMag}.
A BackCom system usually makes use of a dedicated radio source to transmit an \emph{unmodulated/sinusoidal} signal to a backscatter node, which then performs backscatter modulation by varying its reflection coefficient~\cite{Boyer14}.

In the past decade, there have been many studies on designing BackCom systems and networks~\cite{Bletsas09,Almaaitah14,Bletsas14,Katabi12,WanchunTimeHopping}. 
One main research direction is to design multiple-access BackCom networks, in which one sinusoidal-signal-emitting reader serves multiple tags. 
In~\cite{Bletsas09}, the authors proposed a scheme to avoid the transmission collision by jointly adopting directional beamforming at the reader and frequency-shift keying modulation at the tags such that each tag has a higher probability to occupy a unique communication resource. 
In~\cite{Almaaitah14} and~\cite{Bletsas14}, an alternative multiple-access schemes were proposed based on random access and time-division multiple access, respectively. 
In~\cite{Katabi12}, an interesting approach for collision avoidance was proposed based on a compressive-sensing algorithm. 
The authors in~\cite{WanchunTimeHopping} considered a backscatter interference network consisting of multiple reader-tag pairs. To suppress the interference in such a network, a novel time-hopping based scheme was proposed.

Recently, a novel technique, \emph{ambient} BackCom~\cite{Ambient13}, 
has drawn growing interest from both academia and industry due to easy-to-implement communications.
In an ambient BackCom system,
the tag relies on the \emph{modulated} ambient RF signals (e.g., WiFi and cellular signals) rather than unmodulated dedicated ones,
and the receiver receives both the original signal sent from the ambient RF source (e.g., a WiFi access point or a base station) and the backscattered signal, which carries the information of the tag.
In such a system, unlike the conventional BackCom system, to detect the tag's backscattered signal, the receiver also needs to handle the \emph{direct-link interference} from the ambient RF source, which is a new challenge in receiver design introduced by ambient BackCom systems~\cite{GongpuWang2016,FeifeiGao2017,GangYangGC16,GangYangICC17}.

In~\cite{GongpuWang2016} and~\cite{FeifeiGao2017}, 
maximum-likelihood based detection methods were proposed to detect the tag's information, where the direct-link interference is treated as a part of background noise.
However, such methods work well only when the signal strength of the backscattered one is comparable to that of the direct-link interference.
In~\cite{GangYangGC16}, a novel ambient BackCom system was proposed based on ambient orthogonal frequency division multiplexing
(OFDM) carriers. Specifically, the receiver is able to suppress the direct-link interference by exploiting the repeating structure, i.e., the cyclic prefix, of the high rate ambient OFDM signal first, and then detects the low-rate backscatter signal.
In other words, the interference cancellation method is based on the \emph{data-rate asymmetry} between the ambient RF source and the tag.
In~\cite{GangYangICC17}, a multiple-antenna receiver is adopted to cancel the direct-link interference by leveraging the \emph{spatially} orthogonal property of the channels of the backscattered signal and the interference.

In this paper, different from the existing studies focusing on direct-link suppression/interference cancellation, 
we consider an \emph{ambient backscatter multiple-access} system, where the receiver detects both the signals sent from the ambient RF source and the tag. For example, in the smart home application, a smart phone simultaneously receives the Internet data from a WiFi access point and the smart home data from backscatter sensors.
The contributions are summarized as follows:  
\begin{enumerate}
	\item We consider an ambient backscatter multiple-access system which consists of a transmitter (Tx), a backscatter Tag and a receiver (Rx).  
	Instead of treating the direct-link signal as interference as in most existing works, the Rx of the proposed system recovers information from the Tx and the Tag simultaneously. 
	Such a multiple-access system is different from the conventional linear additive multiple-access system because of the multiplicative operation at the tag. 
	Thus, the non-linear property of our system brings new analysis and design challenges.
	Furthermore, different from \cite{GangYangICC17}, in which the receiver relies on the multi-antenna beamforming method to separate the direct-link  signal and the backscattered one, we focus on a more fundamental setting and assume that each node of the system has a single antenna, though the results derived can be readily extended to multi-antenna settings.
	
	\item We further introduce a new multiple-access channel named \emph{multiplicative multiple-access channel} (M-MAC). Specifically, the output of the channel contains two parts: one is the direct-link signal sent from the Tx and the other is the backscatter signal, i.e., the Tag's signal multiplied with the Tx's signal due to the backscatter operation.
	To the best of the authors' knowledge, such a MAC with the multiplicative operation has never been analyzed in the open literature. 
	
	\item We derive the fundamental limit of the M-MAC by analyzing its \emph{achievable rate region}.
	This region is proved to be strictly larger than the rate region achieved by the conventional time-division multiple-access (TDMA) scheme, i.e., the Rx only receives either the information from the Tx or the Tag each time, in many cases.
	Specifically, the M-MAC region is larger than the TDMA region in the high transmission power regime or in the typical case that the direct channel is much stronger than the backscatter channel.
	Our numerical results also show that in a practical range of the SNR and channel conditions, the multiplicative multiple-access scheme always achieves better rate performance than the conventional TDMA scheme.
	Therefore, the proposed multiple-access scheme improves the rate performance of the system in practice.
	
	\item We consider a practical multiplicative multiple-access system, where the Tx and the Tag adopt coherent and noncoherent modulation schemes, respectively, and the Rx does not have full channel state information (CSI) of the Tag. We also consider the scenario that the Tx and the Tag are not perfectly synchronized.
	To detect both the signals of the Tx and the Tag simultaneously, we propose a novel joint coherent and noncoherent detection method.
	Moreover, we propose an analytical framework and comprehensively analyze the detection error rates of signals transmitted from the Tx and the Tag. Our results show some interesting design insights that the asynchronous transmission improves the error performance of the Tx but decreases the performance of the Tag.
\end{enumerate}
Notations: $\myprobability{\mathcal{A}}$ denotes the probability of the event $\mathcal{A}$. 
$\myexpect{}{X}$ denotes the expectation of the random variable $X$.
$\tr\left(M\right)$ and $M^H$ denote the trace and Hermitian transpose of the matrix $M$, respectively.
$\bm I_N$ is the $N \times N$ identity matrix.
$\mathcal{CN}(\cdot,\cdot)$ denotes the complex Gaussian distribution.
$h(\cdot)$, $h(\cdot,\cdot)$, $h(\cdot \vert \cdot)$ denote the differential entropy, joint and conditional differential entropy, respectively. ${I}(\cdot;\cdot)$ denotes the mutual information.
Random variables and their realizations are denoted by upper and lower case letters, respectively.
Vectors and scalers are denoted by bold and normal font letters, respectively.
$o(\cdot)$ is the little o notation.
$Q(\cdot)$ is the Q-function.

\begin{figure*}[t]
	\renewcommand{\captionfont}{\small} \renewcommand{\captionlabelfont}{\small}
	\minipage{0.5\textwidth}
	\centering
\begin{tikzpicture}
\draw  (-3,3) rectangle (-2,2) node (v3) {};
\draw  (-3,0.5) rectangle (-2,-0.5);
\draw  (0.9,1.8) rectangle (2.4,0.8);
\node (v1) at (-2.5,2.5) {Tx};
\node (v2) at (-2.5,0) {Rx};
\node at (1.65,1.3) {Tag ($\rho$)};

\draw [-latex,thick,blue](-2.5,2) -- (-2.5,0.5);
\draw [-latex,thick,blue](-2,2.4) -- (0.9,1.35) node (v4) {};
\draw [-latex,thick,blue](0.9,1.15) -- (-2,0);
\node at (-3.5,2.5) {$X_1$};
\node at (2.9,1.3) {$X_2$};
\node at (-3.5,0) {$Y$};
\node at (-3,1.2) {$g_1$};
\node at (-0.05,1.2) {$g_2$};
\end{tikzpicture}
\vspace{-0.5cm}
\caption{An ambient BackCom system.}
\vspace{-0.3cm}
\label{fig:Sys1}
	\endminipage
	\minipage{0.5\textwidth}
	\centering
	\begin{tikzpicture}
\draw (-3,-2.5) node (v5) {} -- (2,-2.5) node (v6) {};

\draw (-3,-2.5) -- (-3,-2) node (v8) {};
\draw (-2,-2.5) -- (-2,-2);
\draw (-1,-2.5) -- (-1,-2);
\draw (0,-2.5) -- (0,-2);
\draw (1,-2.5) -- (1,-2);
\draw (2,-2.5) -- (2,-2) node (v9) {};

\draw (-3,-2) node (v7) {} -- (2,-2) node (v11) {};

\draw  (-3,-3) rectangle (2,-3.5);
\node (v10) at (-0.5,-1.5) {$\bm X_1$};
\node at (-0.5,-3.3) {$X_2$};

\draw (v7) -- (v10);
\draw (v10) -- (v11);

\node at (-4,-2.3) {Tx:};
\node at (-4,-3.3) {Tag:};
\end{tikzpicture}
\caption{Illustration of $\bm X_1$ and $X_2$ when $N=5$.}
\vspace{-0.5cm}
\label{fig:Sys2}
	\endminipage
	\vspace*{-0.3cm}
\end{figure*}

\vspace{-0.4cm}
\section{System Model}
We consider an ambient backscatter multiple-access system, e.g., for smart-home applications, which consists of an active Tx, a passive Tag and a Rx, as illustrated in Fig.~\ref{fig:Sys1},
each with single antenna.
The Rx receives and detects both the Tx's signal, $X_1$, and the Tag's signal, $X_2$.
$\myP$ is the transmit power of the Tx, and $\rho$ is the reflection coefficient of the Tag\footnote{The $(1-\rho)$ fraction of the incident signal power of the Tag is sent to the RF energy harvester that powers the Tag's circuit.}. 
$g_1$ and $g_2$ are the channel coefficients of the \emph{direct channel}, i.e., the Tx-Rx channel, and the \emph{backscatter channel}, i.e., the Tx-Tag-Rx channel, respectively.
We assume \emph{static channels} (i.e., the channel coherent time is much longer than the symbol durations of the Tx/Tag), and the constant channel state information\footnote{For block-fading channels, the ergodic performance can be derived based on the analysis of this paper.}, $g_1$ and $g_2$, is known at the~Rx.

Since the Tag may have a longer symbol duration, e.g., a lower data rate, than the Tx, we assume that the symbol duration of $X_1$ and $X_2$ are $T$~s and $N T$~s, respectively, where $N \in \mathbb{Z}^+$ is the symbol-length ratio of $X_2$ to $X_1$. 
During the $NT$~s of interest, the transmitted signal of the Tx, $\bm X_1 \triangleq \left[X_{1,1},...,X_{1,N}\right]$, is a vector containing $N$ symbols of the Tx, as illustrated in Fig.~\ref{fig:Sys2}.
To detect both the high data rate signal of the Tx, $X_1$, and the low data rate signal of the Tag, $X_2$, we assume that the Rx samples the received signal every $T$~s.
Also, we assume that both the Tx and the Tag are perfectly synchronized at the~Rx.\footnote{A more practical asynchronous scenario will be discussed in Sec.~\ref{sec:asyn}.}

We consider an average power constraint of the Tx, $\tr\left(\myexpect{}{\bm X^{H} \bm X }\right) \leq N$.
In addition, due to the cost and implementation complexity constraint of the Tag, we assume that the constellation of the Tag is $\left\lbrace c_1, c_0 \right\rbrace$, which only has two complex number elements. 
Moreover, due to the passive backscatter property, we further assume that $\vert c_{1} \vert$ and $\vert c_{0} \vert \in \left[0,1\right]$.
For instance, most of the commercially available off-the-shelf tags use binary modulation, e.g., the binary-phase-shift keying (BPSK) or on/off modulation schemes with the constellation $\left\lbrace 1, -1 \right\rbrace$ or $\left\lbrace 1, 0 \right\rbrace$, respectively.

The received signal of the Rx during one symbol duration of the Tag can be written as
\begin{equation}
\bm Y'= g_1 \sqrt{\myP} \bm X_1 + g_2 \sqrt{\rho} \sqrt{\myP} X_2 \bm X_1 +\bm  Z',
\end{equation}
where $\bm  Z'$ is the received AWGN, and $\bm Z'\sim \mathcal{CN}(0,\sigma'^2 \bm I_N)$.
After normalization and simplification, the received signal, $\bm Y$, can be represented as
\begin{align}
\label{M-MAC channel}
\bm Y
&= \sqrt{\myP} \bm X_1 + g \sqrt{\rho} \sqrt{\myP} \bm X_1 X_2 +\bm  Z,\\
\label{M-MAC channel2}
&= \left(1 + g \sqrt{\rho} X_2 \right) \sqrt{\myP} \bm X_1 +\bm  Z,
\end{align}
where $g \triangleq g_2/g_1 = \vert g \vert e^{j \theta}$, $\bm Z\sim \mathcal{CN}(0,\sigma^2 \bm I_N)$ and $\sigma \triangleq {\sigma'}/{\vert g_1 \vert }$.
Specifically, $g$ and $\theta$ are the coefficient and phase of the \emph{relative backscatter channel}.
For brevity, we define the SNR of the multiple-access system, $\snr$, as 
\vspace{-0.3cm}
\begin{equation} \label{snr}
\snr \triangleq \myP/\sigma^2.
\end{equation}

From an information-theoretic perspective, \eqref{M-MAC channel} can be named as the multiplicative multiple-access channel (M-MAC).

\begin{remark}
	It is clear that the received signal of the M-MAC contains two information related terms: one is directly related to the Tx's signal, and the other is a  multiplicative form of the signals of the Tx and the Tag.
	Conventional additive MAC has been extensively investigated in the open literature, while the M-MAC that originates from the ambient BackCom, has not been systematically analyzed yet.
\end{remark}

\vspace{-0.4cm}
\section{Achievable Rate Region of M-MAC} \label{sec:M-MAC region}
In the section, we investigate the \emph{achievable rate region} of the M-MAC.
Given the joint distribution of $\bm X_1$ and $X_2$, say $\bm p(\bm X_1,X_2)$,  we have the following achievable rate region, $\mathcal{R}(\bm X_1, X_2)$, which is the set of $(R_1, R_2)$ such  that~\cite{BookInfo}
\begin{align} \label{rate-region}
R_1 \leq I(\bm X_1;\bm Y \vert X_2),\ 
R_2 \leq I(X_2;\bm Y \vert \bm X_1),\ 
R_1 + R_2 \leq I(\bm X_1,X_2;\bm Y),
\end{align}
where $R_1$ and $R_2$ are the achievable rates of the Tx's information and the Tag's information, respectively, during $NT$~s.


For mathematical simplicity, we assume that $\bm X_1$ follows the Gaussian distribution, i.e., $\bm X_1\sim \mathcal{CN}(0, \bm I_N)$, and $X_2$ follows the uniform distribution, i.e., $\myprobability{X_2 = c_1} = \myprobability{X_2 = c_0} = 1/2$.
Also, $\bm X_1$ and $X_2$ are independent.
With these assumptions, we can calculate the maximum achievable rates of $\bm X_1$ and $X_2$, and the maximum achievable sum rate, i.e., $I(\bm X_1;\bm Y \vert X_2)$, $I(X_2;\bm Y \vert \bm X_1)$ and $I(\bm X_1,X_2;\bm Y)$, in sequence.

\subsection{Maximum Achievable Rate of $\bm X_1$}
Let us assume that $X_2$ has been successfully detected by the Rx. We obtain the maximum achievable rate of $\bm X_1$ as follows:
\begin{align}
I(\bm X_1;\bm Y \vert X_2) 
&= h(\bm Y\vert X_2)-h(\bm Y\vert \bm X_1, X_2)\\
&=h\left(\left(1 + g \sqrt{\rho} X_2 \right) \sqrt{\myP} \bm X_1 + \bm Z\vert X_2\right)-h(\bm Z)\\
&=\frac{1}{2}\!\left(\!h\left(\!\left(\!1 + g \sqrt{\rho} c_1 \!\right) \sqrt{\myP}  \bm X_1 + \bm Z \!\right) 
+ h\left(\!\left(\!1 + g \sqrt{\rho} c_{0} \right) \sqrt{\myP}  \bm X_1 + \bm Z \!\right)\!\right) 
\label{firstI}
-h(\bm Z). 
\end{align}
Since both $\bm X_1$ and $\bm Z$ are vectors of independent and identically distributed (i.i.d.) complex Gaussian random variables, we have~\cite{BookInfo}
\begin{equation}
\label{h}
h\!\left(\!\left(1 + g \sqrt{\rho} c_i \right) \sqrt{\myP} \bm X_1 + \bm Z \!\right) 
\!=\! N \log_2\left(\!\pi e \left(\!\vert 1 + g \sqrt{\rho} c_i \vert^2\myP+\sigma^2\right)\!\right),\ \!
h\!(\bm Z)\!=\! N \log_2\left(\!\pi e \sigma^2\right),\ i=0,1.
\end{equation}
Taking \eqref{h} into \eqref{firstI}, we have
\vspace{-0.3cm}
\begin{equation} \label{maxX1}
I(\bm X_1;Y \vert X_2) = \frac{1}{2} \left(h_1 +h_0\right),
\end{equation}
where
\vspace{-0.3cm}
\begin{equation}\label{hnew}
h_i \triangleq N \log_2\left(1 + \frac{\vert 1 + g \sqrt{\rho} c_i \vert^2\myP}{\sigma^2}\right),\ i=0,1.
\end{equation}

\subsection{Maximum Achievable Rate of $X_2$} \label{sec:R2}
Let us assume that $\bm X_1$ has been successfully detected by the Rx. We can obtain the maximum achievable rate of $X_2$.

From \eqref{M-MAC channel}, it is easy to see that the optimal way to detect $X_2$ is to, firstly, remove the known additive interference, $\sqrt{\myP} \bm X_1$, and then perform maximal ratio combining (MRC) using $\bm X_1$. Thus, the received signal after MRC and normalization, $\tilde{Y}$, can be written as
\begin{equation}\label{binary AWGN channel}
\tilde{Y} = X_2 +\tilde{Z}, 
\end{equation}
where $\tilde{Z} \sim \mathcal{CN}(0,\tilde{\sigma}^2)$, and 
$\tilde{\sigma}^2 \triangleq \frac{\sigma^2}{\vert g \vert^2 \rho \myP \vert \bm X_1 \vert^2 }$.
Since $X_2$ is a binary input and $\tilde{Z}$ is the continuous additive Gaussian noise, \eqref{binary AWGN channel} is the \emph{binary input AWGN channel} with the input $X_2$ and the output $\tilde{Y}$~\cite{BAWGN}. 
Therefore, the maximum achievable rate of $X_2$ can be written as
\begin{align}\label{I2}
I(X_2;\bm Y \vert \bm X_1) &= \myexpect{\bm X_1}{I(X_2;\bm Y )\vert \bm X_1}= \myexpect{\bm X_1}{I(X_2; \tilde{Y} )\vert \bm X_1},
\end{align}
\vspace{-0.3cm}
where~\cite{BAWGN}
\begin{equation} \label{IX2tildeY}
\begin{aligned}
I(X_2; \tilde{Y})
&= h(\tilde{Y})-h(\tilde{Z})
= - \int_{-\infty}^{\infty} \Psi\left(\tilde{y},\tilde{\sigma}^2\right) \log_2\left(\Psi\left(\tilde{y},\tilde{\sigma}^2\right)\right) \myd \tilde{y} - \frac{1}{2}\log_2\left(\pi e \tilde{\sigma}^2\right),
\end{aligned}
\end{equation}
\begin{equation}
\begin{aligned}
\Psi\left(\tilde{y},\tilde{\sigma}^2\right)
&= \frac{1}{2\sqrt{\pi \tilde{\sigma}^2}} \left(\exp\left(- \frac{\left(\tilde{y}-\vert c_1-c_0 \vert/2\right)^2}{\tilde{\sigma}^2}\right)
+
\exp\left(- \frac{\left(\tilde{y}+\vert c_1-c_0 \vert/2\right)^2}{\tilde{\sigma}^2}\right)
\right),
\end{aligned}
\end{equation}
and $\tilde{\sigma}^2$ is defined under~\eqref{binary AWGN channel}.

Although $I(X_2;\bm Y \vert \bm X_1)$ in~\eqref{I2} does not have a closed-form expression, a closed-form lower bound is derived as follows:
\begin{align}
I(X_2;\bm Y \vert \bm X_1) &= \myexpect{\bm X_1}{I(X_2; \tilde{Y} )\vert \bm X_1} \notag \\
\label{binary decision}
& \geq \myexpect{\bm X_1}{I(X_2; \mathbb{B}(\tilde{Y}) )\vert \bm X_1}\\
\label{binary channel}
&= \myexpect{\bm X_1}{1-\Hb\left(\Perr\right)}
\end{align}
\begin{align}
\label{jensen1}
&> 1- \Hb\left(\myexpect{\bm X_1}{\Perr}\right)\\
\label{tse_equation}
&=\! 1 \!-\! \Hb\!\left(\!\!
\left(\frac{1-\mu}{2}\right)^{\!\!\!\!N} \sum_{i=0}^{N\!-\!1}\!\! {N-1+i \choose i}\!\!\!\left(\!\frac{1+\mu}{2}\!\right)^{\!\!i}
\right)\\
&\triangleq
\underline{H},
\end{align}
where \eqref{binary decision} is due to the data-processing inequality~\cite{BookInfo}, and $\mathbb{B}(\cdot)$ is the binary decision function that directly maps $\tilde{Y}$ to $c_1$ or $c_0$ when $\vert \tilde{Y}-c_1 \vert \leq \vert \tilde{Y}-c_0 \vert$ or $\vert \tilde{Y}-c_1 \vert > \vert \tilde{Y}-c_0 \vert$, respectively.
Thus, $X_2 \rightarrow \mathbb{B}(\tilde{Y})$ is a binary symmetric channel that has a binary input and a binary output.
The mutual information of such a binary symmetric channel is
$1-\Hb\left(\Perr\right)$ given in \eqref{binary channel}, where
$\Hb\left(\cdot \right)$ is the binary entropy function, $\Hb\left(p\right) \triangleq - p \log_2(p) - (1-p) \log(1-p)$ and $p \in \left[0,1\right]$,
and $\Perr$ is the error probability of the binary detection, which is given by
\begin{equation}
\Perr 
= Q\left(\sqrt{\frac{\vert g \bm X_1 \vert^2  \rho \myP}{2 \sigma^2} }  \vert c_1-c_{0}\vert\right).
\end{equation}
\eqref{jensen1} is based on the Jensen's inequality as the function $1- \Hb(\cdot)$ is strictly convex. 
\eqref{tse_equation} is directly obtained from~\cite[(3.37)]{BOOKTse}, and 
\begin{equation} \label{defi:mu}
\mu = \sqrt{\frac{\vert g\vert^2  \rho \myP \vert \frac{c_1-c_{0}}{2}\vert^2}{\sigma^2 +  \vert g\vert^2 \rho \myP \vert \frac{c_1-c_{0}}{2}\vert^2}}.
\end{equation}

\subsection{Maximum Achievable Sum Rate}
The maximum achievable sum rate is given~by
\begin{align}
\label{X1X1Y}
I(\bm X_1,X_2;\bm Y) 
\!=\! h(\bm Y) \!-\! h(\bm Y \vert \bm X_1, X_2)
\!=\!h(\bm Y)\!-\!h(\bm Z).
\end{align}
Based on \eqref{M-MAC channel2} and the binary distribution of $X_2$, it is easy to obtain that the probability density function (PDF) of the complex random variable $\bm Y$, which is given by
\begin{equation}
f_{\bm Y}(\bm y) = \frac{1}{2} \left(f_1(\bm y) + f_{0}(\bm y)\right),
\end{equation}
where  $f_i(\bm y)$ is the PDFs of 
the random variable, $\left(1 + g \sqrt{\rho} c_i \right) \sqrt{\myP} \bm X_1 +\bm  Z$ following the distribution 
$\mathcal{CN}\left(0, \left(\vert 1 + g \sqrt{\rho} c_i \vert^2 \myP + \sigma^2\right) \bm I_N\right)$, $i=0,1$.
Thus, we have 
\begin{align}
\label{hY}
h(\bm Y)& = - \int_{\bm Y} f_{\bm Y}(\bm y) \log_2\left(f_{\bm Y}(\bm y)\right) \myd \bm y\\
\label{jensen}
& \geq  \!-\frac{1}{2}\int_{\bm Y} \! f_1(\bm y) \log_2\left(f_1(\bm y)\right) \myd \bm y
\!-\frac{1}{2}\int_{\bm Y} \! f_{0}(\bm y) \log_2\left(f_{0}(\bm y)\right) \myd \bm y,
\end{align}
where \eqref{jensen} is due to the convex function, $x \log_2(x)$, and the Jensen's inequality, and the equality holds when $f_1(\bm y) = f_{0}(\bm y)$ for all $\bm y\in \mathbb{C}^{N}$, i.e., $\vert 1 + g \sqrt{\rho} c_1 \vert
=
\vert 1 + g \sqrt{\rho} c_0 \vert$.

Taking \eqref{hY} and \eqref{h} into \eqref{X1X1Y}, the maximum achievable sum rate is obtained. Further, taking \eqref{jensen} and \eqref{h} into \eqref{X1X1Y}, its closed-form lower bound is obtained as follows:
\begin{align} \label{sum rate lower bound}
I(\bm X_1,X_2;\bm Y) \geq \underline{h} \triangleq \frac{1}{2} \left(h_1 + h_0 \right),
\end{align}
where $h_1$ and $h_0$ are defined in \eqref{hnew}.

\section{Time-Division Multiple Access vs. Multiplicative Multiple Access}
In the section, 
we consider a comparative study of the conventional time-division multiple-access channel/system and the multiplicative multiple-access channel/system, and study the strict convexity of the achievable rate region of the M-MAC.

\subsection{The Rate Region of the Conventional TDMA Backscatter Channel} \label{sec:time sharing}
In the conventional TDMA backscatter system, there are two phases in information transmissions: 1) the Tx transmits its information to the Rx while the Tag is able to adjust to the best reflection coefficient
so that the Tag's reflected signal is constructively combined with the direct-link signal at the Rx; 2) the Tx simply emits a deterministic sinusoidal signal with power $\myP$ while the Tag sends its information using backscatter modulation. Note that in each phase only one node is able to transmit information.
The received signals of the Tx and the Tag in phases 1 and~2 are given in \eqref{M-MAC channel2} and \eqref{binary AWGN channel}, respectively.
Also, it can be proved that the maximum rates of the Tx and the Tag are achieved in phase~1 and~2, respectively, 
because of zero inter-user-interference.

In phase 1, the maximum rate of $\bm X_1$ is described by the following proposition.
\begin{proposition} \label{prop:time-sharing 1}
	\normalfont
	The maximum rate of $\bm X_1$, $R^{\max}_1$, is achieved when the Tag transmits a constant symbol which makes the received signal power largest at the Rx, and $R^{\max}_1$ is given by  
	\begin{equation} \label{h large}
	R^{\max}_1
	=\overline{h}
	\triangleq \max \left\lbrace
	h_1,	h_0
	\right\rbrace,
	\end{equation}
	where $h_1$ and $h_0$ are defined in \eqref{hnew}.
\end{proposition}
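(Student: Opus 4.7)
The plan is to reduce the problem to a classical AWGN capacity calculation, since in phase~1 of the TDMA scheme the Tag conveys no information and simply acts as a passive reflector whose symbol is chosen to benefit the direct link. First I would fix $X_2 = c_i$ for $i \in \{0,1\}$; substituting into \eqref{M-MAC channel2}, the channel seen by the Tx collapses to a memoryless vector AWGN channel $\bm Y = \alpha_i \sqrt{\myP}\, \bm X_1 + \bm Z$ with deterministic effective gain $\alpha_i \triangleq 1 + g \sqrt{\rho}\, c_i$ and $\bm Z \sim \mathcal{CN}(0, \sigma^2 \bm I_N)$.

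Next, I would invoke the standard capacity formula for the complex Gaussian channel under the average-power constraint $\tr(\myexpect{}{\bm X_1^H \bm X_1}) \leq N$: the capacity is achieved by $\bm X_1 \sim \mathcal{CN}(0, \bm I_N)$ and equals $N \log_2\!\left(1 + |\alpha_i|^2 \myP / \sigma^2\right)$, which is precisely $h_i$ as defined in \eqref{hnew}. This step re-uses exactly the differential-entropy calculation already carried out in \eqref{h} and \eqref{firstI}, so no new technical machinery is required.

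Finally, because the reflection symbol is a free design parameter that carries no information in phase~1, the Tx rate is maximized by choosing $c_i \in \{c_0, c_1\}$ so that $h_i$ is as large as possible. Since $h_i$ is strictly increasing in $|\alpha_i|^2$, this is equivalent to maximizing the received signal power $|\alpha_i|^2 \myP$ at the Rx, which yields $R^{\max}_1 = \max\{h_0, h_1\} = \overline{h}$. To rule out the possibility that a randomized or time-shared choice over $\{c_0, c_1\}$ could outperform the best deterministic one, I would observe that any such strategy merely convex-combines two achievable rates, each bounded above by $\max\{h_0, h_1\}$, so no improvement is possible.

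I do not foresee a substantive obstacle here: the only subtlety is justifying the optimality of a deterministic Tag symbol, after which the claim is an immediate consequence of Shannon's AWGN capacity and the monotonicity of $h_i$ in $|\alpha_i|^2$.
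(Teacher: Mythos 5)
Your proof is correct and follows exactly the route the paper intends: the paper states Proposition~\ref{prop:time-sharing 1} without a separate proof, treating it as immediate from the AWGN differential-entropy calculation in \eqref{firstI}--\eqref{h}, which is precisely what you invoke after fixing $X_2 = c_i$. Your added remark ruling out randomized or time-shared Tag symbols via convex combination is a small but worthwhile tightening that the paper leaves implicit.
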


In phase 2, the maximum rate of $X_2$ and its upper bound are described by Proposition~\ref{prop:time-sharing2}.
\begin{proposition}\label{prop:time-sharing2}
	\normalfont	
	The maximum rate of $X_2$, $R^{\max}_2$, is achieved when the Tx transmits the deterministic sinusoidal signal with power $\myP$. $R^{\max}_2$ is given in \eqref{IX2tildeY} with the parameter
	\begin{equation}
	\tilde{\sigma}^2 = \frac{\sigma^2}{N \vert g \vert^2 \rho \myP}.
	\end{equation}
	The closed-form upper bound of $R^{\max}_2 $ is given by
	\begin{align} \label{time sharing upperbound}
	{R}^{\max}_2 < \overline{H} \triangleq \min\left\lbrace \log_2\left(1 + \frac{N \vert g \vert^2\rho \myP}{\sigma^2}\right), 1 \right\rbrace.
	\end{align}	
\end{proposition}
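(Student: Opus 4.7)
The plan has three stages: (i) reduce Phase~2 to an equivalent scalar binary-input AWGN channel for $X_2$, (ii) show that a deterministic sinusoid at peak allowable power is the rate-maximizing Tx waveform, and (iii) bound the resulting binary-input AWGN mutual information in closed form. Each part is short once the right structural fact is identified.

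For (i), since the Tx's waveform carries no message in Phase~2, $\bm X_1$ is known \emph{a priori} at the Rx and may be handled exactly as in the derivation leading to \eqref{binary AWGN channel}: subtract the known direct-link contribution $\sqrt{\myP}\bm X_1$ from \eqref{M-MAC channel2}, perform MRC using $\bm X_1$, and normalize. This yields $\tilde Y = X_2 + \tilde Z$ with $\tilde Z \sim \mathcal{CN}(0, \sigma^2/(\vert g\vert^2 \rho \myP \vert \bm X_1\vert^2))$, so conditionally on $\bm X_1$ the Tag sees a binary-input AWGN channel of SNR $\gamma(\bm X_1) \triangleq \vert g\vert^2\rho\myP \vert \bm X_1\vert^2/\sigma^2$. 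Writing $C(\gamma)$ for the right-hand side of \eqref{IX2tildeY} with $\tilde\sigma^2 = 1/\gamma$, the Tag's achievable rate is $\myexpect{\bm X_1}{C(\gamma(\bm X_1))}$.

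For (ii), the Tx power constraint $\myexpect{}{\vert \bm X_1\vert^2}\leq N$ translates into $\myexpect{}{\gamma(\bm X_1)}\leq \gamma^{\star} \triangleq N\vert g\vert^2\rho\myP/\sigma^2$. The binary-input AWGN mutual information $C(\gamma)$ is concave and non-decreasing in $\gamma$: by the I-MMSE identity, $C'(\gamma) = \text{MMSE}(\gamma)/\ln 2$, and $\text{MMSE}(\gamma)$ is positive and non-increasing in $\gamma$. Jensen's inequality then gives $\myexpect{\bm X_1}{C(\gamma(\bm X_1))}\leq C(\gamma^{\star})$, with equality when $\vert \bm X_1\vert^2=N$ holds almost surely. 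A deterministic sinusoid of power $\myP$ realizes this condition and hence attains $R_2^{\max}$; substituting $\vert \bm X_1\vert^2=N$ into $\tilde\sigma^2$ gives the stated value $\sigma^2/(N\vert g\vert^2\rho\myP)$.

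For (iii), two elementary ceilings on $C(\gamma^{\star})$ produce the bound \eqref{time sharing upperbound}: since $X_2$ is binary, $C(\gamma^{\star})\leq H(X_2)\leq 1$; and since the Gaussian input maximizes mutual information among all inputs of a given second moment, combined with $\vert c_0\vert,\vert c_1\vert\in[0,1]$ implying $\myexpect{}{\vert X_2\vert^2}\leq 1$, we obtain $C(\gamma^{\star})\leq \log_2(1+\gamma^{\star})$. Their minimum is $\overline{H}$. The main obstacle is the optimality argument in step (ii): concavity of the binary-input Gaussian capacity in SNR does not follow from inspection of the expression in \eqref{IX2tildeY} and must be invoked via I-MMSE (or an equivalent tool); once that is in hand, steps (i) and (iii) are essentially bookkeeping.
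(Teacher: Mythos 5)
Your proposal is correct, and it reaches the same three conclusions as the paper (the value of $\tilde{\sigma}^2$, the ceiling of $1$ from the binary alphabet, and the Gaussian-capacity ceiling $\log_2(1+N\vert g\vert^2\rho\myP/\sigma^2)$) by the same reduction to the binary-input AWGN channel of \eqref{binary AWGN channel}. The genuine difference is in the optimality claim of step (ii). The paper's own proof simply \emph{evaluates} the rate at the deterministic unmodulated input (each entry of $\bm X_1$ equal to $1$, so $\vert\bm X_1\vert^2=N$) and does not argue that no other Tx waveform under the average-power constraint $\myexpect{}{\vert\bm X_1\vert^2}\leq N$ could do better; you close that gap by observing that the backscatter link sees the per-realization SNR $\gamma(\bm X_1)=\vert g\vert^2\rho\myP\vert\bm X_1\vert^2/\sigma^2$, that the binary-input AWGN mutual information is concave and non-decreasing in SNR (via the I-MMSE identity and the monotonicity of the MMSE), and then applying Jensen's inequality together with the constraint $\myexpect{}{\gamma(\bm X_1)}\leq N\vert g\vert^2\rho\myP/\sigma^2$. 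This buys an actual proof of the word ``maximum'' in the proposition at the cost of importing a nontrivial external fact (I-MMSE); the paper's route is shorter but leaves that optimality unargued. Two very minor points to tidy: the reduction in step (i) deserves one sentence noting that MRC onto $\bm X_1$ is a sufficient statistic (the useful signal lies in the span of $\bm X_1$ and the noise is white), so no information is lost; and for the strict inequality in \eqref{time sharing upperbound} you should remark that both ceilings are strict for finite SNR, since $X_2$ is non-Gaussian and $H(X_2\vert\tilde{Y})>0$ — which is exactly the paper's justification as well.
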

\begin{proof}
See Appendix~A.
\end{proof}

Therefore, the achievable rate region of the conventional TDMA channel is the \emph{triangle} with vertexes $(0,0)$, $(0,{R}^{\max}_1)$ and $({R}^{\max}_2,0)$.
This region is contained by an achievable region of the M-MAC, since 
both phase 1 and phase 2 of the TDMA scheme are special cases of the multiplicative multiple-access scheme.
However, it is not clear whether the M-MAC has a strictly larger achievable rate region or not compared with the \emph{triangular} TDMA one. 
Equivalently, it is interesting to see whether the achievable rate region of the M-MAC is strictly convex or not.

\subsection{Strict Convexity of the Achievable Rate Region of the M-MAC} \label{sec:special case regions}
Since the achievable rate regions of the M-MAC (in Sec.~\ref{sec:M-MAC region}) and the TDMA channel (in Sec.~\ref{sec:time sharing}) do not have closed-form expressions, a direct proof of the strict convexity of the achievable rate region of the M-MAC is not possible. In the following, we tackle the problem resorting to the inequalities \eqref{jensen1}, \eqref{sum rate lower bound}  and \eqref{time sharing upperbound}.

Before proceeding further, we need the following definitions: $o = (0,0)$, $B_1 = (0,\overline{h})$, $A_1 = (0, \underline{h})$, $B_2=(\overline{H},0)$, $A_2 = (\underline{H},0)$, $D = (R^{\max}_2,0)$, $C=(\underline{H}, \underline{h}-\underline{H})$, as illustrated in Fig.~\ref{Region}.
\begin{figure}
	\renewcommand{\captionfont}{\small} \renewcommand{\captionlabelfont}{\small}	
	\centering
	\begin{tikzpicture}[scale=1.8]
	\draw [->,thick](0,0) -- (3,0);
	\draw [->,thick](0,0) -- (0,3.35);
	\node [fill,circle,scale=0.4,blue] (v1) at (0,2.5) {};
	\node [fill,circle,scale=0.4,blue] (v2) at (1.55,0) {};
	\node [align=center] at (-0.45,2.3) {$A_1$\\$(0,\underline{h})$};
	\node (v4) at (0.85,-0.35) {$A_2$};
	\node at (0,3.6) {$R_1$};
	\node at (3.3,0) {$R_2$};
	\draw [ultra thick,red](v1) -- (1.55,1);
	
	\draw [ultra thick,red](v2) -- (1.55,1);
	
	\node [fill,circle,scale=0.4,blue] (v3) at (0,2.8) {};
	\node [align=center] at (-0.45,3) {$(0,\overline{h})$\\ $B_1$};
	\node (v6) at (2.75,-0.35) {$B_2$};
	\node [fill,circle,scale=0.4,blue] at (1.55,1) {};
	
	\draw [ultra thick,black](v3) -- (1.55,1) -- (2,0);
	\node [fill,circle,scale=0.4,blue] (v5) at (2,0) {};
	
	\node at (1.35,0.95) {$C$};
	\node at (-0.15,-0.15) {$o$};
	\draw [densely dashed,thick] (0,2.8) node (v7) {} .. controls (1.5,2.5) and (1.75,1.4) .. (1.8,0);
	\draw  plot[smooth, tension=.7] coordinates {(0.85,2.45) (1.05,2.8) (1.3,3)};
	\node [right] at (1.3,3.1) {M-MAC capacity region};
	\node [fill,circle,scale=0.4,blue] (v8) at (1.8,0) {};
	\node at (1.75,-0.35) {$D$};
	\node at (0.85,-0.7) {$(\underline{H},0)$};
	\node at (1.75,-0.7) {$(R^{\text{max}}_2,0)$};
	\node at (2.8,-0.7) {$(\overline{H},0)$};
	\draw (1,-0.5);
	\draw (v4) -- (1.45,-0.05);
	\draw (2.1,-0.05) -- (v6);
	\draw [densely dotted,thick] (v7) -- (v8);
	\draw (0.9,1.4) .. controls (0.65,1.25) and (0.55,1.1) .. (0.5,0.75);
	\node [align=center] at (0.7,0.45) {TDMA\\rate region};
	\end{tikzpicture}
	\caption{An illustration of the achievable rate region of the M-MAC. The red-solid-line region, $o-A_1-C-D$, is contained by the achievable rate region of the M-MAC. The black-solid-line region, $o-B_1-C-B_2$, is the region of interest for the analysis.}
	\vspace{-0.5cm}
	\label{Region}
\end{figure}
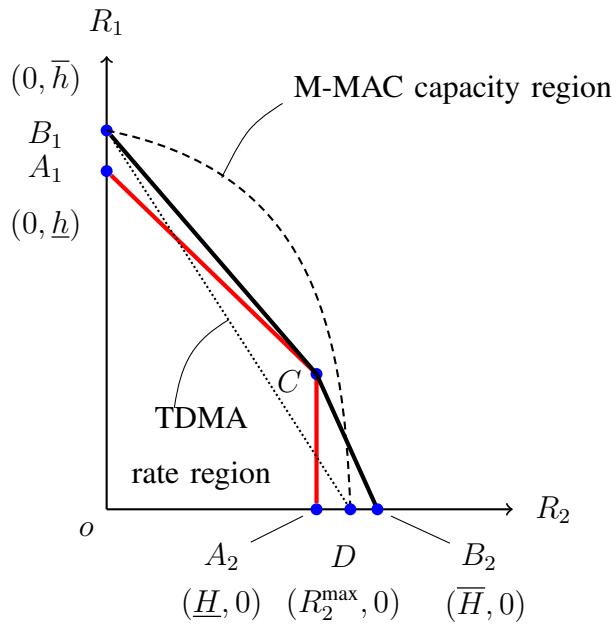

From Propositions~\ref{prop:time-sharing 1} and~\ref{prop:time-sharing2}, \emph{the triangle $o-B_1-D$ is the achievable rate region of the conventional TDMA channel and is contained by the region $o-B_1-B_2$.}
From \eqref{maxX1}, \eqref{jensen1} and \eqref{sum rate lower bound} in Sec.~\ref{sec:M-MAC region}, it can be proved that \emph{the region $o-A_1-C-A_2$ is contained by the achievable rate region of the M-MAC with the Gaussian input, $\bm X_1$, and the binary input,~$X_2$.}

If the achievable rate pair of the M-MAC, the node~$C$, lies outside the region $o-B_1-B_2$, $C$ lies outside the achievable rate region of the TDMA channel, i.e., $o-B_1-D$. 
Hence, \emph{the achievable rate region of the M-MAC}, $o-B_1-C-D$, is strictly convex.
Thus, we have the following lemma.
\begin{lemma}
	\normalfont
	A sufficient condition that the achievable rate region of the M-MAC is strictly convex, is that the region $o - B_1 - C- B_2$ is strictly convex.
\end{lemma}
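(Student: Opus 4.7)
The plan is to show that the sufficient condition on $o-B_1-C-B_2$ forces the point $C$ to lie strictly outside the TDMA triangle $o-B_1-D$ and then use the convexity of the M-MAC rate region to deduce strict convexity of its boundary.

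First, I would collect the containment relations already established in the excerpt. By Propositions~\ref{prop:time-sharing 1} and~\ref{prop:time-sharing2}, the triangle $o-B_1-D$ is achievable by TDMA, so in particular it is achievable by the more general multiplicative scheme and is therefore a subset of the M-MAC rate region. On the other hand, the inequalities \eqref{maxX1}, \eqref{jensen1} and \eqref{sum rate lower bound} show that the quadrilateral $o-A_1-C-A_2$ is also a subset of the M-MAC rate region. Since any achievable MAC rate region is convex (by a standard time-sharing argument), the convex hull of the union of these two subsets also lies inside the M-MAC rate region. In particular, the M-MAC rate region contains both the vertex $B_1$ (on the $R_1$-axis), the vertex $D$ (on the $R_2$-axis), and the interior point $C$.

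Next, I would translate the hypothesis on $o-B_1-C-B_2$ into a concrete geometric statement about $C$. Strict convexity of the quadrilateral $o-B_1-C-B_2$ is equivalent to $C$ lying strictly above the chord joining $B_1=(0,\overline{h})$ and $B_2=(\overline{H},0)$. Because $R^{\max}_2<\overline{H}$ (Proposition~\ref{prop:time-sharing2}), the point $D=(R^{\max}_2,0)$ lies strictly between $o$ and $B_2$ on the $R_2$-axis, so the chord $B_1 D$ has a more negative slope than the chord $B_1 B_2$ and therefore lies strictly below $B_1 B_2$ for every $R_2>0$. Consequently, $C$ lies strictly above $B_1 D$ as well, which means $C$ is strictly outside the TDMA triangle $o-B_1-D$.

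Combining the two ingredients, the M-MAC region contains the TDMA triangle together with the extra point $C$ that is strictly separated from the edge $B_1 D$. By convexity, the M-MAC region must contain the quadrilateral $o-B_1-C-D$, whose upper-right boundary is a proper polygonal detour around $B_1 D$ through the exterior point $C$. Hence the supporting hyperplane at any point of the segment $B_1 D$ is not a supporting hyperplane of the M-MAC region, so the M-MAC boundary cannot coincide with that segment; its upper-right frontier is strictly convex.

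The step I expect to be the main obstacle is the geometric comparison of the chords $B_1 D$ and $B_1 B_2$ together with a careful check that $C$ indeed lies in the first quadrant (i.e.\ that $\underline{h}>\underline{H}$, so that the ordinate $\underline{h}-\underline{H}$ of $C$ is positive) in the regimes of interest; otherwise the quadrilateral $o-B_1-C-B_2$ degenerates and the argument must be restated. Once this non-degeneracy is in hand, the rest of the argument reduces to elementary convex geometry and the already-established containments.
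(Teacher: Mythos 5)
Your argument is correct and follows essentially the same route as the paper: the paper's (very terse) justification is precisely that the TDMA triangle $o-B_1-D$ is contained in $o-B_1-B_2$ (since $R^{\max}_2<\overline{H}$) while $C$ is an achievable M-MAC rate pair, so $C$ lying strictly outside $o-B_1-B_2$ places it strictly outside the TDMA triangle and forces the M-MAC region $o-B_1-C-D$ to be strictly larger than the time-sharing region. Your added care about the chord comparison and the non-degeneracy condition $\underline{h}>\underline{H}$ fills in details the paper leaves implicit, but does not change the approach.
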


Furthermore, we obtain the following lemma straightforwardly based on the definitions of $B_1$, $C$ and $B_2$.
\begin{lemma}
	\normalfont
	The rate region closed by $o-B_1-C-B_2$ is strictly convex iff $r_1 > r_2$, 
	where $r_1$ and $r_2$ are the absolute values of the slopes of $B_1-B_2$ and $B_1-C$, respectively, which are defined as
\vspace{-0.4cm}
	\begin{equation}
	\label{r1}
	\begin{aligned} 
	r_1 \triangleq \frac{\ \ \overline{h}\ \ }{\overline{H}},\ 
	r_2 \triangleq 1 + \frac{\overline{h} - \underline{h}}{\underline{H}},
	\end{aligned}	
	\end{equation}
	and $\overline{h}$, $\underline{h}$, $\overline{H}$ and $\underline{H}$ are given in \eqref{h large}, \eqref{sum rate lower bound}, \eqref{time sharing upperbound} and \eqref{tse_equation}, respectively.
\end{lemma}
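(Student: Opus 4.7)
The plan is to reduce strict convexity of the quadrilateral $o-B_1-C-B_2$ to a single scalar comparison of $r_1$ and $r_2$ by pure coordinate geometry. First I would note that $o$ sits at the origin, $B_1$ lies on the $R_1$-axis, $B_2$ lies on the $R_2$-axis, and $C$ lies strictly in the open first quadrant; the two edges $o B_1$ and $o B_2$ are mutually orthogonal, so neither can become collinear with the adjacent edge meeting at $B_1$ or $B_2$. Consequently, strict convexity of the quadrilateral hinges on a single geometric condition: the vertex $C$ must lie strictly on the far side of the chord $B_1 B_2$ from the origin. Equality of $C$ with a point on that chord degenerates the quadrilateral into the triangle $o-B_1-B_2$, and a $C$ on the near side breaks convexity at $C$.

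Next I would write the chord $B_1 B_2$ in intercept form, $R_1/\overline{h} + R_2/\overline{H} = 1$, and evaluate its $R_1$-coordinate at the horizontal position $R_2 = \underline{H}$, obtaining $\overline{h}(1 - \underline{H}/\overline{H}) = \overline{h} - r_1\underline{H}$ by the very definition $r_1 = \overline{h}/\overline{H}$. Since $C = (\underline{H},\, \underline{h}-\underline{H})$, the condition that $C$ sit strictly above this line reads
\[
\underline{h} - \underline{H} \;>\; \overline{h} - r_1\underline{H},
\]
and dividing through by $\underline{H} > 0$ after transposition yields exactly $r_1 > 1 + (\overline{h}-\underline{h})/\underline{H}$, which is $r_1 > r_2$. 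All steps are reversible, so the converse direction requires nothing new, and the borderline case $r_1 = r_2$ corresponds precisely to $C$ lying on the chord, i.e.\ to the failure of strict convexity. This gives the claimed equivalence.

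The only subtlety worth flagging is the sign convention implicit in the definition \eqref{r1}: the expression $r_2 = 1 + (\overline{h}-\underline{h})/\underline{H}$ is genuinely the absolute value of the slope of $B_1 C$ only when $\overline{h} \geq \underline{h}$, which is however automatic from $\overline{h} = \max\{h_1,h_0\}$ (see \eqref{h large}) and $\underline{h} = (h_1+h_0)/2$ (see \eqref{sum rate lower bound}). I anticipate no substantive obstacle: the lemma is essentially a geometric restatement, and the real work of the section lies upstream, in establishing that $(A_1, A_2, C)$ and $(B_1, B_2)$ respectively lower- and upper-bound the true M-MAC and TDMA boundaries so that the strict convexity of the outer contour translates into the strict containment claim about the true rate regions.
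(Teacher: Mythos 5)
Your proposal is correct and coincides with the paper's (unstated) argument: the paper simply declares the lemma ``straightforward from the definitions of $B_1$, $C$ and $B_2$,'' and the coordinate-geometry computation you supply — locating $C=(\underline{H},\underline{h}-\underline{H})$ relative to the chord $R_2/\overline{H}+R_1/\overline{h}=1$ and observing that ``$C$ strictly beyond the chord'' is equivalent to $r_1>r_2$ — is exactly that straightforward verification, including the correct use of $\overline{h}\ge\underline{h}$ to identify $r_2$ as the absolute slope of $B_1C$.
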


Based on the expressions of $\overline{h}$, $\underline{h}$, $\overline{H}$ and $\underline{H}$, it is still not clear whether $r_1$ is greater than $r_2$ or not. Therefore, we further investigate the following special cases.

\subsubsection{The High SNR Scenario}
In the high SNR scenario, i.e., letting $\snr \rightarrow \infty$, we have the following result.
\begin{proposition} \label{prop:high snr}
	\normalfont
	In the high SNR scenario, the achievable rate region of the M-MAC is strictly convex.
\end{proposition}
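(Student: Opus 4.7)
The plan is to invoke Lemma~2 and show that, in the high-SNR limit, the slope $r_1$ diverges while $r_2$ remains bounded, so that $r_1 > r_2$ holds for all sufficiently large $\snr$. I would proceed by analysing the asymptotic behaviour of the four building blocks $\overline{h}$, $\underline{h}$, $\overline{H}$, $\underline{H}$ appearing in \eqref{r1}.

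I would first establish that $\overline{H}\to 1$ and $\underline{H}\to 1$ as $\snr\to\infty$. For $\overline{H}=\min\{\log_2(1+N|g|^2\rho\myP/\sigma^2),1\}$ the logarithmic term diverges, so the minimum is eventually attained by $1$. For $\underline{H}$, I would start from \eqref{defi:mu} to observe that $\mu\to 1$, hence $(1-\mu)/2\to 0$, whereas the finite polynomial $\sum_{i=0}^{N-1}\binom{N-1+i}{i}\bigl((1+\mu)/2\bigr)^i$ converges to the finite constant $\binom{2N-1}{N-1}$. The argument of $\Hb$ in \eqref{tse_equation} therefore tends to $0$, and by continuity of $\Hb$ we conclude $\underline{H}\to 1$.

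Next I would study $h_0$ and $h_1$ in \eqref{hnew}. Under the generic nondegeneracy $|1+g\sqrt{\rho}c_i|>0$ for $i=0,1$, we have $h_i = N\log_2\snr + N\log_2|1+g\sqrt{\rho}c_i|^2 + o(1)$. Hence $\overline{h} = \max\{h_1,h_0\} = N\log_2\snr + O(1)$ diverges, while $\overline{h} - \underline{h} = |h_1-h_0|/2$ tends to a finite constant. Substituting into \eqref{r1} gives $r_1 = \overline{h}/\overline{H}\to \infty$ while $r_2 = 1 + (\overline{h}-\underline{h})/\underline{H}$ tends to a finite limit. Thus $r_1 > r_2$ for all sufficiently large $\snr$, and Lemmas~1 and~2 together yield the asserted strict convexity.

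The main obstacle I anticipate is the clean justification that $\underline{H}\to 1$: one must control the interplay between the vanishing prefactor $\bigl((1-\mu)/2\bigr)^N$ and the polynomial factor in $(1+\mu)/2$ whose coefficients grow combinatorially with $N$, keeping in mind that $N$ is fixed while $\mu$ is the variable tending to $1$. A secondary concern is the degenerate case $|1+g\sqrt{\rho}c_1| = |1+g\sqrt{\rho}c_0|$, in which $\overline{h} = \underline{h}$ so $r_2\to 1$ and only $r_1\to\infty$ remains to be verified; this case is even easier, and both situations yield the conclusion.
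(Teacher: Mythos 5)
Your proposal is correct and follows essentially the same route as the paper's Appendix~B: both arguments rest on Lemma~2, showing $\overline{H},\underline{H}\to 1$, $\overline{h}\to\infty$, and $\overline{h}-\underline{h}=\tfrac{1}{2}\lvert h_1-h_0\rvert$ bounded, so that $r_1\to\infty$ while $r_2$ stays finite. Your treatment is somewhat more careful than the paper's (explicitly justifying $\underline{H}\to 1$ via $\mu\to 1$ and noting the degenerate case $\lvert 1+g\sqrt{\rho}c_1\rvert=\lvert 1+g\sqrt{\rho}c_0\rvert$), but the underlying argument is the same.
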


\begin{proof}
See Appendix B.
\end{proof}

\subsubsection{The Weak Backscatter Channel Scenario}
We further investigate the typical case that the direct channel is much stronger than the backscatter channel.
\begin{proposition} \label{prop:typical}
	\normalfont
	In the typical case that the channel power gain of the direct channel is much stronger than that of the backscatter channel, i.e., $\vert g \vert^2 \ll1$, the achievable rate region of the M-MAC is strictly convex.
\end{proposition}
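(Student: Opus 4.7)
The plan is to invoke Lemmas~1 and~2 and then show $r_1 > r_2$ in the regime $\vert g\vert^2\ll 1$ by comparing the asymptotic orders in $\vert g\vert$ of $\overline{h}$, $\underline{h}$, $\overline{H}$ and $\underline{H}$. First I would expand $\vert 1 + g\sqrt{\rho}c_i\vert^2 = 1 + 2\sqrt{\rho}\,\mathrm{Re}(g c_i) + O(\vert g\vert^2)$ and substitute into \eqref{hnew}: both $h_0$ and $h_1$ tend to the strictly positive constant $N\log_2(1+\snr)$, so $\overline{h}=\Theta(1)$, while $\overline{h}-\underline{h}=\vert h_1-h_0\vert/2 = \Theta(\vert g\vert)$ in the generic case $\mathrm{Re}(g(c_1-c_0))\ne 0$. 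For $\overline{H}$, since for small $\vert g\vert$ the minimum in \eqref{time sharing upperbound} is attained at the logarithmic argument, the expansion $\log_2(1+x)\sim x/\ln 2$ immediately gives $\overline{H}=\Theta(\vert g\vert^2)$.

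The most delicate ingredient is the order of $\underline{H}$. From \eqref{defi:mu} one reads off $\mu = \Theta(\vert g\vert)$, and at $\mu=0$ the bracketed expression in \eqref{tse_equation} collapses to the random-guess error probability $1/2$. I would Taylor-expand that expression in $\mu$ about $0$ to establish $\Perr = 1/2 - a_N\mu + O(\mu^2)$ with a strictly positive slope $a_N$, and then combine with the standard quadratic expansion $1-\Hb(1/2-x)=2x^2/\ln 2 + O(x^4)$ to obtain $\underline{H}=\Theta(\mu^2)=\Theta(\vert g\vert^2)$.

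Collecting orders yields $r_1 = \overline{h}/\overline{H} = \Theta(\vert g\vert^{-2})$ and $r_2 = 1 + (\overline{h}-\underline{h})/\underline{H} = 1 + \Theta(\vert g\vert^{-1})$, so $r_1$ diverges strictly faster than $r_2$ as $\vert g\vert \to 0$; hence $r_1 > r_2$ for all $\vert g\vert^2$ sufficiently small. Lemma~2 then gives strict convexity of the quadrilateral $o$-$B_1$-$C$-$B_2$, and Lemma~1 delivers the proposition. The main obstacle is showing $a_N > 0$ for every $N\ge 1$; I would tackle this either by differentiating the closed-form sum in \eqref{tse_equation} term-by-term and tracking the cancellations, or more cleanly by invoking the probabilistic interpretation---the expression is the bit-error probability of $N$-branch MRC over Rayleigh fading, which is smooth and strictly decreasing in the per-branch SNR, so its derivative at $\mu=0$ is strictly negative. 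The degenerate alignment $\mathrm{Re}(g(c_1-c_0))=0$ only makes $r_2 = \Theta(1)$, rendering $r_1>r_2$ even easier.
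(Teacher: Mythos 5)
Your proposal is correct and follows essentially the same route as the paper's Appendix~C: expand $h_i$, $\overline{H}$ and $\underline{H}$ in $\vert g\vert$, find $\overline{h}=\Theta(1)$, $\overline{h}-\underline{h}=O(\vert g\vert)$, $\overline{H}=\Theta(\vert g\vert^2)$ and $\underline{H}=\Theta(\vert g\vert^2)$ via the quadratic behaviour of $1-\Hb$ near $p=1/2$, then conclude $r_1=\Theta(\vert g\vert^{-2})\gg r_2$. The one loose end you flag, the positivity of the linear coefficient $a_N$, is settled in the paper by the explicit formula $M(N)=2^{-N}\sum_{i=0}^{N-1}\binom{N+i-1}{i}(N-i)2^{-i}$, which is a sum of strictly positive terms.
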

\begin{proof}
	See Appendix~C.
\end{proof}

\subsubsection{The BPSK Scenario}
We also investigate the scenario that the Tag has the BPSK constellation, and have the following results.
\begin{proposition} \label{prop:BPSK-low-snr}
	\normalfont
	Assuming BPSK modulation scheme, in the low SNR scenario, i.e., $\snr \rightarrow 0$, the achievable rate region of the M-MAC is strictly convex.
\end{proposition}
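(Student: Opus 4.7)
The plan is to invoke Lemma~2: strict convexity of the M-MAC rate region follows once I verify $r_1 > r_2$ in the limit $\snr\to 0$, with $c_1 = 1$ and $c_0 = -1$. First I would derive the leading-order behavior in $\snr$ of each of $\overline h$, $\underline h$, $\overline H$, and $\underline H$, substitute them into \eqref{r1}, and then compare the two limiting ratios.

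For the first three quantities the expansion is routine. Using $\log_2(1+x) = x/\ln 2 + O(x^2)$ together with $\vert 1\pm g\sqrt\rho\vert^2 = 1 + \vert g\vert^2\rho \pm 2\sqrt\rho\,\Re(g)$, I would obtain
\begin{equation*}
\underline h \sim \tfrac{N(1+\vert g\vert^2\rho)}{\ln 2}\snr,\quad
\overline h \sim \tfrac{N(1+\vert g\vert^2\rho+2\sqrt\rho\vert\Re(g)\vert)}{\ln 2}\snr,\quad
\overline H \sim \tfrac{N\vert g\vert^2\rho}{\ln 2}\snr,
\end{equation*}
noting that for small enough $\snr$ the first argument of the $\min$ in \eqref{time sharing upperbound} is active. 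The delicate piece is $\underline H$. From \eqref{defi:mu} with $\vert c_1-c_0\vert/2 = 1$, one has $\mu = \sqrt{\vert g\vert^2\rho\,\snr/(1+\vert g\vert^2\rho\,\snr)}\to 0$, so $P_e\to 1/2$ in \eqref{tse_equation}. Since $\Hb'(1/2) = 0$ and $\Hb''(1/2) = -4/\ln 2$, I would Taylor-expand $\Hb$ about $1/2$ to get $\underline H = (2/\ln 2)(P_e-1/2)^2 + O((P_e-1/2)^4)$, and then differentiate the Tse formula at $\mu=0$ using the negative-binomial generating-function identity $\sum_{i\ge 0}\binom{N-1+i}{i}v^i = (1-v)^{-N}$ and its derivative. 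The outcome is $(dP_e/d\mu)\vert_{\mu=0} = -\kappa_N$ for an explicit positive constant $\kappa_N$ depending only on $N$, whence $\underline H \sim (2\kappa_N^2/\ln 2)\,\vert g\vert^2\rho\,\snr$.

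Substituting the four expansions into $r_1=\overline h/\overline H$ and $r_2 = 1+(\overline h-\underline h)/\underline H$, the common factor $\snr/\ln 2$ cancels and both ratios tend to finite limits that depend only on $N$, $\rho$, and $g$. The proof is completed by an algebraic verification that the limiting value of $r_1-r_2$ is strictly positive, which by Lemma~2 yields strict convexity. I expect the main obstacle to be the derivative computation for $\underline H$---extracting the constant $\kappa_N$ from partial sums of the negative-binomial generating function evaluated at $v=1/2$---together with the final sign check for $r_1-r_2$, which must be handled with care since the coefficient multiplying $\vert\Re(g)\vert$ in $r_2$ can be sizeable; once the explicit form of $\kappa_N$ is available, however, the remaining step reduces to a routine algebraic inequality.
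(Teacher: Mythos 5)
Your plan follows the same route as the paper's Appendix~D: expand $\overline h,\underline h,\overline H,\underline H$ to first order in $\snr$, observe that the common $\snr$ factor cancels in $r_1$ and $r_2$, and reduce strict convexity (via Lemma~2) to the sign of the limiting value of $r_1-r_2$. Your expansions agree with the paper's \eqref{h1-approx}, \eqref{approx2} and \eqref{approx3}, and your constant $\kappa_N$ is exactly the paper's
$M(N)=\frac{1}{2^N}\sum_{i=0}^{N-1}\binom{N+i-1}{i}\frac{N-i}{2^i}$, obtained by differentiating \eqref{tse_equation} at $\mu=0$ together with the quadratic Taylor term of $\Hb$ at $1/2$ (equivalently the leading term of the series $1-\Hb(p)=\frac{1}{2\ln 2}\sum_{n\ge1}\frac{(1-2p)^{2n}}{n(2n-1)}$ used in the paper). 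Up to this point the proposal is sound.

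The gap is the step you defer as ``a routine algebraic inequality.'' That sign check is the entire content of the proposition: the four expansions by themselves are equally consistent with $r_1\le r_2$, and, as you yourself note, the term in $r_2$ proportional to $\vert\Re(g)\vert$ carries the potentially large coefficient $N/(2M(N))^2$, so positivity of the limit of $r_1-r_2$ is not automatic and cannot be waved through. The paper closes this with two specific observations that your proposal does not supply: first, $N/(M(N))^2$ is decreasing in $N$, so the limiting $r_1-r_2$ is increasing in $N$ and it suffices to treat $N=1$ (without this reduction you would face a separate inequality for every $N$); second, for $N=1$ the limiting expression, after writing $g\sqrt{\rho}=a+jb$, collapses to a quadratic in $a$ of the form $a^2+2(1-2\ln 2)a+1+b^2$ whose discriminant $4(1-2\ln 2)^2-4$ is negative, hence it is bounded below by $1-(1-2\ln 2)^2\approx 0.85>0$. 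Until you carry out the computation of $M(N)$, establish the monotonicity in $N$, and verify the $N=1$ quadratic, the proof is not complete.
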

\begin{proof}
	See Appendix~D.
\end{proof}

\begin{proposition}\label{prop:bpsk}
	\normalfont
	In the BPSK scenario, assuming that the phase of the relative backscatter channel, $\theta$, is equal to ${\pi}/{2}$, the achievable rate region of the M-MAC is strictly convex.
\end{proposition}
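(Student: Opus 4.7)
The plan is to verify the sufficient condition from Lemma 2, namely $r_1>r_2$, by exploiting the key simplification that occurs when $\theta=\pi/2$ and $\{c_1,c_0\}=\{+1,-1\}$. Writing $g=j|g|$ and noting $c_i^2=1$ for BPSK, I would first compute
\begin{equation*}
|1+g\sqrt{\rho}\,c_i|^2 \;=\; |1+j|g|\sqrt{\rho}\,c_i|^2 \;=\; 1+|g|^2\rho,\qquad i=0,1,
\end{equation*}
so that $h_1=h_0$ by the definition in \eqref{hnew}. Consequently $\overline{h}=\max\{h_1,h_0\}=h_1$ and $\underline{h}=(h_1+h_0)/2=h_1$, i.e.\ $\overline{h}=\underline{h}$.

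Plugging this into the slope formula \eqref{r1}, the second slope collapses to $r_2=1+(\overline{h}-\underline{h})/\underline{H}=1$, so the whole problem reduces to proving the single strict inequality $r_1=\overline{h}/\overline{H}>1$, i.e.\ $\overline{h}>\overline{H}$. Since $\overline{H}=\min\{\log_2(1+N|g|^2\rho\mathcal{P}/\sigma^2),\,1\}$, I would handle the two branches of the min uniformly via Bernoulli's inequality. Specifically, taking $x=(1+|g|^2\rho)\mathcal{P}/\sigma^2>0$ and $N\ge 1$,
\begin{equation*}
\overline{h} \;=\; N\log_2(1+x)\;\ge\;\log_2\bigl((1+x)^N\bigr)\;\ge\;\log_2(1+Nx)\;>\;\log_2\!\left(1+\frac{N|g|^2\rho\mathcal{P}}{\sigma^2}\right),
\end{equation*}
where the last strict inequality uses $x>|g|^2\rho\mathcal{P}/\sigma^2$.

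This immediately settles the case $\overline{H}=\log_2(1+N|g|^2\rho\mathcal{P}/\sigma^2)$. For the other branch $\overline{H}=1$, the minimum being attained at $1$ means $\log_2(1+N|g|^2\rho\mathcal{P}/\sigma^2)\ge 1$, so the chain above yields $\overline{h}>\log_2(1+N|g|^2\rho\mathcal{P}/\sigma^2)\ge 1=\overline{H}$ as well. Hence $r_1>1=r_2$ in all cases, and Lemma 2 combined with Lemma 1 delivers strict convexity of the achievable rate region of the M-MAC.

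The calculation is essentially routine once one notices the symmetry $h_1=h_0$; the only mild obstacle is handling the piecewise definition of $\overline{H}$ cleanly, which the Bernoulli bound above does in one shot. I do not anticipate needing to appeal to the more delicate estimates used for Propositions \ref{prop:high snr}--\ref{prop:BPSK-low-snr}, because the phase assumption $\theta=\pi/2$ makes the two backscatter constellation points produce identical effective channel gains, trivializing the sum-rate side of the problem.
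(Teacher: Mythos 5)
Your proposal is correct and follows essentially the same route as the paper's Appendix~E: observe that $\theta=\pi/2$ with BPSK gives $\vert 1+g\sqrt{\rho}c_1\vert=\vert 1+g\sqrt{\rho}c_0\vert$, hence $\overline{h}=\underline{h}$ and $r_2=1$, and then reduce to showing $\overline{H}<h_1$ via $\overline{H}\leq\log_2(1+N\vert g\vert^2\rho\mathcal{P}/\sigma^2)<N\log_2(1+(1+\vert g\vert^2\rho)\mathcal{P}/\sigma^2)$. Your explicit Bernoulli-inequality chain simply spells out the ``property of the $\log$ function'' that the paper invokes tersely, so the two arguments are the same in substance.
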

\begin{proof}
See Appendix E.
\end{proof}

\subsection{Numerical Results}
In this section, we present numerical results for an achievable rate region of the M-MAC, i.e., the polygon $o-B_1-C-D$. Recall that the nodes $B_1$, $D$ and $C$ are achievable rate pairs of the M-MAC as discussed in Sec.~\ref{sec:time sharing} and Sec.~\ref{sec:special case regions}. The region is plotted using the definitions in Sec.~\ref{sec:special case regions}.
Also, we consider the scenario that the Tag adopts the BPSK modulation scheme and $N=1$.
Unless otherwise stated, we set $\snr = 10$~dB, $\rho=0.5$, $\vert g \vert^2 = 0.1$ and $\theta = \pi/4$.
Recall that the achievable rate region of the conventional TDMA channel is the triangle, $o-B_1-D$.
Thus, the proposed multiple-access scheme is able to significantly improve the system performance if we see that the polygon $o-B_1-C-D$ is more like a rectangular instead of a triangle.

In Fig.~\ref{fig:region with diff g}, the achievable rate region is plotted for different power gain of the relative backscatter channel, $\vert g \vert^2$.
We see that the rate region enlarges monotonically with $\vert g \vert^2$.
Therefore, a strong backscatter channel is beneficial to the rate performance of both the Tx and the Tag.
Moreover, for a practical range of $\vert g \vert^2$, i.e., $\vert g \vert^2 \in (0,1]$, 
the region is always strictly convex, which means
the multiplicative multiple-access scheme is always able to improve the rate performance.
Also, we see that the shape of the rate region is more like a rectangle when $\vert g \vert^2$ is very small, e.g., $\vert g \vert^2 = 0.01$, while the shape of the rate region is more like a triangle when $\vert g \vert^2$ is large, e.g., $\vert g \vert^2 = 1$.
This shows that the multiplicative multiple-access scheme improves the rate performance of the system significantly compared with the conventional TDMA scheme in the typical scenario that the direct channel is much stronger than the backscatter channel.

\begin{figure*}[t]
	\renewcommand{\captionfont}{\small} \renewcommand{\captionlabelfont}{\small}
	\minipage{0.5\textwidth}
	\centering
	\includegraphics[scale=0.55]{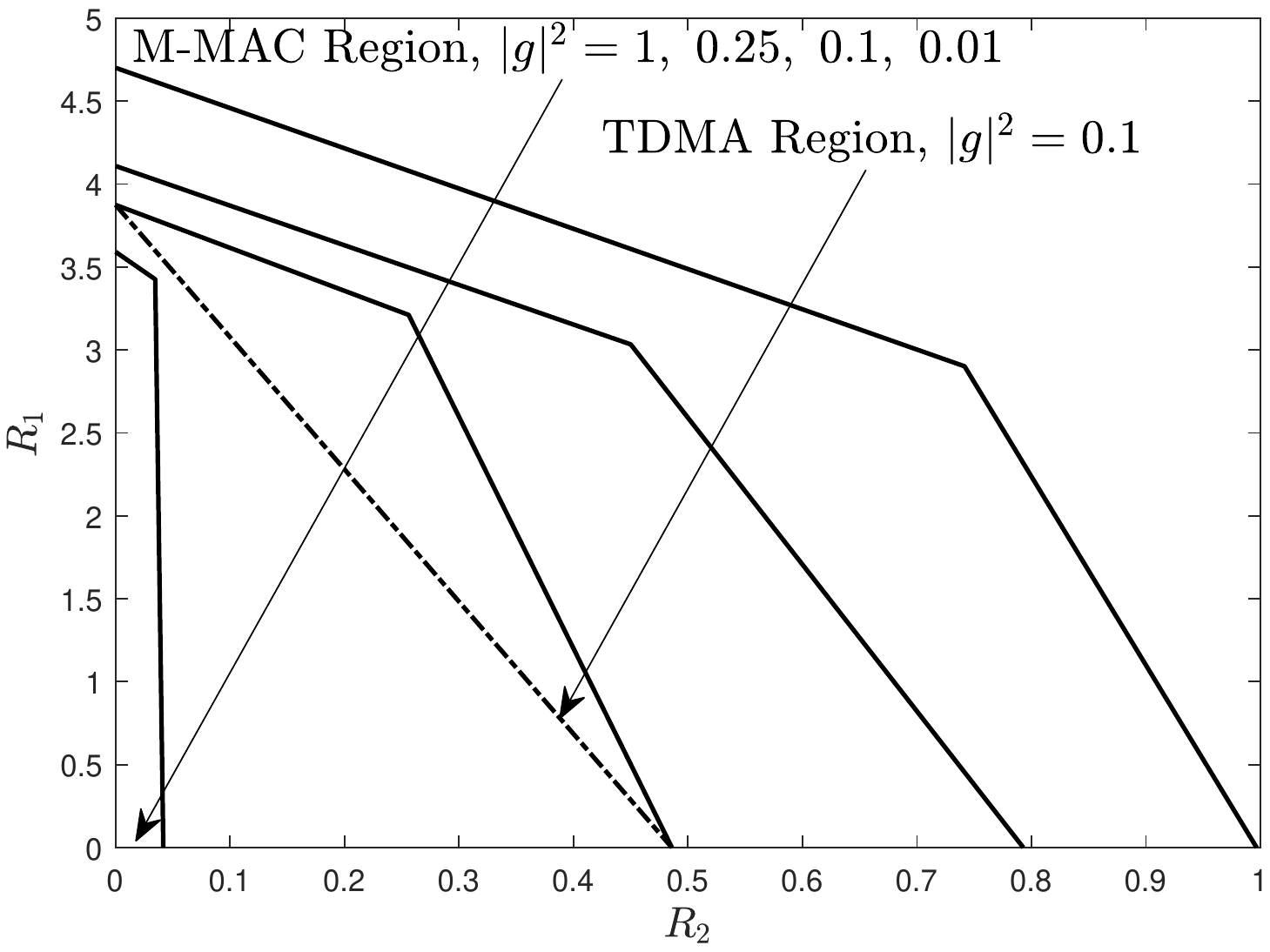}
\vspace{-0.9cm}	
\caption{Achievable rate region with different $\vert g \vert^2$.}
\vspace{-0.3cm}
\label{fig:region with diff g}
	\endminipage
	\minipage{0.5\textwidth}
	\centering
	\includegraphics[scale=0.55]{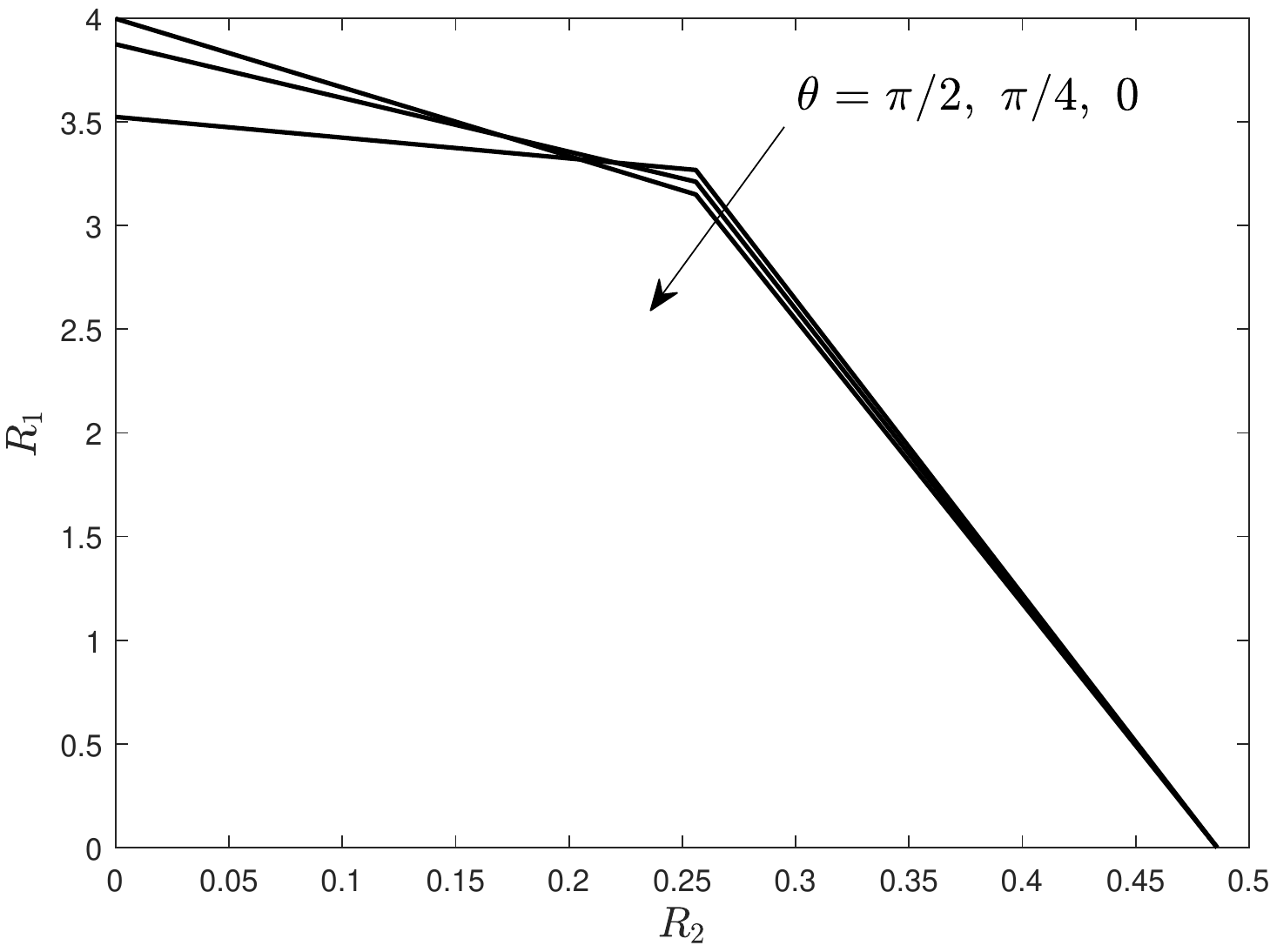}
\vspace{-0.9cm}		
\caption{Achievable rate region with different $\theta$.}
\vspace{-0.3cm}
\label{fig:region with diff theta}
	\endminipage
	\vspace*{-0.0cm}
\end{figure*}

In Fig.~\ref{fig:region with diff theta}, the achievable rate region is plotted for different phase of the relative backscatter channel, $\theta$.
We see that the shape of the rate region changes with $\theta$. 
Specifically,
the achievable rate of $\bm X_1$, $R_1$, is larger when we have a smaller $\theta$ and the rate requirement of the Tag is small (e.g., $R_2 < 0.1$);
the achievable rate of $\bm X_1$ is larger when we have a larger $\theta$ (e.g., $\theta = \pi/2$) and the rate requirement of the Tag is large (e.g., $R_2 > 0.25$).

In Fig.~\ref{fig:region with diff snr},  the achievable rate region is plotted for different $\snr$. 
We see that the achievable rate region enlarges with $\snr$.
Moreover, the region is strictly convex in the practical range of the SNR, i.e., $\snr\in [0,30]$~dB, 
which means that the multiplicative multiple-access scheme is always able to improve the rate performance of the system. 
Also, it is observed that the shape of the rate region is more like a rectangle when $\snr$ is very large, e.g., $\snr = 30$~dB.
In other words, the multiplicative multiple-access scheme has a significant rate improvement of the system compared with the conventional TDMA scheme in the high SNR scenario.

\begin{figure}[t]
	\renewcommand{\captionfont}{\small} \renewcommand{\captionlabelfont}{\small}	
	\centering
	\includegraphics[scale=0.6]{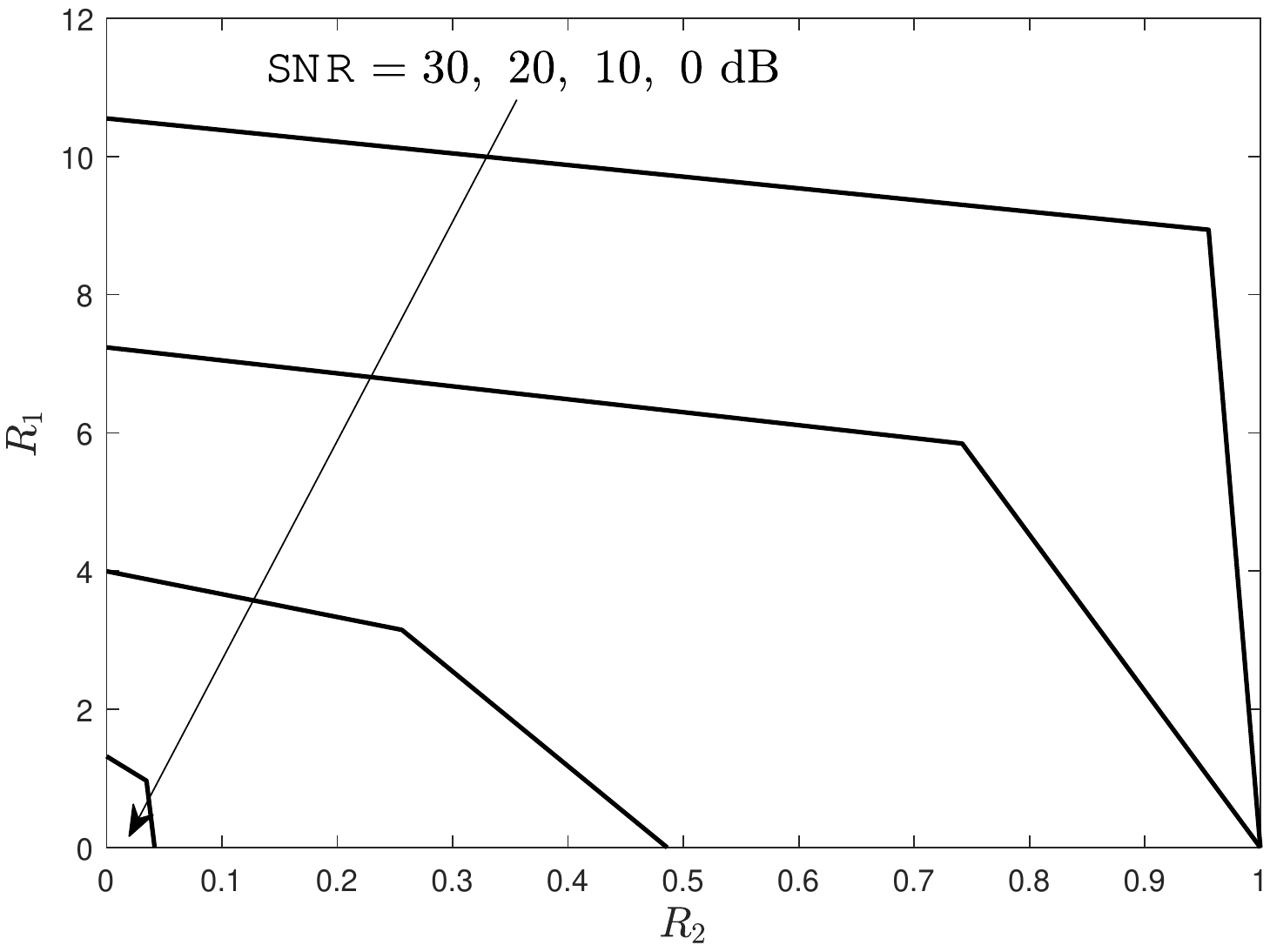}
	\caption{Achievable rate region with different $\snr$.}
	\vspace{-0.1cm}
	\label{fig:region with diff snr}
\end{figure}

\section{Detection Design and Analysis of Multiplicative Multiple-Access Systems}\label{sec:asyn}
In the section, we focus on a practical multiple-access communication based on the following assumptions:

\begin{itemize}
	\item The Tx adopts a commonly used practical $M$-ary modulation scheme with finite alphabet rather than an ideal Gaussian signal. The symbols of the Tx are i.i.d.. The CSI of the Tx is fixed and perfectly known at the Rx.
	\item The Tag adopts a noncoherent modulation scheme, i.e., only the amplitude of its backscattered signal carries information. Since the magnitude of the backscatter channel coefficient often varies much slower than its phase, we assume that only partial CSI of the backscatter channel, i.e., only the magnitude information, is known and fixed at the Rx. We assume that the phase of the channel coefficient keeps constant within each Tag symbol duration and varies symbol-by-symbol. The phase, $\Theta$, follows uniform distribution with the support $[0,2\pi)$.
	\item The Tx is perfectly synchronized with the Rx, however, the Tag may or may not be perfectly synchronized with the Rx depending on its clock accuracy, microcontroller complexity and circuit operating frequency. Therefore, the signals sent from the Tx and the Tag may or may not arrive at the Rx in a symbol-level alignment manner referring to the \emph{synchronous and asynchronous scenarios}, respectively.
	\item The direct channel power gain is much stronger than that of the backscatter channel, i.e., $\vert g \vert^2  \ll 1$.
\end{itemize}

\subsection{Joint Detection Method} \label{sec:detection method}
For the Rx design of the multiplicative multiple-access system, the main challenge is how to effectively detect both the signals sent from the Tx and the Tag.
It is well-known that the maximum-likelihood (ML) method is the optimal, i.e.,
\begin{equation}
\left\lbrace  \hat{\bm x}_1, \hat{x}_2 \right \rbrace = \arg\max_{\bm x_1, x_2} f_{\bm Y}(\bm y \vert \bm x_1, x_2),
\end{equation}
where $\hat{\bm x}_1$ and $\hat{x}_2$ are detected symbols of the Tx and the Tag, respectively, and $f_{\bm Y}(\bm y \vert \bm x_1, x_2)$ is the conditional joint probability density function of $\bm Y$, which takes into account the uncertainty of the noise and the backscatter channel.
However, the decoding complexity of such a method is very high if not impossible, especially when the Tx adopts a higher-order modulation scheme and the length of the sequence $\bm Y$, i.e., $N$, is very large.
For example, assuming $M$-ary modulated $X_1$, the detection complexity of the sequence $\bm Y$ has an order of $\mathcal{O}\left(M^N\right)$.
Thus, we aim to design a low-complexity decoding method that is able to provide a desirable performance.

The decoding method contains two steps:
\begin{itemize}
	\item Detection of $X_1$. The main reason for detecting $X_1$ first is that $X_1$ is a much stronger signal than $X_2$ in practice.	
	Due to the unknown phase of the channel coefficient, $\theta$, and its uniform distribution, the Rx can simply make the detection of $X_1$ using the conventional decision region of the $M$-ary modulation scheme in a symbol-by-symbol manner.

    For example, if $X_2 = 0$, $X_2$ has zero interference in the detection of $X_1$.
    If $X_2 \neq 0$, the interference may occur.	
	Specifically, assuming QPSK and on/off modulation schemes for the Tx and the Tag, respectively, different phase of the relative backscatter channel coefficient results in different received constellation at the Rx (see \eqref{M-MAC channel2}), as illustrated in Fig.~\ref{fig:rotation constellation}.
	If $\theta = 0$, the original decision region for $X_1$ is the optimal one, and $X_2$ has no interference in detecting $X_1$. If $\theta = \pi/4$, the received constellation is rotated, and hence, the original decision region is no longer the optimal detection region for $X_1$, and $X_2$ has interference in detecting $X_1$.
	
	\item Detection of $X_2$. The Rx first removes the additive interference based on $\hat{\bm X}_1$ assuming a successful detection, i.e., $\hat{\bm X}_1 = {\bm X}_1$, and then uses the MRC, which is discussed in Sec.~\ref{sec:R2}, and the power-detection based noncoherent detection method to detect $X_2$ since the Tag adopts a noncoherent modulation method.

\end{itemize}

\begin{figure}[t]
	\renewcommand{\captionfont}{\small} \renewcommand{\captionlabelfont}{\small}	
	\centering
	\begin{tikzpicture}[scale = 0.5]
	
\draw [ultra thick,->] (-5,0) -- (5,0);
\draw [ultra thick,->] (0,-4.5) --(0,5);
\draw (-4,-4)--(4,4);
\draw (4,-4)--(-4,4);

\node [fill,circle,blue,scale=0.5] (v10) at (2.5,2.5) {};
\node [fill,rectangle,red,scale=0.7] (v14) at (3.5,3.5) {};
\node [fill,circle,blue,scale=0.5] (v9) at (-2.5,2.5) {};
\node [fill,rectangle,red,scale=0.7] (v13) at (-3.5,3.5) {};
\node [fill,circle,blue,scale=0.5] (v12) at (-2.5,-2.5) {};
\node [fill,rectangle,red,scale=0.7] (v16) at (-3.5,-3.5) {};
\node [fill,circle,blue,scale=0.5] (v11) at (2.5,-2.5) {};
\node [fill,rectangle,red,scale=0.7] (v15) at (3.5,-3.5) {};

\node at (4.5,-0.5) {Real};
\node at (1.2,4.5) {Imag};

\draw [ultra thick,->]  (8,0) -- (18,0);
\draw[ultra thick,->]  (13,-4.5) --(13,5);
\draw (9,-4)--(17,4);
\draw (17,-4)--(9,4);

\node [fill,circle,blue,scale=0.5] (v2) at (15.5,2.5) {};
\node [fill,rectangle,red,scale=0.7] (v6) at (15.5,3.7) {};
\node [fill,circle,blue,scale=0.5] (v1) at (10.5,2.5) {};
\node [fill,rectangle,red,scale=0.7] (v5) at (9.3,2.5) {};
\node [fill,circle,blue,scale=0.5] (v4) at (10.5,-2.5) {};
\node [fill,rectangle,red,scale=0.7] (v8) at (10.5,-3.7) {};
\node [fill,circle,blue,scale=0.5] (v3) at (15.5,-2.5) {};
\node [fill,rectangle,red,scale=0.7] (v7) at (16.7,-2.5) {};
\draw (v1) -- (v2) -- (v3) -- (v4) -- (v1);
\draw [dashed] (v5) -- (v6) -- (v7) -- (v8) -- (v5);
\draw (v9) -- (v10) -- (v11) -- (v12) -- (v9);
\draw [dashed] (v13) -- (v14) -- (v15) -- (v16) -- (v13);

\node at (17.5,-0.5) {Real};
\node at (14.2,4.5) {Imag};
\node at (0,-5.5) {(a) $\theta = 0$};
\node at (13,-5.5) {(b) $\theta = \pi/4$};
	\end{tikzpicture}
	\vspace{-0.6cm}
	\caption{Illustration of the received signal constellation per Tx symbol duration without noise. The Tx and the Tag adopt the QPSK and on/off (i.e., $0/1$) modulation schemes, respectively. The circled (blue) and squared (red) dots denote for the received signal constellation when $X_2 = 0$ and $1$, respectively.}
	\vspace{-0.5cm}	
	\label{fig:rotation constellation}
\end{figure}
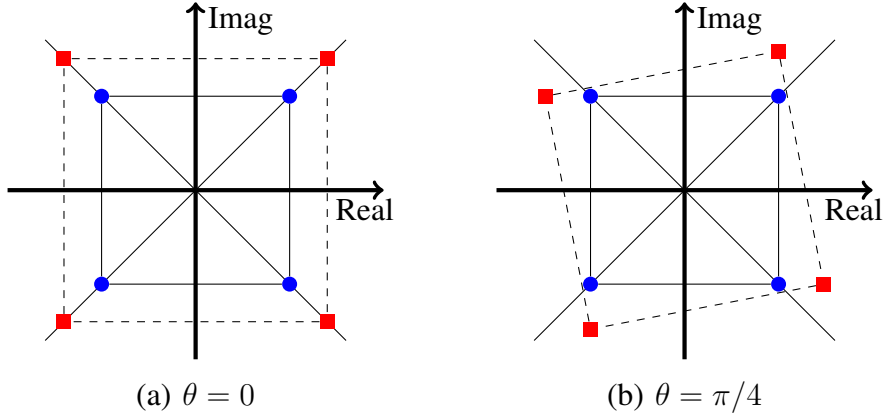

In the following, we analyze the detection error rates of $X_1$ and $X_2$ in the synchronous and asynchronous transmission scenarios.
For brevity, we assume that the Tx and the Tag adopt certain modulation schemes, i.e., the QPSK and on/off modulation schemes, respectively. 
Nevertheless, the analysis frame work is general and applicable for any $M$-ary modulated $X_1$ and noncoherently modulated $X_2$. Note that the on/off modulation scheme is commonly adopted in the standardized backscatter tags due to its simplicity and provision of a larger energy harvesting rate compared with the BPSK scheme in general.
The performance metrics of the detection error rates of $X_1$ and $X_2$ are the symbol-error rate (SER) and the bit-error rate (BER), respectively.  

\subsection{Error Rate Analysis: Synchronous Transmissions} \label{sec:syn}
Assuming that the Tx and the Tag are perfectly synchronized, we have the following results.
\subsubsection{Symbol Error Rate of $X_1$}
The SER of $X_1$ can be written as 
\begin{equation}\label{SER_X1_general}
\begin{aligned}
\myprobability{X_1 \neq \hat{X}_1} 
&= \myexpect{X_2}{\myprobability{X_1 \neq \hat{X}_1 \vert X_2}} 
= \frac{1}{2} \left(\myprobability{X_1 \neq \hat{X}_1 \vert X_2 = 0} + \myprobability{X_1 \neq \hat{X}_1 \vert X_2 = 1}\right).
\end{aligned}
\end{equation}
When $X_2 = 0$, the SER of $X_1$ is the same with that of the conventional QPSK modulation scheme, i.e.,
\begin{equation} \label{SER_X1_conditional_0}
\myprobability{X_1 \neq \hat{X}_1 \vert X_2 = 0} = 2 Q\left(\sqrt{\snr}\right) - Q^2\left(\sqrt{\snr}\right),
\end{equation}
When $X_2 = 1$, the SER of $X_1$ depends on the distribution of $\Theta$, and we have
\begin{align}
&\myprobability{X_1 \neq \hat{X}_1 \vert X_2 = 1} 
= \myexpect{\Theta}{\left.\myprobability{X_1 \neq \hat{X}_1 \vert X_2 = 1} \right\rvert \Theta}\\
\label{region}
&= \myexpect{\Theta}{
	Q\left(\sqrt{2 \snr }\left(\frac{1}{\sqrt{2}}+\vert g \vert \sqrt{\rho} \cos(\Theta + \frac{\pi}{4}) \right)\right) +Q\left(\sqrt{2 \snr }\left(\frac{1}{\sqrt{2}}+\vert g \vert \sqrt{\rho} \sin(\Theta + \frac{\pi}{4}) \right)\right)\right. \notag \\
&\hspace{1cm}\left.-Q\left(\sqrt{2 \snr }\left(\frac{1}{\sqrt{2}}+\vert g \vert \sqrt{\rho} \cos(\Theta + \frac{\pi}{4}) \right)\right)
Q\left(\sqrt{2 \snr }\left(\frac{1}{\sqrt{2}}+\vert g \vert \sqrt{\rho} \sin(\Theta + \frac{\pi}{4}) \right)\right)
}
\end{align}
\begin{align}
&= \frac{1}{\pi } 
\int_{0}^{2\pi}Q\left(\sqrt{2 \snr }\left(\frac{1}{\sqrt{2}}+\vert g \vert \sqrt{\rho} \cos(\theta) \right)\right) \mathrm d \theta \notag\\
&-
\frac{1}{2 \pi }
\int_{0}^{2\pi} Q\left(\!\sqrt{2 \snr }\left(\!\frac{1}{\sqrt{2}}+\vert g \vert \sqrt{\rho} \cos(\theta) \!\right)\!\right) 
Q\left(\!\sqrt{2 \snr }\left(\!\frac{1}{\sqrt{2}}+\vert g \vert \sqrt{\rho} \sin(\theta) \!\right)\!\right)
\mathrm d \theta\\
\label{SER_X1_conditional}
&\triangleq \mathcal{M}\left(\snr,\vert g \vert \sqrt{\rho}\right),
\end{align}
where \eqref{region} is due to the position of the received constellation points and the original decision regions [e.g., Fig.~\ref{fig:rotation constellation}(b)].
Then, taking \eqref{SER_X1_conditional_0} and \eqref{SER_X1_conditional} into \eqref{SER_X1_general}, we can obtain the following
result.
\begin{proposition} \label{SER_accurate}
	\normalfont
The SER of $X_1$ is given by
\begin{equation} \label{SER_accurate2}
\myprobability{X_1 \neq \hat{X}_1} 
= \frac{1}{2} \left(
2 Q\left(\sqrt{\snr}\right) - Q^2\left(\sqrt{\snr}\right)
+
\mathcal{M}\left(\snr,\vert g \vert \sqrt{\rho}\right)
\right).
\end{equation}
\end{proposition}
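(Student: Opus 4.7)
The plan is to condition on the Tag symbol $X_2$ and apply the law of total probability. Since the on/off constellation is equiprobable, $\myprobability{X_1 \neq \hat X_1}$ decomposes as the $(1/2,1/2)$ mixture \eqref{SER_X1_general} of the conditional SERs under $X_2=0$ and $X_2=1$; I would then analyze each summand separately and combine at the end.

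For $X_2=0$, the channel model \eqref{M-MAC channel2} collapses to the plain QPSK-in-AWGN model $\bm Y = \sqrt{\myP}\bm X_1 + \bm Z$ with signal-to-noise ratio $\snr$, so I would simply invoke the textbook QPSK SER expression $2Q(\sqrt{\snr}) - Q^2(\sqrt{\snr})$. The inclusion--exclusion here is standard: the two orthogonal decision boundaries produce independent in-phase and quadrature binary errors, each with crossover $Q(\sqrt{\snr})$.

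For $X_2=1$, the effective scalar multiplying $\bm X_1$ becomes $(1 + |g|\sqrt{\rho}\,e^{j\Theta})\sqrt{\myP}$, which rotates and rescales the QPSK points by a random complex factor. Conditioning on $\Theta$, I would compute the signed distances of each rotated transmitted point to the two axes that delimit the Rx's original (unrotated) QPSK decision region, and argue geometrically that these distances equal $\sqrt{\myP}(1/\sqrt{2} + |g|\sqrt{\rho}\cos(\Theta + \pi/4))$ and $\sqrt{\myP}(1/\sqrt{2} + |g|\sqrt{\rho}\sin(\Theta + \pi/4))$; independence of the in-phase and quadrature Gaussian noise components then yields a correct-decision probability that factors as $(1-Q_1)(1-Q_2)$, giving the conditional SER $Q_1+Q_2-Q_1Q_2$ shown in \eqref{region}. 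I would then exploit the fact that $\Theta$ is uniform on $[0,2\pi)$ and perform the change of variables $\theta \leftarrow \Theta + \pi/4$ (which leaves the uniform distribution invariant modulo $2\pi$) to recover the integral form $\mathcal{M}(\snr,|g|\sqrt{\rho})$. Substituting \eqref{SER_X1_conditional_0} and \eqref{SER_X1_conditional} into \eqref{SER_X1_general} then yields \eqref{SER_accurate2}.

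The main obstacle is the geometric step in the $X_2=1$ case, namely verifying that the four rotated constellation points indeed have the signed distances written above and, crucially, that none of them crosses into a neighboring Voronoi cell of the undistorted QPSK grid so that the single-pair $(Q_1,Q_2)$ decomposition remains valid. Under the standing assumption $|g|^2 \ll 1$ this is immediate since $|g|\sqrt{\rho} < 1/\sqrt{2}$ keeps every rotated point strictly inside its original quadrant, which makes the probability-of-correct-decision factorization rigorous. Everything else is bookkeeping: symmetry of the QPSK symbols under the Rx's uniform prior is what lets the expression depend on the transmitted symbol only through the pair of axis distances, and the outer expectation over $\Theta$ is a single one-dimensional integral.
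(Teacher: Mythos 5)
Your proposal follows essentially the same route as the paper: condition on $X_2$, use the textbook QPSK SER $2Q(\sqrt{\snr})-Q^2(\sqrt{\snr})$ for $X_2=0$, and for $X_2=1$ compute the conditional SER given $\Theta$ from the signed distances of the rotated constellation point to the original decision boundaries, factor the correct-decision probability via I/Q noise independence, and average over the uniform phase with the shift $\theta \leftarrow \Theta+\pi/4$. The only difference is your extra worry about points crossing into neighboring Voronoi cells, which is unnecessary: the inclusion--exclusion expression $Q_1+Q_2-Q_1Q_2$ with \emph{signed} distances remains exact even if a rotated point leaves its quadrant, since $Q(\cdot)$ of a negative argument correctly exceeds $1/2$.
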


As there is no closed-form SER above, based on \eqref{SER_X1_conditional}, we further derive the easy-to-compute upper and lower bounds of \eqref{SER_accurate2} using the inequalities that $-1\leq \cos(\theta + \pi/4), \sin(\theta + \pi/4)\leq 1$, and we have
\begin{equation} \label{first_SER_UB}
\myprobability{X_1 \!\neq\! \hat{X}_1 \vert\! X_2 \!=\! 1} \!\!< \!
\overline{\mathcal{M}}\left(\snr,\vert g \vert \sqrt{\rho}\right)
\!\triangleq\!
2 Q\!\left(\!\!\sqrt{2 \snr }\left(\!\frac{1}{\sqrt{2}}\!-\!\vert g \vert \sqrt{\rho} \!\right)\!\right)
\!-Q^2\!\left(\!\!\sqrt{2 \snr }\left(\!\frac{1}{\sqrt{2}}\!+\!\vert g \vert \sqrt{\rho} \!\right)\!\right)\!.
\end{equation}

Moreover, the conditional SER in \eqref{SER_X1_conditional} is minimized when $\Theta = 0$, since the detection region of $X_1$ is correct and the effective SNR for $X_1$ is maximized. Therefore, we have
\begin{equation} \label{first_SER_LB}
\myprobability{\!X_1 \!\neq\! \hat{X}_1 \vert\! X_2 \!=\! 1} \!\!>\! 
\underline{\mathcal{M}}\left(\snr,\vert g \vert \sqrt{\rho}\right)
\!\triangleq\!
2 Q\!\left(\!\!\sqrt{2 \snr }\left(\!\frac{1}{\sqrt{2}}\!+\!\vert g \vert \sqrt{\rho}\! \right)\!\right) \!-\! Q^2\left(\!\!\sqrt{2 \snr }\left(\!\frac{1}{\sqrt{2}}\!+\!\vert g \vert \sqrt{\rho} \!\right)\!\right)\!.\!
\end{equation}

Therefore, taking \eqref{first_SER_UB} and \eqref{first_SER_LB} into \eqref{SER_X1_general}, we have the following results.
\begin{proposition} \label{X1_Bound}
	\normalfont	
	The upper and lower bounds of the SER of $X_1$, i.e., $\PUB$ and $\PLB$, are given by, respectively, 
	\begin{equation} \label{prop:lb}
	\begin{aligned}
		& P^{\text{UB}}_1 
		& \triangleq Q\left(\sqrt{\snr}\right)- \frac{1}{2}Q^2\left(\sqrt{\snr}\right) + \overline{\mathcal{M}}\left(\snr,\vert g \vert \sqrt{\rho}\right),\\
		& P^{\text{LB}}_1 
& \triangleq Q\left(\sqrt{\snr}\right)- \frac{1}{2}Q^2\left(\sqrt{\snr}\right) + \underline{\mathcal{M}}\left(\snr,\vert g \vert \sqrt{\rho}\right).		
	\end{aligned}
	\end{equation}
\end{proposition}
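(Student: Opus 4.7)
The plan is to obtain $P^{\text{UB}}_1$ and $P^{\text{LB}}_1$ by substituting the already-derived conditional bounds~\eqref{first_SER_UB} and~\eqref{first_SER_LB} into the total-probability decomposition~\eqref{SER_X1_general}. First I would use the uniform prior on $X_2 \in \{0,1\}$ to write the SER of $X_1$ as $\tfrac{1}{2}\myprobability{X_1 \neq \hat{X}_1 \vert X_2 = 0} + \tfrac{1}{2}\myprobability{X_1 \neq \hat{X}_1 \vert X_2 = 1}$, and plug in the closed-form Gray-coded QPSK expression~\eqref{SER_X1_conditional_0} for the first term; this term is exact because $X_2 = 0$ annihilates the backscatter contribution, so the received constellation coincides with the unperturbed QPSK one and the original decision region is ML-optimal.

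For the second conditional term I would invoke the one-sided bounds $\underline{\mathcal{M}}(\snr,\vert g \vert \sqrt{\rho}) < \myprobability{X_1 \neq \hat{X}_1 \vert X_2 = 1} < \overline{\mathcal{M}}(\snr,\vert g \vert \sqrt{\rho})$ established immediately before the statement. The upper bound uses $-1 \le \cos(\theta),\sin(\theta) \le 1$ inside~\eqref{region} combined with the monotone decrease of $Q(\cdot)$: the sinusoids inside the first-order $Q$ terms are pushed to $-1$ so that those terms are maximised, while the sinusoids inside the $Q(\cdot)Q(\cdot)$ cross-term (which enters with a minus sign) are pushed to $+1$ so that the subtracted quantity is made as small as possible. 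The lower bound follows from the observation that the conditional SER is minimised at $\Theta = 0$: at that angle the original QPSK decision region is still ML-optimal for the shifted constellation and the per-branch effective signal amplitudes are maximal.

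Combining the exact first-term expression with these two conditional bounds then reduces the remaining work to a one-line algebraic rearrangement, which produces $P^{\text{UB}}_1$ and $P^{\text{LB}}_1$ in the stated form. The only step I would flag as non-routine, and hence the main obstacle, is the validity of $\overline{\mathcal{M}}$ as a pointwise upper bound on the integrand of~\eqref{region}: because the cross-term has a minus sign, the two opposite sign choices simultaneously used for the first-order and the cross terms must be checked to yield a genuine pointwise upper bound and not just a bound applied term-by-term. A safe fallback, should that monotonicity check be tight enough to fail, is to drop the cross-term entirely via $Q(\cdot)Q(\cdot) \ge 0$, which still produces a valid closed-form upper bound of the same structural form, so the conclusion goes through either way.
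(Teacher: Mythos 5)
Your proposal is exactly the paper's own argument: substitute the conditional bounds \eqref{first_SER_UB} and \eqref{first_SER_LB}, together with the exact expression \eqref{SER_X1_conditional_0}, into the decomposition \eqref{SER_X1_general}; the cross-term concern you flag is harmless because that term enters with a minus sign and is independently lower-bounded by $Q^2\left(\sqrt{2\snr}\left(\tfrac{1}{\sqrt{2}}+\vert g\vert\sqrt{\rho}\right)\right)$, so the term-by-term bound is a genuine pointwise bound on the integrand of \eqref{region}. One remark (a discrepancy in the paper's statement rather than in your reasoning): carrying the algebra through literally yields coefficients $\tfrac12\overline{\mathcal{M}}$ and $\tfrac12\underline{\mathcal{M}}$ rather than $\overline{\mathcal{M}}$ and $\underline{\mathcal{M}}$, which is immaterial for the upper bound but means the displayed $\PLB$ is larger than what the derivation actually establishes.
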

\begin{remark}
Both the upper and lower bounds are tight in the typical case that $\vert g \vert^2 \ll 1$, and the SER converges to the conventional one without backscatter transmissions.
\end{remark}

From Proposition~\ref{X1_Bound}, we have the following asymptotic upper and lower bounds in the high SNR scenario using the property that $Q(a x) \ll Q(b x)$ when $x \gg 1$ and $0<b<a$.
\begin{corollary} \label{SER_approx}
	\normalfont	
	In the high SNR scenario, the SER of $X_1$ is approximately bounded by
	\begin{equation}
	Q\left(\sqrt{\snr}\right)
	\leq
	\myprobability{X_1 \neq \hat{X}_1}
	\leq
	Q\left(\sqrt{2 \snr }\left(\frac{1}{\sqrt{2}}-\vert g \vert \sqrt{\rho} \right)\right).
	\end{equation}
\end{corollary}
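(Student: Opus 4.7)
The plan is to invoke Proposition~\ref{X1_Bound} and reduce its two-sided sandwich to leading exponential order in $\snr$, by keeping only the dominant $Q$-function on each side. The key analytic input is the hinted tail estimate $Q(ax) = o(Q(bx))$ as $x\to\infty$ whenever $0 < b < a$, which follows directly from the Mills-ratio asymptotic $Q(x)\sim \frac{1}{x\sqrt{2\pi}}e^{-x^{2}/2}$. Among the three distinct arguments that appear inside the $Q$-functions in $\PUB$ and $\PLB$, the ordering is $\sqrt{2\snr}(\tfrac{1}{\sqrt{2}}-\vert g\vert\sqrt{\rho}) < \sqrt{\snr} < \sqrt{2\snr}(\tfrac{1}{\sqrt{2}}+\vert g\vert\sqrt{\rho})$, so in each of the two bounds exactly one $Q$-term is asymptotically dominant.

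For the lower-bound side, I would regroup $\PLB = Q(\sqrt{\snr}) + 2Q(\sqrt{2\snr}(\tfrac{1}{\sqrt{2}}+\vert g\vert\sqrt{\rho})) - \tfrac{1}{2}Q^{2}(\sqrt{\snr}) - Q^{2}(\sqrt{2\snr}(\tfrac{1}{\sqrt{2}}+\vert g\vert\sqrt{\rho}))$ and observe that $Q(\sqrt{\snr})$ has the strictly smallest argument among the $Q$-terms, while the two $Q^{2}$ terms have effectively doubled exponents and are therefore negligible. Hence $\PLB = Q(\sqrt{\snr})(1+o(1))$, so $\myprobability{X_{1}\neq\hat{X}_{1}} \geq \PLB \gtrsim Q(\sqrt{\snr})$ in the high-SNR limit. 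For the upper-bound side, the same comparison applied to $\PUB$ shows that the smallest argument lives inside $2Q(\sqrt{2\snr}(\tfrac{1}{\sqrt{2}}-\vert g\vert\sqrt{\rho}))$, which therefore dominates $Q(\sqrt{\snr})$ and the $Q^{2}$ terms, giving $\PUB = 2Q(\sqrt{2\snr}(\tfrac{1}{\sqrt{2}}-\vert g\vert\sqrt{\rho}))(1+o(1))$.

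The only delicate point — and the one I expect to be the main thing to flag — is the stray factor of $2$ on the upper side: strictly, the high-SNR limit of $\PUB$ is $2Q(\sqrt{2\snr}(\tfrac{1}{\sqrt{2}}-\vert g\vert\sqrt{\rho}))$, whereas the corollary writes it without the $2$. This is absorbed into the qualifier \emph{approximately}, reflecting the fact that a constant prefactor does not alter the dominant exponential decay $\exp(-\snr(\tfrac{1}{\sqrt{2}}-\vert g\vert\sqrt{\rho})^{2})$; the corollary should therefore be read as capturing the correct $Q$-function exponent on each side, which is the quantity of engineering interest (error-floor slope / diversity). If a strict inequality is preferred, one may equivalently replace the right-hand side by $2Q(\sqrt{2\snr}(\tfrac{1}{\sqrt{2}}-\vert g\vert\sqrt{\rho}))$, in which case the sandwich holds up to $1+o(1)$ factors rather than ``approximately.''
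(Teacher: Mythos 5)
Your argument is essentially identical to the paper's: the paper's entire proof of this corollary is the one-line appeal to $Q(ax)\ll Q(bx)$ for $x\gg 1$ and $0<b<a$, applied to the bounds of Proposition~\ref{X1_Bound}, and your identification of the dominant $Q$-term on each side (smallest argument wins, squared terms have doubled exponents and drop out) is exactly that reduction, carried out correctly. The factor of $2$ you flag as the delicate point is not something absorbed by the word ``approximately''; it is an artifact of a dropped $\tfrac{1}{2}$ in the printed statement of Proposition~\ref{X1_Bound}: substituting \eqref{first_SER_UB} and \eqref{first_SER_LB} into \eqref{SER_X1_general} actually gives
\begin{equation*}
\PUB = Q\left(\sqrt{\snr}\right)-\tfrac{1}{2}Q^2\left(\sqrt{\snr}\right)+\tfrac{1}{2}\,\overline{\mathcal{M}}\left(\snr,\vert g \vert \sqrt{\rho}\right),
\end{equation*}
and likewise for $\PLB$, so with that weight restored the leading term of the upper bound is exactly $Q\left(\sqrt{2\snr}\left(\tfrac{1}{\sqrt{2}}-\vert g\vert\sqrt{\rho}\right)\right)$ and the corollary holds with no stray constant.
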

\begin{remark}
The SER of $X_1$ in the multiplicative multiple-access system has almost the same decay rate in terms of the $\snr$ with the one without the Tag's interference, when $\vert g \vert^2 \ll 1$.
\end{remark}

\subsubsection{Bit Error Rate of $X_2$} \label{sec:BER1}
The BER of $X_2$ can be written as
\begin{equation}\label{BER1}
\begin{aligned}
\myprobability{X_2 \neq \hat{X}_2} 
&\!=\! \myprobability{\!X_2 \neq \hat{X}_2, \bm X_1 \!=\! \hat{\bm X}_1} \!+\! \myprobability{X_2 \neq  \hat{X}_2 \bm X_1 \neq \hat{\bm X}_1}.
\end{aligned}
\end{equation}
We calculate the first and second terms of the right side of \eqref{BER1} as follows:

\ul{If $\bm X_1$ is successfully detected}, after canceling the additive interference, performing MRC and scaling by the factor $\sqrt{\myP} \vert g \vert \sqrt{\rho} \vert \bm X_1 \vert \sigma$, the equivalent received signal, $\tilde{Y}$, is given by
\begin{equation} \label{tilde_Y_1}
\begin{aligned}
\tilde{Y}
&= e^{j \Theta} \sqrt{\snr} \vert g \vert \sqrt{\rho} \vert \bm X_1 \vert X_2
+ \frac{1}{\vert \bm X_1\vert \sigma} \sum_{i=1}^{N} X^*_{1,i} Z_i
= e^{j \Theta} \sqrt{\snr} \vert g \vert \sqrt{\rho} \vert \bm X_1 \vert X_2
+ \tilde{Z},
\end{aligned}
\end{equation}
where ${Z}_i \sim \mathcal{CN}\left(0,\sigma^2\right)$ and $\tilde{Z} \sim \mathcal{CN}\left(0,1\right)$.
Given $\Theta = \theta$, we further have
\begin{equation}
\tilde{Y}= \left\lbrace 
\begin{aligned}
&\tilde{Z} \sim \mathcal{CN}\left(0, 1 \right),&&X_2=0\\
&e^{j \theta} \sqrt{\snr} \vert g \vert \sqrt{\rho} \vert \bm X_1 \vert
+ \tilde{Z}
\sim \mathcal{CN}\left(e^{j \theta} \sqrt{\snr} \vert g \vert \sqrt{\rho} \vert \bm X_1 \vert, 1\right)
,&&X_2=1
\end{aligned}
\right.
\end{equation}
The optimal detector is power/energy detection based since $\theta$ is the unknown parameter~\cite{Shamai}, and we further define
\begin{equation}\label{chi-chi2}
\Xi \triangleq 2 \vert \tilde{Y} \vert^2
\sim
\left\lbrace 
\begin{aligned}
&\chi^2\left(2\right),&&X_2=0\\
& \text{noncentral } \chi^2\left(2, \lambda\right)
,&&X_2=1
\end{aligned}
\right.
\end{equation}
where 
\vspace{-0.3cm}
\begin{equation}\label{chi-chi2-lambda}
\lambda \triangleq 2 \snr \vert g \vert^2 \rho \vert \bm X_1\vert^2
=2 \snr \vert g \vert^2 \rho N.
\end{equation}
The pdfs of the $\chi^2(2)$ and $\text{noncentral } \chi^2 \left(2,\lambda \right)$ distribution are given by as, respectively,
\begin{align}
f_{\chi^2}\left(x,2\right) & = \frac{1}{2}e^{-\frac{x}{2}},\ 
f_{\text{nc-}\chi^2}\left(x,2,\lambda\right)  = \frac{1}{2} e^{-(x+\lambda)/2}I_0\left(\sqrt{\lambda x}\right),
\end{align}
where $I_\nu(\cdot)$ is the modified Bessel function of the first kind.

Assume that $\Lambda(\lambda)$ is the turning point of the sign of  $\left(f_{\chi^2}\left(x,2\right)-f_{\text{nc-}\chi^2}\left(x,2,\lambda\right)\right)$ as illustrated in Fig.~\ref{fig:chi2},~i.e., $f_{\chi^2}\left(x,2\right) >  f_{\text{nc-}\chi^2}\left(x,2,\lambda\right)$ when $0 \leq x < \Lambda(\lambda)$, and $f_{\chi^2}\left(x,2\right) <  f_{\text{nc-}\chi^2}\left(x,2,\lambda\right)$ when $x > \Lambda(\lambda)$.
Although the function $\Lambda(\lambda)$ does not have closed-form expression, $\Lambda(\lambda)$ can be easily evaluated using numerical methods, such as the bisection method, when $\lambda$ is given. 

\begin{figure*}[t]
	\renewcommand{\captionfont}{\small} \renewcommand{\captionlabelfont}{\small}
	\minipage{0.5\textwidth}
	\centering
	\includegraphics[scale=0.55]{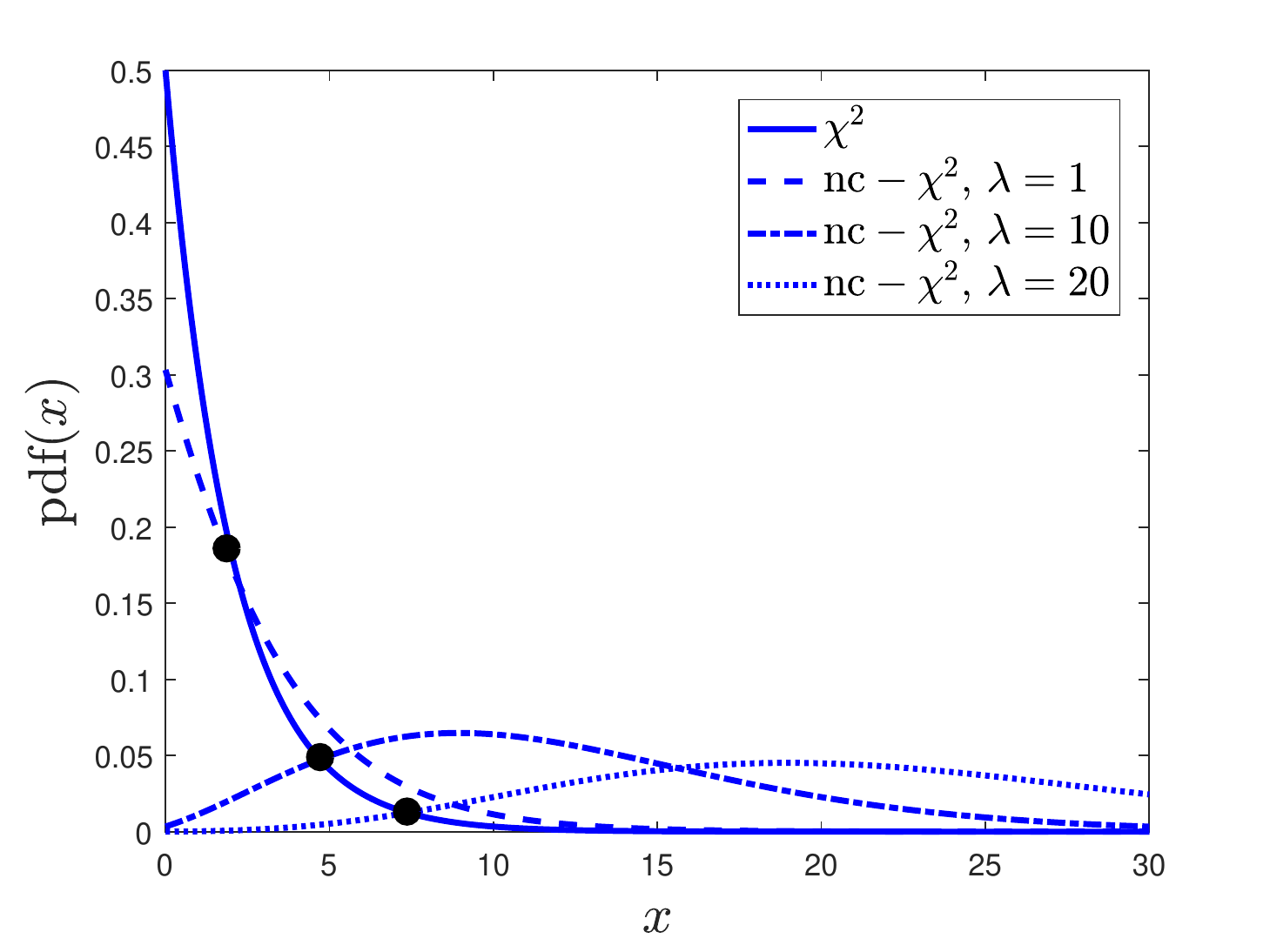}
\vspace{-0.6cm}
\caption{Illustration of the $\Lambda(\lambda)$, e.g., $\Lambda(1) = 2.25$, $\Lambda(10) = 4.71$ and $\Lambda(20) = 7.38$.}
\label{fig:chi2}
	\endminipage
	\hspace{0.3cm}
	\minipage{0.5\textwidth}
	\centering
	\includegraphics[scale=0.55]{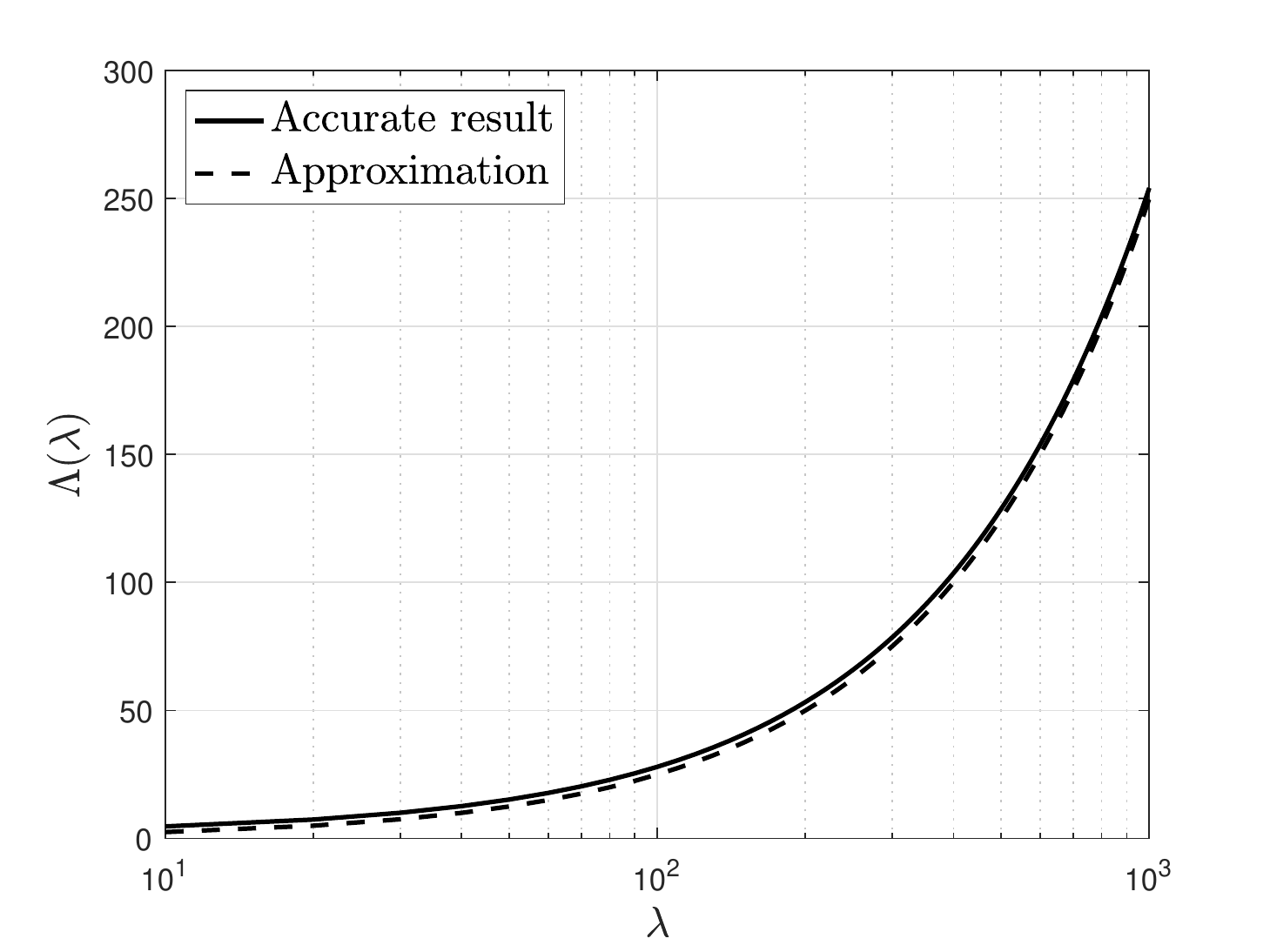}
\vspace{-0.6cm}
\caption{Verification of the approximation of the function $\Lambda(\lambda)$.}
\label{fig:verify_nice}
	\endminipage
	\vspace*{-0.0cm}
\end{figure*}

In the high SNR scenario, we have the following approximation.
\begin{corollary} \label{prop:nice}
	\normalfont
	In the high SNR scenario, the optimal detection threshold is $
	\Lambda\left(\lambda\right) = \frac{\lambda}{4}.
	$
\end{corollary}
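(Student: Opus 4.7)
The plan is to locate the crossing point of the two densities directly by solving $f_{\chi^2}(x,2) = f_{\text{nc-}\chi^2}(x,2,\lambda)$ and then extracting the leading-order behaviour as $\lambda \to \infty$ (which corresponds to the high SNR regime, since $\lambda = 2\,\snr\,|g|^2 \rho N$ grows linearly with $\snr$).

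First, I would equate the two densities given just above the statement. Cancelling the common factor $\tfrac{1}{2}e^{-x/2}$, the defining equation for $\Lambda(\lambda)$ reduces to
\begin{equation}\label{eq:defining}
I_0\!\left(\sqrt{\lambda x}\right) = e^{\lambda/2}.
\end{equation}
So the task is purely to invert the modified Bessel function $I_0$ at the value $e^{\lambda/2}$ in the large-$\lambda$ limit.

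Second, I would invoke the standard asymptotic expansion $I_0(z) \sim \dfrac{e^{z}}{\sqrt{2\pi z}}$ valid for $z \to \infty$. Substituting $z = \sqrt{\lambda x}$ into \eqref{eq:defining} and taking logarithms gives
\begin{equation}
\sqrt{\lambda x} \;-\; \tfrac{1}{2}\log\!\left(2\pi \sqrt{\lambda x}\right) \;=\; \frac{\lambda}{2}\,.
\end{equation}
As $\lambda \to \infty$, the logarithmic term is of lower order than $\sqrt{\lambda x}$, so the leading-order balance is $\sqrt{\lambda x} = \lambda/2 + o(\lambda)$, which upon squaring yields $x = \lambda/4 + o(\lambda)$. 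This is the claimed approximation $\Lambda(\lambda) = \lambda/4$.

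The one subtlety — and the part I would handle carefully — is justifying that only one crossing point is relevant in the high-SNR limit. In principle \eqref{eq:defining} could have multiple solutions, but $I_0$ is strictly increasing on $[0,\infty)$, so once we solve for $\sqrt{\lambda x}$ the solution in $x$ is unique for $x>0$; this matches the picture in Fig.~\ref{fig:chi2} where $\Lambda(\lambda)$ is the single sign-change point on $(0,\infty)$. I would also briefly note that the next-order correction coming from the $\tfrac{1}{2}\log(2\pi\sqrt{\lambda x})$ term scales like $(\log\lambda)/\sqrt{\lambda}$ relative to $\lambda/4$, which is $o(1)$ and hence consistent with ``approximately $\lambda/4$'' in the high SNR regime. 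The remainder of the verification is a routine substitution, matching the numerical values quoted in Fig.~\ref{fig:verify_nice}.
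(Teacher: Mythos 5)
Your proof is correct and follows essentially the same route as the paper's: both equate the two densities, cancel the common factors, and apply the large-argument asymptotic $I_0(z)\sim e^{z}/\sqrt{2\pi z}$ to reduce the crossing condition to $\sqrt{\lambda x}\approx \lambda/2$, hence $\Lambda(\lambda)\approx\lambda/4$. The only difference is cosmetic — the paper solves the resulting transcendental equation via the lower branch of the Lambert $W$ function before expanding, whereas you take logarithms directly and discard the subdominant $\tfrac{1}{2}\log\left(2\pi\sqrt{\lambda x}\right)$ term, which is arguably cleaner and also lets you address uniqueness via monotonicity of $I_0$ (one small quibble: the relative correction is $O(\log\lambda/\lambda)$ rather than $O(\log\lambda/\sqrt{\lambda})$, but it is still $o(1)$, so your conclusion stands).
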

\begin{proof}
	See Appendix~F.
\end{proof}

The verification of Corollary~\ref{prop:nice} is shown in Fig.~\ref{fig:verify_nice}, we see that the approximation is very tight when $\lambda > 10$.

Denote $\mathcal{H}_0$ and $\mathcal{H}_1$ as the hypotheses that $X_2=0$ and $1$, respectively. 
Thus, the optimal detector can be expressed as
\vspace{-0.3cm}
\begin{equation} \label{detection rule}
\Xi
 \mathrel{\mathop{ \lessgtr}^{\mathrm{\mathcal{H}_0}}_{\mathrm{\mathcal{H}_1}}} \Lambda\left(\lambda\right).
\end{equation}
\begin{remark}
	It can be observed from \eqref{chi-chi2} and \eqref{detection rule} that the detection error probability of $X_2$ directly depends on $\lambda$, and a larger $\lambda$ leads to a lower BER. Thus, $\lambda$ can be treated as the equivalent SNR for detecting $X_2$ at the Rx when $\bm X_1$ is successfully detected.
\end{remark}

Using the cdfs of the $\chi^2(2)$ and $\text{noncentral } \chi^2 \left(2,\lambda \right)$ distributions, i.e.,
\begin{align}\label{chi-cdf}
F_{\chi^2}\left(x,2\right) & = 1- e^{-\frac{x}{2}},\ 
F_{\text{nc-}\chi^2}\left(x,2,\lambda\right)  = 1-Q_1\left(\sqrt{\lambda},\sqrt{x}\right),
\end{align}
where $Q_M(a,b)$ is the Marum Q-function,
the error rate can be calculated as 
\begin{equation} 
\begin{aligned}
&\myprobability{X_2 \neq \hat{X}_2, \bm X_1 = \hat{\bm X}_1}
= \myprobability{X_2 \neq \hat{X}_2, X_2=0, \bm X_1 = \hat{\bm X}_1} + \myprobability{X_2 \neq \hat{X}_2, X_2=1, \bm X_1 = \hat{\bm X}_1}\\
&\!\!=\!\myprobability{\!X_2 \!\neq\! \hat{X}_2 \vert X_2\!=\!0,\! \bm X_1 \!=\! \hat{\bm X}_1\!}\!\! \myprobability{\!X_2\!=\!0,\! \bm X_1 \!=\! \hat{\bm X}_1\!} 
\!\!+\!\!\myprobability{\!X_2 \!\neq\! \hat{X}_2 \vert X_2\!=\!1,\! \bm X_1 \!=\! \hat{\bm X}_1\!}\!\! \myprobability{\!X_2\!=\!1,\! \bm X_1 \!=\! \hat{\bm X}_1\!} \notag
\end{aligned}
\end{equation}
\begin{equation} \label{BER-suc}
\begin{aligned}
&\!\!= \frac{1}{2}
\left(
\myprobability{\left. \Xi > \Lambda(\lambda) \right\vert X_2 = 0} 
\myprobability{\bm X_1 \!=\! \hat{\bm X}_1 \vert \!X_2\!=\!0} 
+\myprobability{\left. \Xi < \Lambda(\lambda) \right\vert X_2 = 1}
\myprobability{\bm X_1 \!=\! \hat{\bm X}_1 \vert \!X_2\!=\!1} 
\right)\\
&\!\!=\frac{1}{2}
\left(
\exp\left(- \frac{\Lambda(\lambda)}{2}\right)
\myprobability{\bm X_1 \!=\! \hat{\bm X}_1 \vert \!X_2\!=\!0} 
+
\left(1 - Q_{1}\left(\sqrt{\lambda},\sqrt{\Lambda(\lambda)}\right)\right)
\myprobability{\bm X_1 \!=\! \hat{\bm X}_1 \vert \!X_2\!=\!1} 
\right).
\end{aligned}
\end{equation}

\ul{If $\bm X_1$ is not successfully detected}, the additive interference cannot be canceled completely, and the MRC cannot provide the SNR gain as expected. Due to the residue interference, the BER is much worse compared with the case of successful detection of $\bm X_1$. Thus, we have
\begin{equation}\label{inequal}
\myprobability{X_2 \neq  \hat{X}_2, \bm X_1 = \hat{\bm X}_1}
<\myprobability{X_2 \neq  \hat{X}_2 \vert \bm X_1 = \hat{\bm X}_1}
<
\myprobability{X_2 \neq  \hat{X}_2 \vert \bm X_1 \neq \hat{\bm X}_1} < \frac{1}{2}.
\end{equation}
Moreover, the conditional sequence detection success rates of $X_1$ can be derived as 
\begin{align}
\myprobability{\bm X_1 = \hat{\bm X}_1 \vert X_2 = 0}
&= \left(1 - 2 Q\left(\sqrt{\snr}\right) + Q^2\left(\sqrt{\snr}\right)\right)^N,
\end{align}
\begin{align}\label{FER}
\myprobability{\bm X_1 = \hat{\bm X}_1 \vert X_2 = 1}
&= \myexpect{\Theta}{ \left.\myprobability{ \left. X_{1,i} = \hat{X}_{1,i}: \forall i \in \lbrace 1,2,...,N  \rbrace \right\vert X_2 = 1 } \right\rvert \Theta}\\ 
&=\myexpect{\Theta}{\left. \myprobability{\left. X_1= \hat{X}_1 \right\vert X_2 = 1}^N \right\rvert \Theta}.
\end{align}
Based on Proposition~\ref{X1_Bound} and the discussion above it, we can obtain 
\begin{equation} \label{X1_sequence_bound}
\left(1-\overline{\mathcal{M}}\left(\snr,\vert g \vert \sqrt{\rho}\right)\right)^N < \myprobability{\bm X_1 = \hat{\bm X}_1 \vert X_2 = 1} < \left(1-\underline{\mathcal{M}}\left(\snr,\vert g \vert \sqrt{\rho}\right)\right)^N.
\end{equation}

\ul{The BER of the Tag} is hence obtained by taking \eqref{inequal}, \eqref{X1_sequence_bound} and \eqref{BER-suc} into \eqref{BER1}, as follows:
\begin{proposition} \label{prop:ber_bounds}
	\normalfont
	The upper and lower bounds of the BER of $X_2$ are given by, respectively,
	\begin{equation} \label{prop:ber ub}
\begin{aligned}
	\PUBber 
	&=
	\frac{1}{2}
	\left(
	\exp\left(- \frac{\Lambda(\lambda)}{2}\right)
	+
	\left(1 - Q_{1}\left(\sqrt{\lambda},\sqrt{\Lambda(\lambda)}\right)\right)
	\right)
		\left(
	1- \underline{\mathcal{M}}\left(\snr,\vert g \vert \sqrt{\rho}\right)
	\right)^N	\\
	&+\frac{1}{2}\left(1 - \frac{1}{2} \left(\left(1 - 2 Q\left(\sqrt{\snr}\right) + Q^2\left(\sqrt{\snr}\right)\right)^N
	+
	\left(1-\overline{\mathcal{M}}\left(\snr,\vert g \vert \sqrt{\rho}\right)\right)^N
	\right)\right).
\end{aligned}
	\end{equation}
and	
\begin{equation} \label{prop:ber lb}
\hspace{-5cm}
\begin{aligned}
	\PLBber 
	&=
	\frac{1}{2}
	\left(
	\exp\left(- \frac{\Lambda(\lambda)}{2}\right) 
	\left(1 - 2 Q\left(\sqrt{\snr}\right) + Q^2\left(\sqrt{\snr}\right)\right)^N\right.\\
	&\left. + 
	\left(1 - Q_{1}\left(\sqrt{\lambda},\sqrt{\Lambda(\lambda)}\right)\right)
	\left(1-\overline{\mathcal{M}}\left(\snr,\vert g \vert \sqrt{\rho}\right)\right)^N 
	\right).
\end{aligned}
	\end{equation}	
\end{proposition}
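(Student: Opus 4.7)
The plan begins from the decomposition \eqref{BER1} of the BER into two mutually exclusive events according to whether the sequence $\bm X_1$ is recovered correctly. The successful-detection contribution admits the exact expression \eqref{BER-suc}, namely the average over $X_2\in\{0,1\}$ of a noncoherent-detection error factor (either $\exp(-\Lambda(\lambda)/2)$ or $1-Q_1(\sqrt{\lambda},\sqrt{\Lambda(\lambda)})$) multiplied by the conditional sequence-success probability $\myprobability{\bm X_1=\hat{\bm X}_1\vert X_2=i}$. The failure contribution does not have a tractable closed form because residual interference from misdetected $X_1$-symbols corrupts the MRC output; I would control it through the uniform bound $\myprobability{X_2\neq\hat{X}_2,\bm X_1\neq\hat{\bm X}_1}<\tfrac{1}{2}\myprobability{\bm X_1\neq\hat{\bm X}_1}$ coming from \eqref{inequal}.

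For $\PLBber$ I would simply discard the non-negative failure contribution and insert into \eqref{BER-suc} the exact value $(1-2Q(\sqrt{\snr})+Q^2(\sqrt{\snr}))^N$ for $\myprobability{\bm X_1=\hat{\bm X}_1\vert X_2=0}$, together with the lower side $(1-\overline{\mathcal{M}})^N$ of \eqref{X1_sequence_bound} for $\myprobability{\bm X_1=\hat{\bm X}_1\vert X_2=1}$. Because the noncoherent-error multipliers are non-negative, these substitutions preserve the $\geq$ direction and yield \eqref{prop:ber lb} directly.

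For $\PUBber$ I would bound the two contributions separately. In the successful-detection contribution, I would upper-bound both conditional sequence-success probabilities by the common quantity $(1-\underline{\mathcal{M}})^N$, which can then be pulled out of \eqref{BER-suc}; for $i=1$ this is the upper side of \eqref{X1_sequence_bound}, while for $i=0$ it requires the auxiliary inequality $(1-2Q(\sqrt{\snr})+Q^2(\sqrt{\snr}))^N\leq(1-\underline{\mathcal{M}})^N$, which I would justify using $\sqrt{2\snr}\bigl(1/\sqrt{2}+\vert g\vert\sqrt{\rho}\bigr)>\sqrt{\snr}$ and the strict monotonicity of $f(x)=2Q(x)-Q^2(x)$. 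In the failure contribution, I would write $\myprobability{\bm X_1\neq\hat{\bm X}_1}=1-\tfrac{1}{2}\bigl(\myprobability{\bm X_1=\hat{\bm X}_1\vert X_2=0}+\myprobability{\bm X_1=\hat{\bm X}_1\vert X_2=1}\bigr)$ and insert the lower bound $(1-\overline{\mathcal{M}})^N$ for the $X_2=1$ success probability, producing the second line of \eqref{prop:ber ub}.

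The main obstacle is purely bookkeeping: each substitution inside \eqref{BER-suc} must preserve the intended direction of inequality, which is why one has to check that the multipliers $\exp(-\Lambda(\lambda)/2)$ and $1-Q_1(\sqrt{\lambda},\sqrt{\Lambda(\lambda)})$ are non-negative, and why the auxiliary monotonicity argument for $f(x)=2Q(x)-Q^2(x)$ has to be made explicit so that the $i=0$ and $i=1$ success probabilities can be dominated by a common quantity in the upper-bound step. Once these routine checks are in hand, assembling the two pieces yields \eqref{prop:ber ub} and completes the proof.
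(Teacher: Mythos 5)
Your proposal is correct and follows essentially the same route as the paper, which obtains the bounds by substituting \eqref{BER-suc}, \eqref{X1_sequence_bound} and \eqref{inequal} into the decomposition \eqref{BER1}. The only addition on your side is making explicit the monotonicity check that $(1-2Q(\sqrt{\snr})+Q^2(\sqrt{\snr}))^N\leq(1-\underline{\mathcal{M}}(\snr,\vert g\vert\sqrt{\rho}))^N$, which the paper leaves implicit but which is needed to pull the common factor out of the first term of \eqref{prop:ber ub}.
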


Moreover, the asymptotic BER in the high SNR scenario can be obtained as follows.

\begin{corollary} \label{prop:asynBER}
	\normalfont
In the high SNR scenario, the gap between the upper and lower bounds in Proposition~\ref{prop:ber_bounds} converges to zero in the typical case that $\vert g \vert^2 \ll 1$, and the asymptotic BER is given by
\begin{equation}\label{BER_approx}
\myprobability{X_2 \neq \hat{X}_2} = \frac{1}{2} \left(\exp\left(-\frac{\lambda}{8}\right) +
\left(1-Q_1\left(\sqrt{\lambda}, \frac{\sqrt{\lambda}}{2}\right)\right)
\right).
\end{equation}
\end{corollary}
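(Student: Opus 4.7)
The plan is to show both bounds $\PUBber$ and $\PLBber$ from Proposition~\ref{prop:ber_bounds} collapse to the same limit under the two regime assumptions, and that this common limit matches the asserted closed form. The proof is essentially a double limiting argument: first apply Corollary~\ref{prop:nice} to turn the implicit threshold $\Lambda(\lambda)$ into the explicit value $\lambda/4$, and then use high SNR together with $\vert g\vert^2\ll 1$ to kill the ``$X_1$ detection failure'' contributions.

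First, I would invoke Corollary~\ref{prop:nice} to substitute $\Lambda(\lambda)=\lambda/4$, which immediately yields $\exp(-\Lambda(\lambda)/2)=\exp(-\lambda/8)$ and $Q_1(\sqrt{\lambda},\sqrt{\Lambda(\lambda)})=Q_1(\sqrt{\lambda},\sqrt{\lambda}/2)$. Thus the ``successful $\bm X_1$ detection'' core factor in both bounds reduces to the target expression $\tfrac{1}{2}\!\left(\exp(-\lambda/8)+1-Q_1(\sqrt{\lambda},\sqrt{\lambda}/2)\right)$, multiplied by a probability-of-success weight that depends on which bound one looks at.

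Next, I would examine the weights. In the high SNR regime $Q(\sqrt{\snr})\to 0$, so $\left(1-2Q(\sqrt{\snr})+Q^2(\sqrt{\snr})\right)^N\to 1$. For the two $\mathcal{M}$-dependent factors, recall from \eqref{first_SER_UB} and \eqref{first_SER_LB} that both $\overline{\mathcal{M}}$ and $\underline{\mathcal{M}}$ are sums of $Q$-functions with arguments of order $\sqrt{\snr}(1/\sqrt{2}\pm\vert g\vert\sqrt{\rho})$; when $\vert g\vert^2\ll 1$ the leading argument $\sqrt{\snr}(1/\sqrt{2}-\vert g\vert\sqrt{\rho})$ still tends to infinity as $\snr\to\infty$, so $\overline{\mathcal{M}},\underline{\mathcal{M}}\to 0$ and hence $(1-\overline{\mathcal{M}})^N,(1-\underline{\mathcal{M}})^N\to 1$. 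Applying these limits to \eqref{prop:ber lb} shows that $\PLBber$ tends to the target expression, while in \eqref{prop:ber ub} both the success weight in the first summand tends to $1$ and the ``failure'' term $\tfrac{1}{2}\bigl(1-\tfrac{1}{2}((\,\cdot\,)^N+(\,\cdot\,)^N)\bigr)$ tends to $\tfrac{1}{2}(1-\tfrac{1}{2}(1+1))=0$.

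Since the upper and lower bounds share the same limit, the sandwich theorem gives the stated asymptotic BER, and the gap $\PUBber-\PLBber$ vanishes. The only step requiring care is the order-of-limits argument in controlling $\overline{\mathcal{M}}$: one must verify that when $\vert g\vert^2\ll 1$ the smallest-argument $Q$-function in $\overline{\mathcal{M}}$ still decays to zero with $\snr$, i.e.\ that $\vert g\vert\sqrt{\rho}<1/\sqrt{2}$ so that the exponent in the asymptotic $Q(\cdot)$ bound remains negative. This is the one point at which the hypothesis $\vert g\vert^2\ll 1$ is genuinely needed (rather than only for tightness), and I would flag it explicitly; beyond this, the remaining manipulations are routine substitutions.
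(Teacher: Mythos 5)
Your substitution of $\Lambda(\lambda)=\lambda/4$ and your bookkeeping of which factors tend to $1$ and which to $0$ are fine, but the sandwich-by-limits argument at the end is vacuous and misses the one nontrivial idea in the paper's proof. As $\snr\to\infty$ we have $\lambda=2\snr\vert g\vert^2\rho N\to\infty$, so the target expression $\tfrac{1}{2}\bigl(e^{-\lambda/8}+1-Q_1(\sqrt{\lambda},\sqrt{\lambda}/2)\bigr)$ itself tends to $0$, as do $\PUBber$ and $\PLBber$. Showing that all three quantities ``share the same limit'' therefore only proves $\myprobability{X_2\neq\hat{X}_2}\to 0$; it does not establish \eqref{BER_approx} as an asymptotic formula (the same reasoning would ``prove'' the BER is asymptotically $e^{-\snr}$). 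What must be shown is that the discrepancy between the bounds is negligible \emph{relative to} the main term. The paper does exactly this: it notes that in the high-SNR regime $\PUBber=\PLBber+\tfrac{N}{2}\PUB$, i.e.\ the gap is governed by $Q\bigl(\sqrt{2\snr}(1/\sqrt{2}-\vert g\vert\sqrt{\rho})\bigr)\sim e^{-\snr(1/\sqrt{2}-\vert g\vert\sqrt{\rho})^2}$ (via the $\mathrm{erfc}$ asymptotic expansion), and then proves the relative comparison $\PUB\ll\tfrac{1}{2}e^{-\lambda/8}<\PLBber$ between two vanishing quantities.

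This is also where you misplace the role of $\vert g\vert^2\ll 1$. You flag it only as ensuring $1/\sqrt{2}-\vert g\vert\sqrt{\rho}>0$ so that $\overline{\mathcal{M}}\to 0$; its essential job is to separate the two exponential decay rates, namely $\lambda/8=\snr\vert g\vert^2\rho N/4$ for the main term versus $\snr(1/\sqrt{2}-\vert g\vert\sqrt{\rho})^2\approx\snr/2$ for the gap. Only when $\vert g\vert^2\rho$ is small is the first exponent much smaller than the second, so that the $X_1$-detection-failure contributions decay strictly faster than the $X_2$-detection error and the two bounds coincide to leading order. To repair your proof, replace the final sandwich step by this rate comparison: expand $(1-x)^N=1-Nx+o(x)$ to express $\PUBber-\PLBber$ in terms of $Q(\sqrt{\snr})$, $\overline{\mathcal{M}}$ and $\underline{\mathcal{M}}$, bound it by a constant multiple of $\PUB$, and verify $\PUB/e^{-\lambda/8}\to 0$ under $\vert g\vert^2\ll 1$ using the $\mathrm{erfc}$ asymptotics, as the paper does in Appendix~G.
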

\begin{proof}
	See Appendix~G.	
\end{proof}
From Corollary~\ref{prop:asynBER} and \eqref{chi-cdf}, we have the following remark.
\begin{remark}
	The BER of $X_2$ in the multiplicative multiple-access system has  the same decay rate in terms of the $\snr$ with the one without the Tx's interference, when $\vert g \vert^2 \ll 1$.
\end{remark}

\subsection{Error Rate Analysis: Asynchronous Transmissions}
Assume that the delay offset between the Tx and the Tag of the asynchronous scenario is $\alpha T_1$ as illustrated in Fig.~\ref{fig:asyn}, where
$\alpha$ is the delay offset ratio.
Without loss of generality, we assume that $\alpha \in \left[0,0.5\right]$.
Without increasing the implementation complexity of the Rx, we assume that the Rx does not perform oversampling during the information detection, and the value of $\alpha$ is not known at the Rx.
Recall that the sampling period of the Rx is the same with the symbol duration of the Tx.
For instance, in Fig.~\ref{fig:asyn}, the Rx uses the first three samplings (thick framed) to detect $X_2$.
In the following, we analyze the detection error rates under the asynchronous setting using the detection method discussed in Sec.~\ref{sec:syn}.

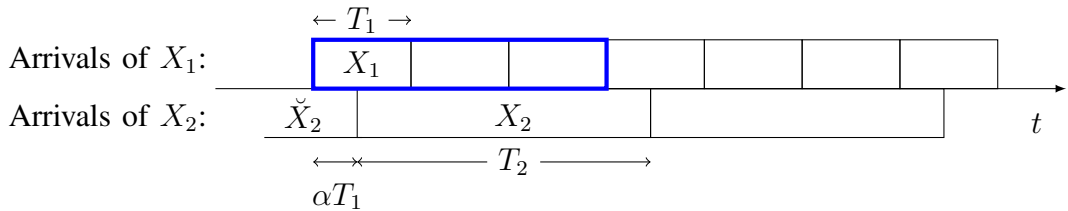
\begin{figure}[t]
	\renewcommand{\captionfont}{\small} \renewcommand{\captionlabelfont}{\small}	
	\centering
	\begin{tikzpicture}[scale=1.3]
\draw [-latex] (-1,0) -- (7.7,0);
\draw  (0,0.5) node (v3) {} rectangle (1,0);
\draw  (1,0.5) rectangle (2,0);
\draw  (2,0.5) rectangle (3,0);
\draw  (3,0.5) rectangle (4,0);
\draw  (4,0.5) rectangle (5,0);
\draw  (0.45,0) rectangle (3.45,-0.5);
\draw  (3.45,0) rectangle (6.45,-0.5);
\draw  (5,0.5) rectangle (6,0);
\draw  (6,0.5) rectangle (7,0);
\node at (7.4,-0.35) {$t$};
\node at (0.5,0.25) {$X_1$};
\node at (2.05,-0.3) {$X_2$};

\node (v1) at (0.5,0.7) {$T_1$};
\draw [->] (v1) -- (0,0.7);
\draw [->](v1) -- (1,0.7);
\node (v2) at (2.05,-0.75) {$T_2$};
\draw [->] (v2) -- (0.45,-0.75);
\draw [->] (v2) -- (3.45,-0.75);
\draw [<->] (0,-0.75) -- (0.45,-0.75);
\node at (0.25,-1.1) {$\alpha T_1$};
\node at (-2.1,0.3) {Arrivals of $X_1$:  };
\node at (-2.1,-0.3) {Arrivals of $X_2$:  };
\node at (-0.1,-0.27) {$\breve{X}_2$};
\draw [ultra thick,blue] (v3) rectangle (3,0);
\draw (0.5,-0.5) -- (-0.5,-0.5);
	\end{tikzpicture}
\vspace{-0.6cm}	
	\caption{Illustration of asynchronous transmissions, $N=3$.}
	\label{fig:asyn}
\vspace{-0.8cm}	
\end{figure}

\subsubsection{Symbol Error Rate of $X_1$}
As illustrated in Fig.~\ref{fig:asyn}, there is one symbol of every $N$ Tx symbols that is different from the others for detection, i.e., the detection is affected by two $X_2$ symbols. The detection of each of the rest $N-1$ symbols is affected by only one $X_2$ symbol, i.e.,  the detection error rate is the same with the synchronous transmission scenario. 
In the following, we thus focus on the detection error  rate of the $X_1$ symbol that is affected by two symbols of the Tag denoted by $\breve{X}_2$ and $X_2$, respectively.
The received signal expression is rewritten as
\begin{equation}
Y= X_1 \left(1 + g \sqrt{\rho} \left(\alpha \breve{X}_2 + (1-\alpha) X_2\right) \right) +Z.
\end{equation}
Thus, the detection error rate of the symbol is given by
\begin{equation} \label{asyn_X_1_1}
\begin{aligned}
&{\myprobability{X_1 \neq \hat{X}_1 \vert \text{affected by both } \breve{X}_2 \text{ and } X_2}}\\
&= \frac{1}{4}
\left(
\myprobability{X_1 \neq \hat{X}_1 \vert \breve{X}_2=0, X_2=0}
+\myprobability{X_1 \neq \hat{X}_1 \vert \breve{X}_2=0, X_2=1}\right.\\
&\left. \hspace{1cm}+\myprobability{X_1 \neq \hat{X}_1 \vert \breve{X}_2=1, X_2=0}
+\myprobability{X_1 \neq \hat{X}_1 \vert \breve{X}_2=1, X_2=1}
\right),
\end{aligned}
\end{equation}
where $\myprobability{X_1 \neq \hat{X}_1 \vert \breve{X}_2=0, X_2=0}$ and $\myprobability{X_1 \neq \hat{X}_1 \vert \breve{X}_2=1, X_2=1}$ have been obtained in \eqref{SER_X1_conditional_0} and \eqref{SER_X1_conditional}, respectively, i.e., the detection error rates are the same with that of the synchronous scenario.
Based on the analysis in \eqref{SER_X1_conditional}, it can be derived that  
\begin{equation} \label{myij}
\myprobability{X_1 \neq \hat{X}_1 \vert \breve{X}_2=i, X_2=j} =
\left\lbrace
\begin{aligned}
\small
&\mathcal{M}\left(\snr, 0 \right), && i=0, j=0\\
&\mathcal{M}\left(\snr,(1-\alpha)\vert g \vert \sqrt{\rho}\right), && i=0, j=1 \\
&\mathcal{M}\left(\snr,\alpha\vert g \vert \sqrt{\rho}\right), && i=1, j=0 \\
&\mathcal{M}\left(\snr,\vert g \vert \sqrt{\rho}\right), && i=1, j=1 \\
\end{aligned}
\right.
\end{equation}
Therefore, the detection error rate in \eqref{asyn_X_1_1} is further derived as
\begin{equation}
\begin{aligned}
&{\myprobability{X_1 \neq \hat{X}_1 \vert \text{affected by } \breve{X}_2 \text{ and } X_2}}\\
&=\frac{1}{4} \left(
\mathcal{M}\left(\snr, 0\right) + \mathcal{M}\left(\snr,(1-\alpha)\vert g \vert \sqrt{\rho}\right)
+\mathcal{M}\left(\snr,\alpha\vert g \vert \sqrt{\rho}\right)
+\mathcal{M}\left(\snr,\vert g \vert \sqrt{\rho}\right)
\right).
\end{aligned}
\end{equation}
Since the SER of the asynchronous transmission scenario can be expressed as
\begin{equation}
\begin{aligned}
&\myprobability{X_1 \neq \hat{X}_1 \vert \alpha \neq 0} \\
&= 
\frac{1}{N} \myprobability{X_1 \neq \hat{X}_1 \vert \text{affected by both } \breve{X}_2 \text{ and } X_2} + \frac{N-1}{N} \myprobability{X_1 \neq \hat{X}_1 \vert \text{affected by } X_2 \text{ only}},
\end{aligned}
\end{equation}
we have the following result.
\begin{proposition} \label{prop:asyn_ser}
\normalfont	
	In the asynchronous transmission scenario, the SER of $X_1$ is 
	\begin{equation}
	\begin{aligned}
	P^{\text{Asyn}}_1 \triangleq \myprobability{X_1 \neq \hat{X}_1 \vert \alpha \neq 0} 
	&= \left(\frac{1}{4N} +\frac{N-1}{2N}\right)\left(\mathcal{M}\left(\snr,0\right) + \mathcal{M}\left(\snr,\vert g \vert \sqrt{\rho}\right)\right)\\
	&+ \frac{1}{4N} \left(\mathcal{M}\left(\snr,\alpha\vert g \vert \sqrt{\rho}\right) + \mathcal{M}\left(\snr,(1-\alpha)\vert g \vert \sqrt{\rho}\right)\right).
	\end{aligned}
	\end{equation}
\end{proposition}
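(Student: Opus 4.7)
The plan is to decompose the SER over the $N$ Tx symbols that sit inside one Tag symbol interval, separating the single \emph{edge} Tx symbol whose sampling window straddles the transition from $\breve{X}_2$ to $X_2$ from the $N-1$ \emph{interior} Tx symbols that see only one Tag symbol. Since the Tx symbols are i.i.d.\ and their modulation is symbol-by-symbol, the overall SER is the arithmetic mean of these two conditional SERs weighted by $\tfrac{1}{N}$ and $\tfrac{N-1}{N}$ respectively, which already matches the structural form of the claimed identity.

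For the interior symbols the analysis from Section~\ref{sec:syn} applies verbatim: the effective multiplicative interferer is either $0$ (with probability $1/2$, when the single overlapping Tag symbol is $0$) or $g\sqrt{\rho}$ (with probability $1/2$, when it is $1$). Retracing the derivation of \eqref{SER_X1_general}--\eqref{SER_X1_conditional} and using the convention that $\mathcal{M}(\snr,0) = 2Q(\sqrt{\snr}) - Q^2(\sqrt{\snr})$ (which is immediate by setting $\vert g\vert\sqrt{\rho} = 0$ in the integral defining $\mathcal{M}$), I obtain
\[
\myprobability{X_1 \neq \hat{X}_1 \mid \text{interior}} \;=\; \tfrac{1}{2}\!\left(\mathcal{M}(\snr,0) + \mathcal{M}(\snr,\vert g\vert\sqrt{\rho})\right).
\]

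For the edge symbol I condition on the four equally likely pairs $(\breve{X}_2,X_2)\in\{0,1\}^2$. Within that symbol the received signal is $Y = X_1(1 + g\sqrt{\rho}(\alpha\breve{X}_2 + (1-\alpha)X_2)) + Z$, so the effective backscatter gain seen by QPSK detection is the product of $g\sqrt{\rho}$ with $0,\ 1-\alpha,\ \alpha,\ 1$ in the four respective cases. Reusing exactly the geometric argument that led to \eqref{SER_X1_conditional}—averaging over the uniform phase $\Theta$ in the rotated constellation picture of Fig.~\ref{fig:rotation constellation}—one simply replaces $\vert g\vert\sqrt{\rho}$ by the appropriately scaled magnitude, yielding \eqref{myij}. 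Averaging the four cases and combining,
\[
P^{\text{Asyn}}_1 = \tfrac{1}{N}\cdot\tfrac{1}{4}\sum\nolimits_{i,j}\mathcal{M}(\cdot) + \tfrac{N-1}{N}\cdot\tfrac{1}{2}\!\left(\mathcal{M}(\snr,0) + \mathcal{M}(\snr,\vert g\vert\sqrt{\rho})\right),
\]
and collecting coefficients of the four $\mathcal{M}$ terms produces exactly the stated expression.

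The only nontrivial step is justifying \eqref{myij}: one must verify that the rotated-constellation/uniform-phase argument underlying $\mathcal{M}$ in the synchronous case carries over unchanged when the scalar Tag interferer is replaced by the convex combination $\alpha\breve{X}_2 + (1-\alpha)X_2$. This works because on/off modulation renders this combination a deterministic nonnegative real number once $(\breve{X}_2,X_2)$ are fixed, so the effective backscatter channel coefficient remains of the form $|g'|\,e^{j\Theta}$ with $\Theta$ still uniform on $[0,2\pi)$—only its magnitude has been rescaled. Everything else is a routine weighted sum, so I do not expect any other obstacle.
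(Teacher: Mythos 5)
Your proposal is correct and follows essentially the same route as the paper: the $\tfrac{1}{N}$ versus $\tfrac{N-1}{N}$ split between the edge symbol and the interior symbols, the conditioning on the four equally likely pairs $(\breve{X}_2,X_2)$ with effective backscatter magnitudes $0$, $(1-\alpha)\vert g\vert\sqrt{\rho}$, $\alpha\vert g\vert\sqrt{\rho}$, $\vert g\vert\sqrt{\rho}$, and the identification $\mathcal{M}(\snr,0)=2Q(\sqrt{\snr})-Q^2(\sqrt{\snr})$ are exactly the steps in the paper's inline derivation of \eqref{asyn_X_1_1}--\eqref{myij}. Your closing remark justifying why the uniform-phase argument survives the replacement of the Tag symbol by the convex combination $\alpha\breve{X}_2+(1-\alpha)X_2$ is a point the paper leaves implicit, and it is the right justification.
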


Moreover, similar to the analysis of the synchronous transmission scenario, the asymptotic results are given by as follows.
\begin{proposition} \label{prop:asyn_ser_bound}
\normalfont
In the high SNR scenario, the asymptotic upper and lower bounds of the SER of $X_1$ are given by, respectively,
\begin{equation}\label{asyn_SER_high_snr}
\begin{aligned}
\PUBasyn &= \frac{2N-1}{2 N}Q\left(\sqrt{2 \snr }\left(\frac{1}{\sqrt{2}}-\vert g \vert \sqrt{\rho} \right)\right),\ 
\PLBasyn = \frac{2N-1}{2 N}Q\left(\sqrt{\snr }\right).
\end{aligned}
\end{equation}
\end{proposition}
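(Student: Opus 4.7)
The plan is to reduce the exact expression for $P^{\text{Asyn}}_1$ in Proposition~\ref{prop:asyn_ser} to its dominant $Q$-function contribution in the high SNR regime, exactly as was done for the synchronous case in Corollary~\ref{SER_approx}. The main asymptotic tool is already available: from the envelope bounds~\eqref{first_SER_UB}--\eqref{first_SER_LB}, in high SNR each $\mathcal{M}(\snr,x)$ is sandwiched as
\begin{equation*}
2 Q\!\left(\sqrt{2\snr}\bigl(\tfrac{1}{\sqrt{2}}+x\bigr)\right)-o(\cdot) \;\leq\; \mathcal{M}(\snr,x) \;\leq\; 2 Q\!\left(\sqrt{2\snr}\bigl(\tfrac{1}{\sqrt{2}}-x\bigr)\right)+o(\cdot),
\end{equation*}
where the $Q^2$ corrections are of strictly lower exponential order and can be dropped, together with the special evaluation $\mathcal{M}(\snr,0)=2Q(\sqrt{\snr})-Q^2(\sqrt{\snr})$. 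I will also use the standard fact invoked in Corollary~\ref{SER_approx} that $Q(ax)$ is negligible compared with $Q(bx)$ when $a>b$ and $x\gg 1$.

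For the upper bound $\PUBasyn$, I would substitute the upper envelope $\mathcal{M}(\snr,x)\lesssim 2Q\!\left(\sqrt{2\snr}(\tfrac{1}{\sqrt{2}}-x)\right)$ into Proposition~\ref{prop:asyn_ser}. This produces four $Q$-terms with arguments $\sqrt{\snr}$, $\sqrt{2\snr}(\tfrac{1}{\sqrt{2}}-\alpha|g|\sqrt{\rho})$, $\sqrt{2\snr}(\tfrac{1}{\sqrt{2}}-(1-\alpha)|g|\sqrt{\rho})$, and $\sqrt{2\snr}(\tfrac{1}{\sqrt{2}}-|g|\sqrt{\rho})$. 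Since $\alpha\in[0,0.5]$ and $|g|\sqrt{\rho}>0$, the smallest (hence dominant) argument is the last one, so the other three $Q$-terms are asymptotically negligible. The surviving term carries coefficient $\frac{2N-1}{4N}\cdot 2=\frac{2N-1}{2N}$, reproducing the claimed $\PUBasyn$.

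For the lower bound $\PLBasyn$, I would exploit the fact that every $\mathcal{M}(\snr,x)$ in the formula is nonnegative and retain only the single term $\tfrac{2N-1}{4N}\mathcal{M}(\snr,0)$ from Proposition~\ref{prop:asyn_ser}, dropping the three remaining positive contributions. Applying $\mathcal{M}(\snr,0)=2Q(\sqrt{\snr})-Q^2(\sqrt{\snr})\approx 2Q(\sqrt{\snr})$ in the high SNR limit gives $\tfrac{2N-1}{4N}\cdot 2 Q(\sqrt{\snr})=\frac{2N-1}{2N}Q(\sqrt{\snr})$, which is $\PLBasyn$.

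The only nontrivial step is justifying the asymptotic dominance used in the upper bound, namely that $Q\!\left(\sqrt{2\snr}(\tfrac{1}{\sqrt{2}}-\beta|g|\sqrt{\rho})\right)$ for $\beta\in\{0,\alpha,1-\alpha\}$ is of strictly lower exponential order than the $\beta=1$ term; this follows at once because $\beta<1$ makes the Gaussian-tail exponent $-\frac{\snr}{2}(1-\sqrt{2}\beta|g|\sqrt{\rho})^2$ strictly smaller (more negative) than $-\frac{\snr}{2}(1-\sqrt{2}|g|\sqrt{\rho})^2$. No other obstacle arises, since the remaining simplifications are purely algebraic once the dominant-term isolation is in place.
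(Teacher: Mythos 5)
Your proposal is correct and follows exactly the route the paper intends: the paper gives no explicit proof of Proposition~\ref{prop:asyn_ser_bound}, stating only that it follows ``similar to the analysis of the synchronous transmission scenario,'' and your argument fills in precisely those steps --- substituting the envelope bounds \eqref{first_SER_UB}--\eqref{first_SER_LB} into Proposition~\ref{prop:asyn_ser}, isolating the dominant $Q$-term via the exponential-order comparison already invoked for Corollary~\ref{SER_approx}, and checking the coefficient $\tfrac{2N-1}{4N}\cdot 2=\tfrac{2N-1}{2N}$. No gaps.
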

\begin{remark}
	Comparing Proposition~\ref{prop:asyn_ser_bound} with Corollary~\ref{SER_approx}, we see that the asynchronous transmission helps to reduce the SER of $X_1$.
\end{remark}

\subsubsection{Bit Error Rate of $X_2$}
In the scenario that the sequence of symbols, i.e., $\bm X_1$, has been successfully detected, after removing the additive interference, performing MRC and scaling by the factor $\sqrt{\myP} \vert g \vert \sqrt{\rho} \vert \bm X_1 \vert \sigma$, the equivalent received signal, $\tilde{Y}$, is given by
\begin{equation}\label{tilde_Y_2}
\begin{aligned}
\tilde{Y}
&= e^{j \Theta} \sqrt{\snr} \vert g \vert \sqrt{\rho} \left( 
\vert \bm X_1 \vert X_2 + \frac{\vert X_{1,1} \vert^2}{\vert \bm X_1 \vert } \alpha \left(\breve{X}_2 - X_2\right)
\right)
+ \tilde{Z}.
\end{aligned}
\end{equation}
It is easy to see that this signal expression degrades to the synchronous one, i.e., \eqref{tilde_Y_1}, when $\breve{X}_2 = X_2$. 
Thus, we focus on the scenarios that $\breve{X}_2 \neq X_2$ and the detection of $X_2$ suffers from the inter-symbol interference.

Then, we can derive the received signal expression
\begin{equation}
\tilde{Y}= \left\lbrace 
\begin{aligned}
&e^{j \Theta} \sqrt{\snr} \vert g \vert \sqrt{\rho} \frac{\vert X_{1,1} \vert^2}{\vert \bm X_1 \vert } \alpha +\tilde{Z},&&X_2=0,\ \breve{X}_2 = 1\\
&e^{j \Theta} \sqrt{\snr} \vert g \vert \sqrt{\rho} \left(\vert \bm X_1 \vert - \frac{\vert X_{1,1} \vert^2}{\vert \bm X_1 \vert } \alpha \right)
+ \tilde{Z}
,&&X_2=1,\ \breve{X}_2=0
\end{aligned}
\right.
\end{equation}
and the relevant distributions
\begin{equation}\label{phi}
\Phi \triangleq 2 \vert \tilde{Y} \vert^2
\sim
\left\lbrace 
\begin{aligned}
& \text{noncentral } \chi^2\left(2,\lambda_0\right),&&X_2=0,\ \breve{X}_2=1\\
& \text{noncentral } \chi^2\left(2, \lambda_1\right)
,&&X_2=1,\ \breve{X}_2=0
\end{aligned}
\right.
\end{equation}
where 
\begin{equation}\label{phi-lambda}
\begin{aligned}
\lambda_0 &\triangleq 2 \snr \vert g \vert^2 \rho \vert \bm X_1\vert^2
=2 \snr \vert g \vert^2 \rho \frac{1}{N} \alpha^2,\\
\lambda_1 &\triangleq 2 \snr \vert g \vert^2 \rho \vert \bm X_1\vert^2
=2 \snr \vert g \vert^2 \rho \left(\sqrt{N} - \frac{\alpha}{\sqrt{N}}\right)^2 =2 \snr \vert g \vert^2 \rho \left(N+\frac{\alpha^2}{N} -2 \alpha\right).
\end{aligned}
\end{equation}
Thus, the detection error rate can be calculated as
\begin{equation}
\begin{aligned}
&\myprobability{\hat{X}_2 \neq X_2, \hat{\bm X}_1 = \bm X_1} \\
&=\! \sum_{i=0}^{1} \sum_{j=0}^{1} \myprobability{\!\hat{X}_2 \neq X_2 \vert \breve{X}_2 \!=\! i, X_2\!=\!j, \hat{\bm X}_1 = \bm X_1\!}\! \myprobability{\!\hat{\bm X}_1 \!=\! \bm X_1 \vert \breve{X}_2 \!=\! i, X_2\!=\!j}
\!\myprobability{\!\breve{X}_2 \!=\! i, X_2\!=\!j}\\
& = \frac{1}{4}
 \sum_{i=0}^{1} \sum_{j=0}^{1} \myprobability{\hat{X}_2 \neq X_2 \vert \breve{X}_2 = i, X_2=j, \hat{\bm X}_1 = \bm X_1} \myprobability{\hat{\bm X}_1 = \bm X_1 \vert \breve{X}_2 = i, X_2=j},
\end{aligned}
\end{equation}
where $\myprobability{\hat{\bm X}_1 = \bm X_1 \vert \breve{X}_2 = i, X_2=j}$ can be obtained from Proposition~\ref{prop:asyn_ser}, and 
\begin{equation}
 \myprobability{\hat{X}_2 \neq X_2 \vert \breve{X}_2 = i, X_2=j, \hat{\bm X}_1 = \bm X_1}
 = 
 \left\lbrace 
 \begin{aligned}
 & \exp\left(- \frac{\Lambda(\lambda)}{2}\right), && i=0,j=0\\
 & 1-Q_1\left(\sqrt{\lambda_1},\sqrt{\Lambda(\lambda)}\right), && i=0,j=1\\
 & Q_1\left(\sqrt{\lambda_0},\sqrt{\Lambda(\lambda)}\right), && i=1,j=0\\  
 & \left(1 - Q_{1}\left(\sqrt{\lambda},\sqrt{\Lambda(\lambda)}\right)\right), && i=1,j=1
 \end{aligned}
 \right.
\end{equation}

Therefore, similar to the analysis of the synchronous transmission scenario, the upper and~lower bounds of the BER can be derived. Only the asymptotic results are presented below for brevity.
\begin{proposition} \label{prop:ber_bound_asyn}
	\normalfont
	In the high SNR scenario, the lower and upper bounds of the BER of the asynchronous transmission scenario are given by, respectively,
	\begin{equation}
	\begin{aligned}
		&\PUBberasyn=
		\frac{1}{4}\! \left(\!2 \!+ \exp\left(\!- \frac{\Lambda(\lambda)}{2}\!\right) + Q_1\left(\!\sqrt{\lambda_0},\sqrt{\Lambda(\lambda)}\!\right)
		-Q_1\left(\!\sqrt{\lambda},\sqrt{\Lambda(\lambda)}\!\right)
		-Q_1\left(\!\sqrt{\lambda_1},\sqrt{\Lambda(\lambda)}\!\right)
		\!\right) \\
		&\hspace{2cm} +\frac{1}{2}\left(1 - \left(1-\overline{\mathcal{M}}\left(\snr,\vert g \vert \sqrt{\rho}\right)\right)^N\right),\\
		&\PLBberasyn=
		\frac{1}{4}\! \left(\!2 \!+ \exp\left(\!- \frac{\Lambda(\lambda)}{2}\!\right) + Q_1\left(\!\sqrt{\lambda_0},\sqrt{\Lambda(\lambda)}\!\right)
\!-Q_1\left(\!\sqrt{\lambda},\sqrt{\Lambda(\lambda)}\!\right)
\!-Q_1\left(\!\sqrt{\lambda_1},\sqrt{\Lambda(\lambda)}\!\right)
\!\right)\!, \notag
		\end{aligned}
	\end{equation}	
	where $\Lambda(\lambda) = \frac{\lambda}{4}$, which is given in Proposition~\ref{prop:nice}. Moreover, the gap between the upper and lower bounds converges to zero.
\end{proposition}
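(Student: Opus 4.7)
The plan is to start from the exact BER decomposition
\begin{equation*}
\myprobability{X_2 \neq \hat{X}_2} = \myprobability{X_2 \neq \hat{X}_2, \bm X_1 = \hat{\bm X}_1} + \myprobability{X_2 \neq \hat{X}_2, \bm X_1 \neq \hat{\bm X}_1}
\end{equation*}
and process the two terms separately in the spirit of the proof of Corollary~\ref{prop:asynBER}. For the first term, I would apply the four-way conditioning on $(\breve{X}_2, X_2) \in \{0,1\}^2$ given directly above the proposition, substitute the four expressions for $\myprobability{\hat{X}_2 \neq X_2 \vert \breve{X}_2 = i, X_2=j, \hat{\bm X}_1 = \bm X_1}$ in terms of the $Q_1$ Marcum function (and $\exp(-\Lambda(\lambda)/2)$ for the $(0,0)$ case), and invoke Corollary~\ref{prop:nice} to substitute $\Lambda(\lambda)=\lambda/4$. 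In the typical regime $\vert g \vert^2 \ll 1$ and high SNR, each factor $\myprobability{\hat{\bm X}_1 = \bm X_1 \vert \breve{X}_2 = i, X_2 = j}$ converges to $1$, which is the same limiting argument used to collapse Proposition~\ref{prop:ber_bounds} into Corollary~\ref{prop:asynBER}.

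Next, for the second term, I would upper-bound $\myprobability{X_2 \neq \hat{X}_2, \bm X_1 \neq \hat{\bm X}_1}$ by $\tfrac{1}{2}(1 - \myprobability{\bm X_1 = \hat{\bm X}_1})$ using the same inequality~\eqref{inequal}, averaged over the four $(\breve{X}_2, X_2)$ cases via~\eqref{myij}; for the lower bound I would simply drop this term since it is nonnegative. The upper bound is then obtained by adding the extra term $\tfrac{1}{2}(1-(1-\overline{\mathcal{M}}(\snr, \vert g \vert \sqrt{\rho}))^N)$ (using the asynchronous analogue of~\eqref{X1_sequence_bound}, which is immediate from Proposition~\ref{prop:asyn_ser}), while the lower bound retains only the four-way sum from the successful-detection branch. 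This yields exactly the two displayed expressions for $\PUBberasyn$ and $\PLBberasyn$.

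To prove the vanishing gap, I would argue that the difference between the two bounds equals $\tfrac{1}{2}(1-(1-\overline{\mathcal{M}}(\snr,\vert g \vert \sqrt{\rho}))^N)$, which tends to $0$ as $\snr \to \infty$ whenever $\vert g \vert \sqrt{\rho} < 1/\sqrt{2}$, because $\overline{\mathcal{M}}(\snr,\vert g \vert \sqrt{\rho}) \to 0$ at the Q-function rate $Q(\sqrt{2\snr}(1/\sqrt{2} - \vert g \vert \sqrt{\rho}))$ by definition~\eqref{first_SER_UB}. This is the same argument structure used in Appendix~G for the synchronous case.

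The main obstacle will be bookkeeping: the asynchronous setting splits the $(\breve{X}_2, X_2)$ outcomes into four branches with distinct effective noncentrality parameters $\lambda_0, \lambda_1, \lambda$ in~\eqref{phi-lambda}, and one must be careful to pair each error probability with the correct conditional probability $\myprobability{\hat{\bm X}_1 = \bm X_1 \vert \breve{X}_2, X_2}$ from~\eqref{myij} before taking the $\snr \to \infty$ limit. A minor subtlety is that the $(0,0)$ and $(1,1)$ branches reduce to the synchronous expressions (so they only produce the $\exp(-\Lambda(\lambda)/2)$ and $1-Q_1(\sqrt{\lambda},\sqrt{\Lambda(\lambda)})$ contributions, as in Proposition~\ref{prop:ber_bounds}), whereas the genuinely asynchronous $(0,1)$ and $(1,0)$ branches contribute the additional $Q_1(\sqrt{\lambda_0},\cdot)$ and $-Q_1(\sqrt{\lambda_1},\cdot)$ terms; verifying that the bookkeeping reproduces the stated closed form is the only nontrivial algebraic step.
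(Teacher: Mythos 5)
Your proposal is correct and follows essentially the same route the paper intends: the paper gives no explicit proof for this proposition, stating only that it is derived ``similar to the analysis of the synchronous transmission scenario,'' and your reconstruction --- the decomposition into successful/failed detection of $\bm X_1$, the four-way conditioning on $(\breve{X}_2,X_2)$ with the listed Marcum-$Q$ expressions, the use of the bound on $\myprobability{\bm X_1 \neq \hat{\bm X}_1}$ for the upper bound and its omission for the lower bound, and the vanishing of the gap via $\overline{\mathcal{M}}\to 0$ --- is exactly that analysis carried over. The bookkeeping you describe reproduces the stated closed forms, so nothing further is needed.
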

\begin{remark}
Based on Proposition~\ref{prop:ber_bound_asyn}, it is easy to verify that the asymptotic BER increases with the delay offset ratio $\alpha$. 
\end{remark}

In addition, a simpler method for detecting $X_2$ is to ignore part of the signal that is affected by the previous symbol in the detection, and only use the samplings without inter-symbol interference. For example, as illustrated in Fig.~\ref{fig:asyn}, only the second and third samplings of $X_1$ are used for detecting $X_2$.
Assuming that $\bm X_1$ has been successfully detected, to detection $X_2$, the only difference with the synchronous transmission scenario in Sec.~\ref{sec:BER1} is that the equivalent $\snr$ for the detection shrinks with the ratio $(N-1)/N$, i.e., 
\begin{equation}\label{lambda2}
\lambda = 2 \snr \vert g \vert^2 \rho \left(\vert \bm X_1\vert^2 - \vert X_{1,1} \vert^2\right)
=2 \snr \vert g \vert^2 \rho (N-1),\ N \geq 2.
\end{equation}

\subsection{Numerical Results}
We present numerical results of the detection error rates of the multiplicative multiple-access system for both the synchronous and asynchronous transmission scenarios. The simulation results of the SER and the BER of $X_1$ and $X_2$ are plotted based on Monte Carlo simulation with $10^9$ points using the detection method in Sec~\ref{sec:detection method}.

\subsubsection{Detection Error Rates in the Synchronous Transmission Scenario}
In Fig.~\ref{fig:Syn_X1}, the analytical result of the SER of $X_1$ is plotted using Proposition~\ref{SER_accurate} for different $\snr$ and relative backscatter channel power gain, $\vert g \vert^2 \rho$, and the upper and lower bounds of the SER are plotted using Proposition~\ref{X1_Bound}.
We see that the analytical result matches well with the simulation result, which verifies the accuracy of the former, and the SER decreases with increasing $\snr$.
Also, we see that the gap between the upper and lower bounds decreases inversely with $\vert g \vert^2 \rho$, e.g., the ratio of the upper bound to the lower bound is $1.5\times 10^{5}$ and $40$ when $\vert g \vert^2 \rho = 0.1$ and $0.01$, respectively, when $\snr = 15$~dB. Moreover, the SER approaches to the lower bound with decreasing $\vert g \vert^2 \rho$, e.g., changing from $0.1$ to $0.01$, since the interference from the Tag's transmissions reduces. \emph{Therefore, the SER is very close to the lower bound when $\vert g \vert^2 \rho = 0.01$, i.e., the detection of the Tx's signal almost does not suffer from the Tag's interference, and the performance of the proposed detection method is closed to the optimal one achieved by the ML method.}

\begin{figure*}[t]
	\renewcommand{\captionfont}{\small} \renewcommand{\captionlabelfont}{\small}
	\minipage{0.5\textwidth}
	\centering
	\includegraphics[scale=0.6]{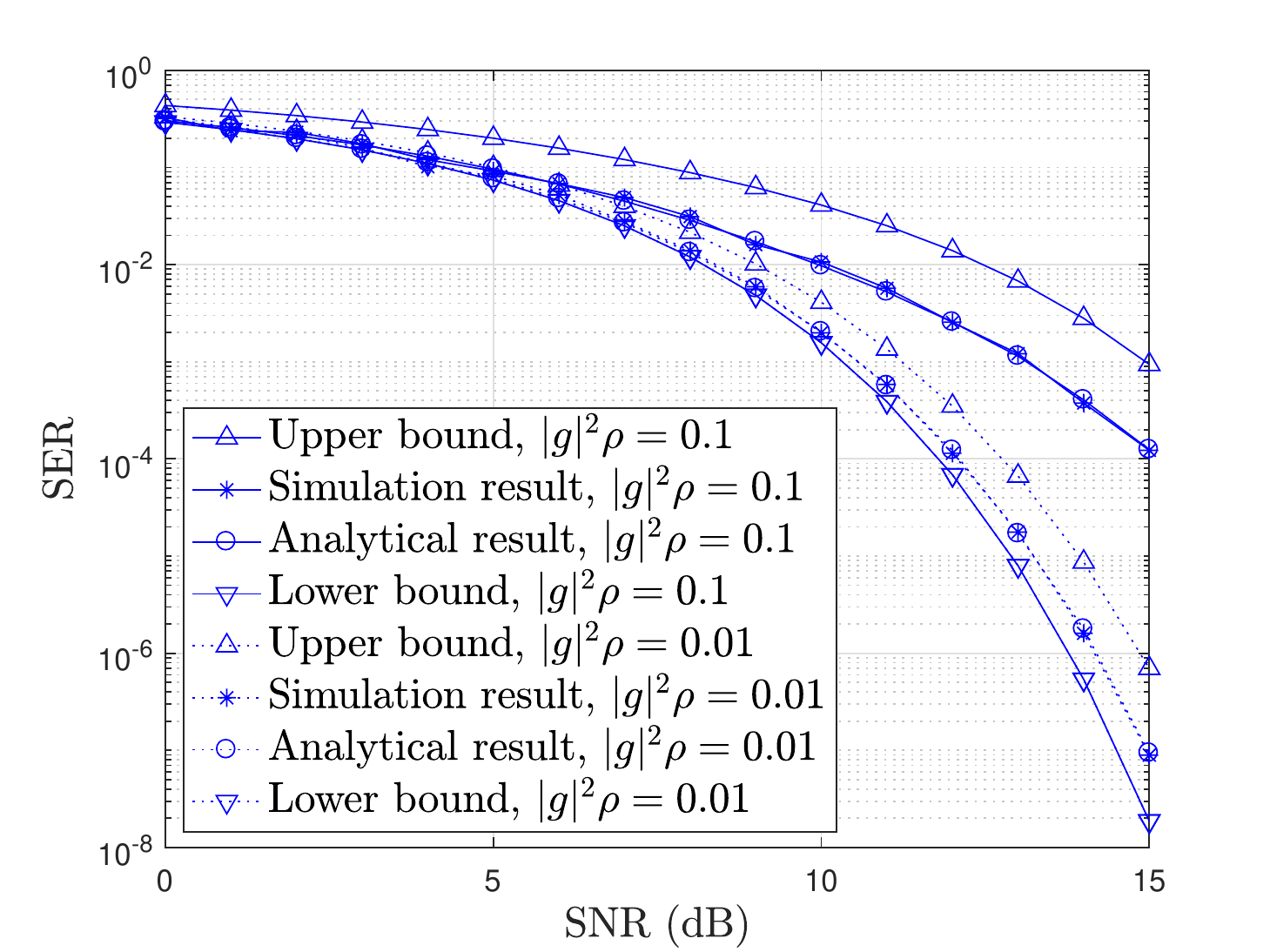}
	\vspace{-0.9cm}
	\caption{SER of $X_1$ versus SNR.}
		\vspace{-0.3cm}
	\label{fig:Syn_X1}
	\endminipage
	\minipage{0.5\textwidth}
	\centering
	\includegraphics[scale=0.6]{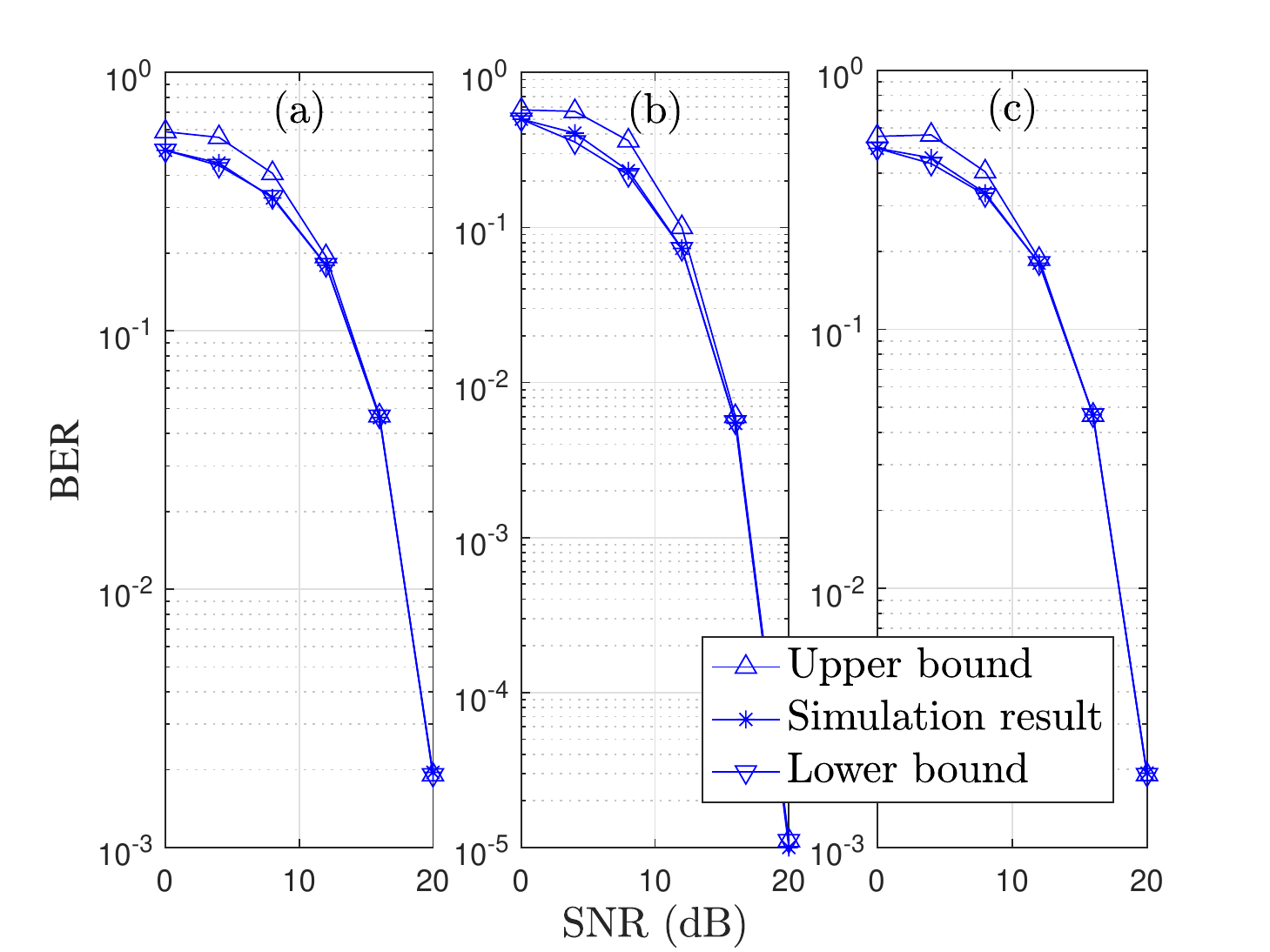}
		\vspace{-0.9cm}
	\caption{BER of $X_2$ versus SNR: (a) $N=2$, $\vert g \vert^2 \rho = 0.1$; (b) $N=4$, $\vert g \vert^2 \rho = 0.1$; and (c) $N=4$, $\vert g \vert^2 \rho = 0.05$.}
		\vspace{-0.5cm}
	\label{fig:Syn_X2}
	\endminipage
	\vspace*{-0.5cm}
\end{figure*}

In Fig.~\ref{fig:Syn_X2}, the simulation result of the BER of $X_2$ is plotted for different $\snr$, symbol-length ratio, $N$, and relative backscatter channel power gain, $\vert g \vert \rho$. 
The upper and lower bounds of the SER are plotted using Proposition~\ref{prop:ber_bounds}.
We see that the gap between the upper and lower bounds vanishes quickly with increasing $\snr$, and the bounds are very tight when $\snr > 10$~dB.
\emph{Therefore, the BER is very close to the lower bound when $\snr > 10$~dB, i.e., the detection of the Tag's signal almost does not suffer from the Tx's interference, and the performance of the proposed detection method is closed to the optimal one achieved by the ML method.}
Also, we see that the BER decreases with increasing $N$ (see Fig.~\ref{fig:Syn_X2}(a) and (b)), e.g., the BER reduces by a factor of $100$ when we increase $N$ from $2$ to $4$ with $\snr = 20$, and $\vert g \vert^2 \rho = 0.1$; we see that the BER reduces by a factor of $100$ when changing $\vert g \vert^2 \rho$ from $0.05$ to $0.1$ (see Fig.~\ref{fig:Syn_X2}(c) and (b)), because of the increasing equivalent SNR for the detection.

\subsubsection{Detection Error Rates in the Asynchronous Transmission Scenario}

In Fig.~\ref{fig:Asyn_X1}, the analytical result of the BER of $X_1$ is plotted using Proposition~\ref{prop:asyn_ser} for different delay offset ratio, $\alpha$, and symbol-length ratio, $N$. 
We see that the analytical result matches well with the simulation result, which verifies the accuracy of the former.
Also, we see that the SER decreases with increasing delay offset ratio and decreasing $N$. 
For instance, the SER decreases from $9.8 \times 10^{-3}$ to $8.4 \times 10^{-3}$ when $N=2$ and the delay offset ratio increases from $0$ to $0.5$, i.e., from the synchronous scenario to a very asynchronous scenario; the SER decreases from $9.35 \times 10^{-3}$ to $8.4 \times 10^{-3}$ when the relative symbol length decreases from $6$ to $2$ with $\alpha = 0.5$.
This is because the more asynchronization between the Tx and the Tag the lower probability that the detection of $X_1$ suffers from strong interference of $X_2$, i.e., the scenario that $X_2 = 1$.

\begin{figure*}[t]
	\renewcommand{\captionfont}{\small} \renewcommand{\captionlabelfont}{\small}
	\hspace{-0.1cm}	
	\minipage{0.5\textwidth}
	\centering
	\includegraphics[scale=0.6]{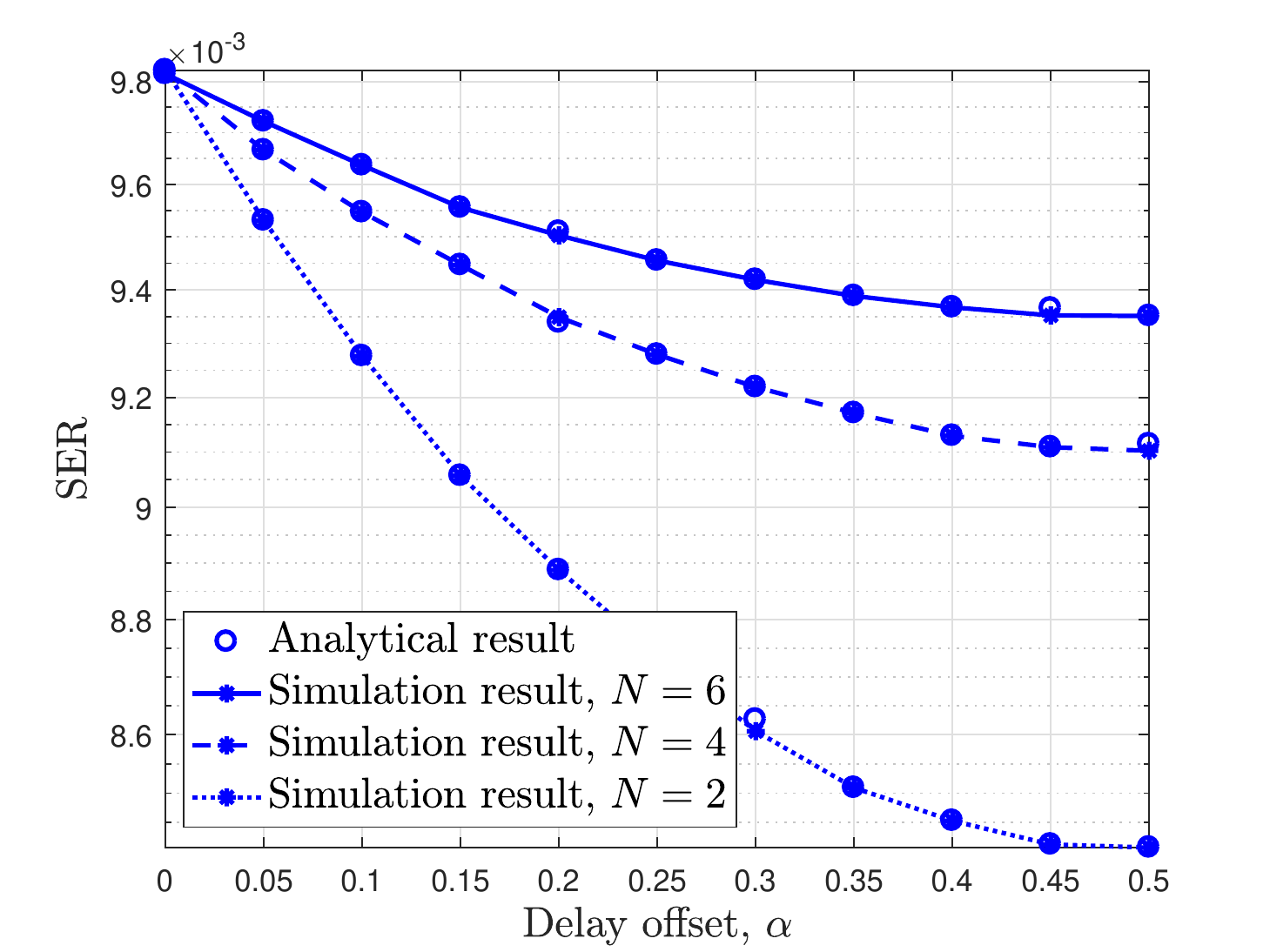}
	\vspace{-0.9cm}
\caption{The SER of $X_1$ versus the delay offset ratio $\alpha$, $\snr=10$~dB, $\vert g \vert^2 \rho = 0.1$.}
	\vspace{-0.5cm}
\label{fig:Asyn_X1}
	\endminipage
	\hspace{0.1cm}
	\minipage{0.5\textwidth}
	\centering
	\includegraphics[scale=0.6]{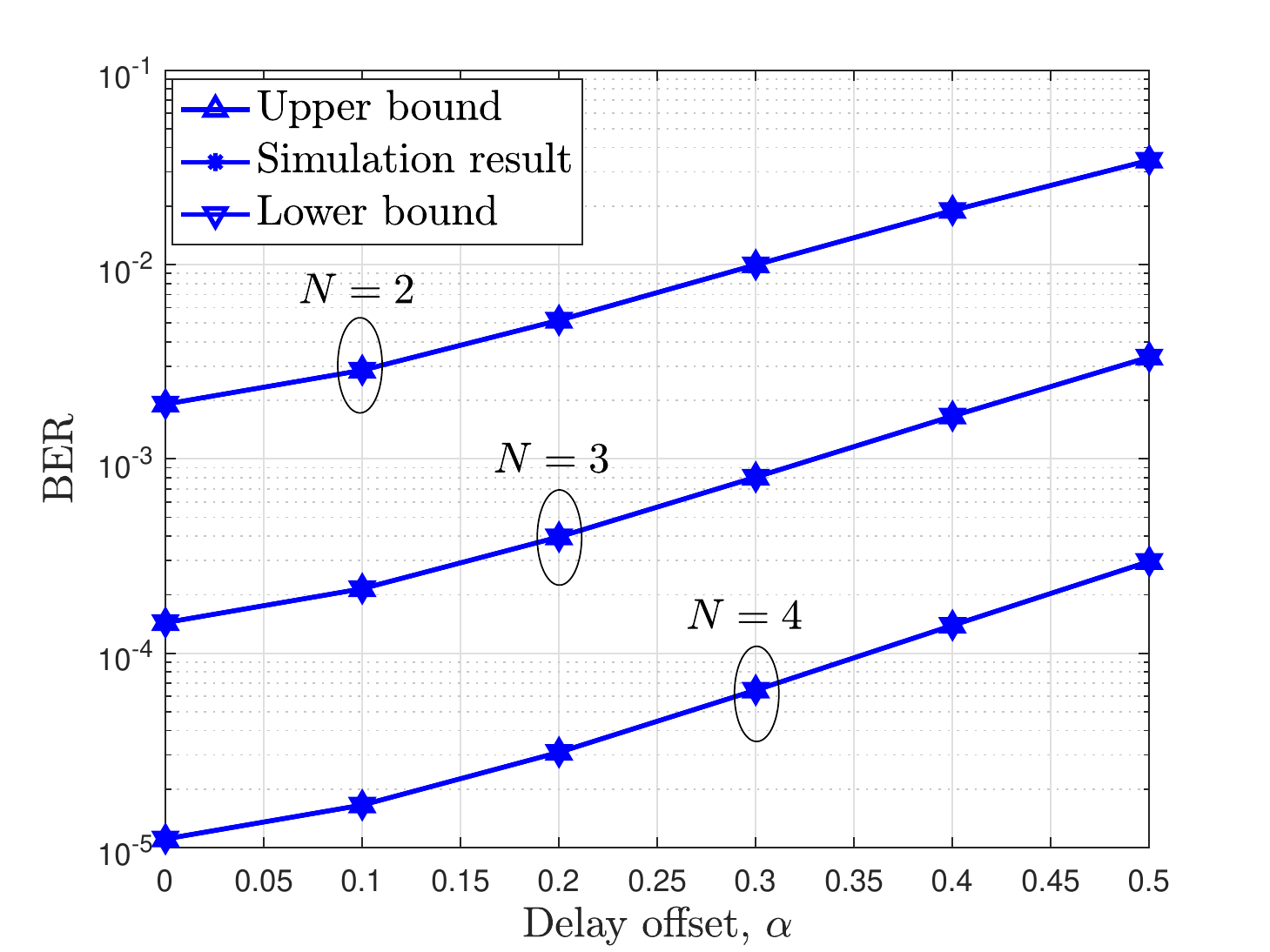}
		\vspace{-0.9cm}
\caption{The BER of $X_2$ versus the delay offset ratio $\alpha$, $\snr=20$~dB, $\vert g \vert^2 \rho = 0.1$.}
\label{fig:Asyn_X2}
	\vspace{-0.5cm}
	\endminipage
	\vspace*{-0.2cm}
\end{figure*}

In Fig.~\ref{fig:Asyn_X2}, 
the simulation result of the BER of $X_2$ is plotted for different delay offset ratio, $\alpha$, and symbol-length ratio, $N$. 
The upper and lower bounds of the SER are plotted using Proposition~\ref{prop:ber_bound_asyn}.
We see that both the upper and lower bounds are tight when $\snr = 20$~dB.
Moreover, we see that the BER performance is worse with a larger $\alpha$ or a smaller $N$.
This is because the more asynchronization between the Tx and the Tag the larger inter-symbol interference that the detection of $X_2$ suffers from.
Also, it is interesting for us to compare the performance between the two strategies, i.e., to use the entire signal, e.g., $N=3$ and $\alpha >0$, or simply use the part of the signal without inter-symbol interference, e.g., $N=2$ and $\alpha = 0$.
We see that the strategy that uses the entire signal for detection is better than the one that simply discards the part of the signal with inter-symbol interference, when the asynchronization between $X_1$ and $X_2$ is small or in a moderate range, i.e., $\alpha \in (0,0.4)$. However, the later is better than the former when the asynchronization is severe, i.e., $\alpha \in (0.4, 0.5)$.

\vspace{-0.2cm}
\section{Conclusions}
In this paper, we analyzed an ambient backscatter multiple-access system, which introduces the M-MAC.  
We proved that the achievable rate region of the M-MAC is strictly larger than that achieved by the conventional TDMA scheme in many cases, such as the high-SNR case and the typical case that the direct channel is much stronger than the backscatter channel.
The numerical results also showed that the proposed multiple-access scheme improves the rate performance of the system in the range of $0-30$~dB of the direct-link SNR.
Moreover, we comprehensively analyzed the detection error rates of a practical multiplicative multiple-access system in which the Tx and the Tag adopt practical coherent and noncoherent modulation schemes, respectively, in both the synchronous and asynchronous scenarios. Our results showed that the asynchronous transmission reduces the detection error rates of the Tx but increases that of the Tag.
For future work, we will consider the multiple-antenna Tx, Tag and Rx in the multiplicative multiple-access system.

\vspace{-0.3cm}
\section*{Appendix}
\subsection{Proof of Proposition~\ref{prop:time-sharing2}}
Based on the achievable rate analysis in Sec.~\ref{sec:R2}, by letting each element of $\bm X_1$ to be $1$, i.e., $\bm X_1$ is a deterministic/unmodulated signal, ${R}^{\max}_2$ is obtained by \eqref{IX2tildeY}, and 	$\tilde{\sigma}^2$ is obtained by taking $\vert \bm X_1\vert^2 = N$ into its definition below \eqref{binary AWGN channel}.	
Furthermore,	
${R}^{\max}_2 < 1$ is due to the Tag's binary modulation.
The other upper bound, $\log_2\left(1 + {N \vert g \vert^2\rho \myP}/{\sigma^2}\right)$, is due to the property that the capacity-achieving signal of the AWGN channel follows Gaussian distribution. This upper bound can only be mathematically achieved when the Tag is able to generate the Gausssian signal with zero mean and variance~$1$.

\subsection{Proof of Proposition~\ref{prop:high snr}}
Letting $\snr \rightarrow \infty$, we see that $h_1,\ h_0 \gg1$, $\underline{H},\ \overline{H} \rightarrow 1$, and hence $\overline{h}\gg 1$ and
\begin{equation}
\small
\overline{h} - \underline{h} =\! \frac{1}{2} \left\lvert h_1 -h_0 \right\rvert
\!=\!N \left\lvert \log_2\left(\frac{1 \!+\! \vert 1 +g \sqrt{\rho} c_1 \vert^2\snr}{1 \!+\! \vert 1 +g \sqrt{\rho} c_0 \vert^2\snr}\right)
\right\rvert
\!<\! \infty. 
\end{equation}
Thus, it is straightforward that $r_1 \rightarrow \infty$ and $r_2 \rightarrow 1 + \left(\overline{h} - \underline{h} \right)< \infty$ when $\snr \rightarrow \infty$, and~hence~$r_1>~r_2$, yielding the proof. 

\subsection{Proof of Proposition~\ref{prop:typical}} \label{appen:prop:typical}	
We have the following equation when $\vert g \vert^2$ is very small:
\begin{equation} \label{approx1}
\small
\begin{aligned}
h_i
&=N \log_2\left(1+ \left\lvert 1 +  g \sqrt{\rho} c_i\right\rvert^2 \snr\right)\\
&= N \left( \log_2\left(1 + \snr\right) 
+  L'\left(1+\SNR\right) \left(\vert g \vert^2 \rho \vert c_i \vert^2  + 2 \textrm{Real}\left[g \sqrt{\rho}c_i\right]\right) \snr 
+ o\left( \vert g \vert^2\right)\right),
\end{aligned}
\end{equation}
where $L'( x )$ is the first-order derivative function of the function $\log_2 (\cdot)$ at the point $x$. 
Based on \eqref{time sharing upperbound}, $\overline{H}$ is approximated as 
\begin{equation} \label{approx2}
\small
\overline{H} \approx \frac{1}{\ln 2} N \vert g \vert^2\rho \snr
\end{equation}
Based on \eqref{tse_equation} and \eqref{defi:mu}, we further have
\begin{equation} \label{approx3}
\small
\underline{H}  
\approx \frac{1}{2 \ln 2} \left(2 M(N)\mu\right)^2
\approx \frac{\vert g \vert^2 \rho}{2 \ln 2} \left(M(N) \vert c_1-c_{0}\vert \right)^2 \snr,
\end{equation}
where the first approximation in \eqref{approx3} is due to the fact that the term $\myexpect{\bm X_1}{\Perr} \approx 1/2$ in \eqref{tse_equation} when $\vert g \vert^2  \ll 1$, and the expansions as follows: 
\begin{equation}
\small
1- \Hb(p) = \frac{1}{2\ln 2} \sum\limits_{n=1}^{\infty} \frac{(1-2p)^{2n}}{n(2n-1)},\ 
M(N) \triangleq \frac{1}{2^N} \sum_{i=0}^{N-1}{N+i-1 \choose i} \frac{(N-i)}{2^i}.
\end{equation}
The second approximation in \eqref{approx3} is based on the approximation of \eqref{defi:mu} when $\vert g \vert^2  \ll 1$.

Therefore, taking \eqref{approx1}, \eqref{approx2} and \eqref{approx3} into \eqref{r1}, we have the following approximations when $\vert g \vert^2$ is very small:
\begin{equation}
\small
r_1  \approx \frac{\log_2\left(1 + \snr\right)}{\frac{1}{\ln 2} \vert g \vert^2 \rho \snr},\ 
r_2  \approx 
\frac{ L'(1 + \snr)  \left\lvert
	\vert g \vert^2 {\rho} \left(\vert c_1\vert^2 - \vert c_0\vert^2\right)
	+
	\vert g \vert \sqrt{\rho} \textrm{Real}\left[\left(c_1-c_{0}\right)e^{j\theta}\right]\right\rvert  }{\frac{1}{N} \frac{\vert g \vert^2 {\rho}}{2 \ln 2} \left(M(N) \vert c_1-c_{0}\vert \right)^2}.
\end{equation}
Therefore, $r_1, r_2 \gg 1$ and $r_1 \gg r_2$, when $\vert g \vert^2 \ll 1$.

\subsection{Proof of Proposition \ref{prop:BPSK-low-snr}}	\label{appen:prop:BPSK-low-snr}
When $\snr \rightarrow 0$, we have the approximation 
\begin{equation}
\label{h1-approx}
\small
h_i \approx \frac{N}{\ln 2} \left\vert 1+ g\sqrt{\rho} c_i\right \vert^2 \snr,\ i=0,1,
\end{equation}
and the approximations of $\overline{H}$ and $\underline{H}$ are given in \eqref{approx2} and \eqref{approx3}, respectively.

After taking \eqref{h1-approx}, \eqref{approx2} and \eqref{approx3} into \eqref{r1} and replacing $\left\lbrace c_1,c_0 \right\rbrace$ with $\left\lbrace 1,-1 \right\rbrace$ , we further have
\begin{equation} \label{r1r2}
\small
\begin{aligned}
&r_1-r_2 \approx  \max \left\lbrace \left\vert 1+ g\sqrt{\rho} \right \vert^2 , \left\vert 1- g\sqrt{\rho} \right \vert^2\right\rbrace
- \frac{N \ln 2 }{(2 M(N))^2} \left\vert \left\vert 1 + g\sqrt{\rho} \right \vert^2 -\left\vert 1 - g\sqrt{\rho} \right \vert^2 \right\vert - \vert g \vert^2 \rho.
\end{aligned}
\end{equation}	
Since the function of $N$, $\frac{N}{\left(M(N)\right)^2}$, decreases with $N$, $r_1-r_2$ increases with $N$.
In the following, we only need to prove $r_1-r_2>0$ when $N=1$.
Taking $N=1$ into \eqref{r1r2}, we have
\begin{equation} \label{mypi}
\small
r_1-r_2 
 \approx \left(1-\ln 2\right) \left\vert 1+ g\sqrt{\rho} \right \vert^2
+ \ln 2 \left\vert 1 - g\sqrt{\rho}\right \vert^2 - \vert g \vert^2 \rho.	
\end{equation}
By rewriting $g \sqrt{\rho}$ as $a + jb$ in \eqref{mypi}, we have
$r_1 -r_2 \approx a^2 + 2(1-2 \ln 2)a +1 + b^2 > 0$,
where the inequality can be easily verified from the fact that $a^2 + 2(1-2 \ln 2)a +1 > 0.85$.

\subsection{Proof of Proposition~\ref{prop:bpsk}}
	In this scenario, since $\vert 1 +g \sqrt{\rho} c_1 \vert =\vert 1 +g \sqrt{\rho} c_0 \vert$, we have $\overline{h} = \underline{h} =h_1=h_0$, and $r_2 =1$. Thus, the proof of the strict convexity is equivalent to the proof of $r_1 > 1$, i.e., $\overline{H}< h_1$, where 
$
h_1= N \log_2\left(1 + \frac{\left(1 + \vert g\vert^2 \rho \right)\myP}{\sigma^2}\right).
$
From \eqref{time sharing upperbound} and the property of the function $\log(\cdot)$, we have 
$
\overline{H} \leq \log_2\left(1 + \frac{N \vert g \vert^2 \rho \myP}{\sigma^2}\right) < h_1,\ \forall N\geq 1,
$
which completes the proof.

\subsection{Proof of Corollary~\ref{prop:nice}} \label{appen:prop:nice}
The asymptotic expression of the modified Bessel function $I_0(x)$ when $x \gg 1$ can be written as
$
I_0(x) = \frac{e^x}{\sqrt{2 \pi x}}
$~\cite{Bessel}.

Letting $f_{\chi^2}\left(x,2\right)=f_{\text{nc-}\chi^2}\left(x,2,\lambda\right)$ when $\lambda \gg 1$, we have 
$
\small
\frac{1}{2}e^{-\frac{x}{2}}
= \frac{1}{2} e^{-(x+\lambda)/2}
\frac{e^{\sqrt{\lambda x}}}{\sqrt{2 \pi {\sqrt{\lambda x}}}}.
$
After simplification, the equation above can be written as
$
\exp\left(-2\sqrt{\lambda x} \right) \left( -2\sqrt{\lambda x}  \right) = -\frac{1}{\pi e^{\lambda}}.
$
Based on the definition of Lambert W function~\cite{LambertW}, $W(\cdot)$, we further have
\begin{equation} \label{equation}
\small
-2\sqrt{\lambda x} = W\left(-\frac{1}{\pi e^{\lambda}}\right)
\end{equation}
The Lambert W functions has two branches, i.e., $W_0(x)$ and $W_{-1}(x)$, and $W_0(x) \rightarrow 0^-$ and $W_{-1}(x) \approx \ln(-x) \rightarrow -\infty$ when $x \rightarrow 0^-$~\cite{LambertW}, respectively.
For the upper branch, it is clear that the root of \eqref{equation} is zero, which, however, is not the root that increases with $\lambda$ as we want.
For the lower branch, we further simplify \eqref{equation} as 
$
-2\sqrt{\lambda x} = \ln\left(\frac{1}{\pi e^{\lambda}}\right),
$
thus, we have
$
2 \sqrt{\lambda x} = \lambda + \ln \pi.
$
Therefore, the root of \eqref{equation} is $\lambda/4$ when $\lambda \rightarrow \infty$.

\subsection{Proof of Corollary~\ref{prop:asynBER}} \label{appen:prop:asynBER}
When $\snr \rightarrow \infty$, from \eqref{prop:ber lb} and Propositions~\ref{prop:ber_bounds} and~\ref{prop:nice}, we can obtain 
$\PUBber = \PLBber + \frac{N}{2} \PUB$.
In the following, we only need to prove that $\PUB \ll \PLBber$ when $\snr \rightarrow \infty$.
Based on \eqref{prop:ber lb} and Proposition~\ref{prop:nice}, in the high SNR scenario, we have
$
\small
\PLBber = \frac{1}{2} \left(\exp\left(-\frac{\lambda}{8}\right) +
\left(1-Q_1\left(\sqrt{\lambda}, \frac{\sqrt{\lambda}}{2}\right)\right)
\right).
$
Since $Q(x) \triangleq \frac{1}{2} \text{erfc}\left(\frac{x}{\sqrt{2}}\right)$, adopting the useful   asymptotic expansion of $\text{erfc}(\cdot)$, i.e.,
\begin{equation}
\small
\text{erfc}\left(x\right) = \frac{e^{-x^2}}{x \sqrt{\pi}}\sum_{k=0}^{K-1} (-1)^k\frac{(2k-1)!!}{(2x^2)^k}
+ O\left(x^{1-2K}e^{-x^2}\right),
\end{equation} 
in the high SNR scenario, we have
$
\small
\PUB\!=\!Q\left(\!\sqrt{2 \snr }\left(\!\frac{1}{\sqrt{2}}-\vert g \vert \sqrt{\rho} \!\right)\!\right)
\!=\! \frac{1}{2} \frac{\exp\left(- \left(\sqrt{\snr }\left(\frac{1}{\sqrt{2}}-\vert g \vert \sqrt{\rho} \right)\right)^2\right)}{\sqrt{\pi} \left(\sqrt{\snr }\left(\frac{1}{\sqrt{2}}-\vert g \vert \sqrt{\rho} \right)\right)}.
$
Then, based on the assumption that $\vert g \vert \sqrt{\rho}\ll1$, it is easy to prove that 
\begin{equation}
\small
\PUB= Q\left(\sqrt{2 \snr }\left(\frac{1}{\sqrt{2}}-\vert g \vert \sqrt{\rho} \right)\right) \ll \frac{1}{2} \exp\left(-\frac{\lambda}{8}\right) < \PLBber.
\end{equation}

\ifCLASSOPTIONcaptionsoff
\fi
%

\end{document}